\title{Domain-Aware Session Types  (Extended Version)} 
\titlerunning{Domain-Aware Session Types}
\keywords{Session Types, Linear Logic, Process Calculi, Hybrid Logic}
\author{Lu\'{i}s Caires}
           {Universidade Nova de Lisboa}
           {}{0000-0002-3215-6734}{}
\author{Jorge A. P\'{e}rez}
   {University of Groningen}
   {}{https://orcid.org/0000-0002-1452-6180}{}
   \author{Frank Pfenning}
           {Carnegie Mellon University}
           {}{}{}
\author{Bernardo Toninho}
           {Universidade Nova de  
           Lisboa}
           {}{https://orcid.org/0000-0002-0746-7514}{}
\authorrunning{Caires et al.}
\newcommand{\defref}[1]{Def.~\ref{#1}\xspace}
\newcommand{\secref}[1]{\S\,\ref{#1}\xspace}
\newcommand{\propref}[1]{Prop.~\ref{#1}\xspace}
\newcommand{\figref}[1]{Fig.~\ref{#1}\xspace}
\newcommand{\thmref}[1]{Thm.~\ref{#1}\xspace}
\newcommand{\lend}{\ensuremath{\mathsf{end}}\xspace}
\newcommand{\gend}{\ensuremath{\mathtt{end}}\xspace}
\newcommand{\gto}[2]{\ensuremath{\mathtt{#1}\!\twoheadrightarrow\!\mathtt{#2}{:}}}
\newcommand{\gmoves}[5]{\ensuremath{#1\,\mathsf{moves}\,#2\,\mathsf{to}\,#3\,\mathsf{for}\,#4 \,;\,#5}}
\newcommand{\lb}[1]{\ensuremath{\mathit{#1}}}
\newcommand{\pt}[1]{\ensuremath{\mathtt{#1}}}
\newcommand{\ptset}[1]{\ensuremath{\widetilde{#1}}}
\newcommand{\partp}[1]{\ensuremath{\mathsf{part}(#1)}}
\newcommand{\npart}[1]{\ensuremath{\mathsf{npart}(#1)}}
\newcommand{\proj}[2]{\ensuremath{#1\!\!\upharpoonright\!{#2}}}
\newcommand{\fuse}{\sqcup}
\newcommand{\myfuse}{\circ}
\newcommand{\ltfuse}{\myfuse}
\newcommand{\prfuse}{\myfuse}
\newcommand{\encp}[1]{\llbracket #1 \rrbracket}
\newcommand{\ether}[1]{\ensuremath{\mathsf{M}\encp{#1}}} 
\newcommand{\etherp}[3]{\ensuremath{\mathsf{M}^{#3}\encp{#1}(#2)}}
\newcommand{\mycasebig}[3]{\ensuremath{#1\triangleright\!\big\{#2\big\}_{#3}}}
\newcommand{\myoplus}[2]{\ensuremath{\oplus\!\{#1\}_{#2}}}
\newcommand{\mywith}[2]{\ensuremath{\with\!\{#1\}_{#2}}}
\newcommand{\mysel}[2]{\ensuremath{#1\triangleleft\!#2}}
\newcommand{\lt}[1]{\ensuremath{\langle\!\langle #1\rangle\!\rangle}}
\newcommand{\fuses}{\sqcup}
\newcommand{\subt}{\preceq^{\fuses}_{\here{}}}
\newcommand{\dom}[1]{\ensuremath{dom(#1)}}
\definecolor{jpblue}{RGB}{0,0,153}
\definecolor{lgray}{gray}{0.95}
\newcommand{\redd}[1]{\textcolor{jpblue}{#1}}
\newcommand{\bluee}[1]{{#1}}
\newcommand{\btnote}[1]{\textcolor{black}{#1}}
\newcommand{\jpnote}[1]{\textcolor{black}{#1}}
\newcommand{\seq}{\mathcal{S}}
\newcommand{\rel}{\mathcal{R}}
\newcommand{\relind}[4]{\ensuremath{#1 \vdash #3 \, \rel \,  #4\,{::}\,#2}}
\newcommand{\Names}{\Lambda}
\newcommand{\para}{\mathord{\;\boldsymbol{|}\;}}
\newcommand{\linkr}[2]{[#1\!\leftrightarrow\!#2]}	
\newcommand{\send}[2]{\mathord{#1\!\left<#2\right>}}
\newcommand{\sep}{\;|\;}
\newcommand{\bn}[1]{\mbox{\it bn}(#1)}
\newcommand{\lol}[0]{\lolli} 
\newcommand{\DILL}{\ensuremath{\mathtt{DILL}}\xspace}
\newcommand{\textmigrate}{move}
\newcommand{\lmigrate}{\ensuremath{y@\omega}}
\newcommand{\here}[1]{\ensuremath{\downarrow\! #1}}
\newcommand{\wtag}{\ensuremath{w}}
\newcommand{\labelset}{\ensuremath{\lambda}}
\newcommand{\cbox}[1]{\vspace{1ex}\centerline{#1}\vspace{1ex}} 
\newcommand{\til}[1]{\widetilde{#1}}
\newcommand{\aand}{\wedge}
\newcommand{\nub}{\boldsymbol{\nu}}
\newcommand{\cpy}{\mathsf{copy}}
\newcommand{\lolli}{\multimap}
\newcommand{\tensor}{\otimes}
\newcommand{\with}{\mathbin{\&}}
\newcommand{\one}{\mathbf{1}}
\newcommand{\ov}[1]{\overline{#1}}
\newcommand{\zero}{\mathbf{0}}
\newcommand{\bang}{\mathbf{!}}
\newcommand{\D}{\Delta}
\newcommand{\G}{\Gamma}
\newcommand{\Om}{\Omega}
\newcommand{\inl}{\mathsf{inl}}
\newcommand{\inr}{\mathsf{inr}}
\def\subst#1#2{\{\raisebox{.5ex}{\small$#1$}\! / \mbox{\small$#2$}\}}
\newcommand{\fn}[1]{\mbox{\it fn}(#1)}
\newcommand{\red}{\rightarrow} 
\newcommand{\wlabel}{\mathcal{L}}
\newcommand{\outp}[2]{\ov{#1}\!\left<#2\right>\!}
\newcommand{\out}[1]{\langle #1\rangle}
\newcommand{\cut}{\mathsf{cut}}
\newcommand{\lft}[1]{{{#1}\mathsf{L}}}
\newcommand{\rgt}[1]{{{#1}\mathsf{R}}}
\newcommand{\name}[1]{\mbox{(#1)}}
\newcommand{\inp}[2]{#1(#2).}
\newcommand{\m}[1]{\mathsf{#1}}
\newcommand{\cutbang}{{\mathsf{cut}^\bang}}
\newcommand{\ext}{\leadsto}
\newcommand{\tr}[3]{#1\stackrel{#2}{\rightarrow} #3}
\newcommand{\candit}{\!\Leftrightarrow\!}
\newcommand{\tra}[1]{\xrightarrow{#1}}
\newcommand{\wtra}[1]{\stackrel{#1}{\Longrightarrow}}
\newcommand{\logsim}[1]{\,{\approx_{#1}}\,}
\newcommand{\logeq}[5]{#1 \logsim{#5} #2 :: #3 \big[#4\big]}
\newcommand{\barb}[2]{#1\!\uparrow^{\,#2}}
\newcommand{\wbarb}[2]{#1\!\Uparrow^{\,#2}}
\newcommand{\Dia}{\Diamond}
\begin{document}

\maketitle





\begin{abstract}
  
  We develop a generalization of existing Curry-Howard interpretations
  of (binary) session types by relying on an extension of linear logic
  with features from \emph{hybrid logic}, in particular modal worlds
  that indicate \emph{domains}.  These worlds govern \emph{domain migration},
  subject to a parametric accessibility relation familiar from the
  Kripke semantics of modal logic.  The result is an expressive new
  typed process framework for domain-aware, message-passing
  concurrency. Its logical foundations ensure that well-typed
  processes enjoy session fidelity, global progress, and
  termination. Typing also ensures that processes only communicate
  with accessible domains and so respect the accessibility relation.

  Remarkably, our domain-aware framework can specify scenarios in
  which domain information is available only at runtime; flexible
  accessibility relations can be cleanly defined and statically
  enforced.  As a specific application, we introduce domain-aware {\em
    multiparty session types}, in which global protocols can express
  arbitrarily nested sub-protocols via domain migration.  We develop a
  precise analysis of these multiparty protocols by reduction to our
  binary domain-aware framework: complex domain-aware protocols can be
  reasoned about at the right level of abstraction, ensuring also the
  principled transfer of key correctness properties from the binary to
  the multiparty setting.

\end{abstract}



\section{Introduction}\label{sect:intro}
The goal of this paper is to show how existing Curry-Howard
interpretations of session
types~\cite{DBLP:conf/concur/CairesP10,DBLP:journals/mscs/CairesPT16}
can be generalized to a \emph{domain-aware} setting by relying on an
extension of linear logic with features from \emph{hybrid logic}
\cite{DBLP:journals/entcs/Reed07,braunerdepaiva06}.  
These extended logical foundations of message-passing
concurrency allow us to analyze complex domain-aware concurrent
systems (including those governed by multiparty protocols) in a
precise and principled manner.

Software systems typically rely on \emph{communication} between
heterogeneous services; at their heart, these systems rely on
message-passing protocols that combine mobility, concurrency, and
distribution.  As distributed services are often virtualized,
protocols should span diverse software and hardware \emph{domains}.
These domains can have multiple interpretations, such as the location
where services reside, or the principals on whose behalf they act.
Concurrent behavior is then increasingly \emph{domain-aware}: a
partner's potential for interaction is influenced not only by the
domains it is involved in at various protocol phases (its context),
but also by \emph{connectedness} relations among domains.  Moreover,
domain architectures are rarely fully specified: to aid modularity and
platform independence, system participants (e.g., developers, platform
vendors, service clients) often have only partial views of actual
domain structures.  Despite their importance in communication
correctness and trustworthiness at large, the formal status of domains
within \btnote{\em typed} models of message-passing systems remains unexplored.

This paper contributes to typed approaches to the analysis of
domain-aware communications, with a focus on
\emph{session-based concurrency}.  
This approach specifies the intended message-passing protocols 
 as \emph{session
  types}~\cite{DBLP:conf/concur/Honda93,DBLP:conf/esop/HondaVK98,DBLP:conf/wsfm/Dezani-Ciancaglinid09}. 
Different  type theories for \emph{binary} and \emph{multiparty} ($n$-ary) protocols have been developed.
In both cases,  typed specifications 
can be conveniently coupled with 
$\pi$-calculus processes~\cite{MilnerPW92a}, in which so-called session channels connect exactly two subsystems.
Communication correctness usually results from 
two properties: \emph{session fidelity} (type preservation) and
\emph{deadlock freedom} (progress).  The former says that
well-typed processes always evolve to well-typed processes (a safety
property); the latter says that well-typed processes will never get
into a stuck state (a liveness property).  

A key motivation for this paper is the sharp contrast between (a)~the
growing relevance of domain-awareness in message-passing, concurrent
systems and (b)~the expressiveness of existing session type
frameworks, binary and multiparty, which cannot adequately specify
(let alone enforce) domain-related requirements.  Indeed, existing
session types frameworks, including those based on Curry-Howard
interpretations
\cite{DBLP:conf/concur/CairesP10,DBLP:journals/jfp/Wadler14,DBLP:conf/concur/CarboneLMSW16},
capture communication behavior at a level of abstraction in which even
basic domain-aware assertions (e.g., ``\emph{Shipper} resides in
domain \texttt{AmazonUS}'') cannot be expressed.  As an unfortunate
consequence, the effectiveness of the analysis techniques derived from
these frameworks is rather limited.

To better illustrate our point, consider a common distributed design pattern: 
a middleware agent ($\pt{mw}$) which answers requests
from clients ($\pt{cl}$), sometimes offloading the requests to a
server ($\pt{serv}$) to better manage local resource availability. In the
framework of multiparty session types \cite{DBLP:conf/popl/HondaYC08}
\jpnote{this protocol can be represented as the global type}:

\vspace{-0.2cm}
{\small\begin{tabbing}
   $\gto{cl}{mw}\{$$\lb{request}\langle req\rangle.$
   \=$\gto{mw}{cl}\{$ $\lb{reply}\langle ans
   \rangle.$ \=$\gto{mw}{serv}\{\lb{done}. \gend\} ~\mathbf{,}~$
   $\lb{wait}. \gto{mw}{serv}\{\lb{req}\langle data
   \rangle.$\\
   \>\>$\gto{serv}{mw}\{\lb{reply}\langle ans \rangle. {\gto{mw}{cl}\{\lb{reply}\langle ans \rangle.\gend\}} \}  \}\}\} $
\end{tabbing}}
\vspace{-0.2cm}

\noindent The client first sends a request to the middleware, which
answers back with either a $\lb{reply}$ message containing the answer
or a $\lb{wait}$ message, signaling that the server will be contacted
to produce the final $\lb{reply}$. 
While this multiparty protocol captures the intended communication behavior, it does not capture that protocols for the middleware and the server often involve some form of privilege escalation
or specific 
authentication---ensuring, e.g., that
the server interaction is adequately isolated from the
client, or that the escalation must precede the server interactions.
These requirements simply cannot be represented in existing
frameworks. 

Our work addresses this crucial limitation by generalizing
Curry-Howard interpretations of session types by appealing to hybrid
logic features. We develop a logically motivated typed
process framework in which \emph{worlds} from modal logics  
precisely and uniformly define the notion of \btnote{\em domain} in
session-based concurrency. At the level of {\em binary} sessions, domains
manifest themselves through point-to-point domain migration and
communication. In {\em multiparty} sessions, domain
migration is specified choreographically through the new construct
$\gmoves{\pt{p}}{\tilde{\pt{q}}}{\omega}{G_1}{G_2}$, where
participant $\pt{p}$ leads a migration of participants
$\tilde{\pt{q}}$ to domain $\omega$ in order to perform protocol $G_1$, who then
migrate back to perform protocol $G_2$.

Consider the global type
$\mathit{Offload}\triangleq
  \gto{mw}{serv}\{\lb{req}\langle data
  \rangle. \gto{serv}{mw}\{\lb{reply}\langle ans \rangle.\gend\}  \}$
   in our previous example.
Our framework allows us to refactor the global type above as:
 
  \vspace{-0.2cm}
 {\small\begin{tabbing}
   $\gto{cl}{mw}\{\lb{request}\langle req\rangle.$
   $\gto{mw}{cl}\{$ \=$\lb{reply}\langle ans
   \rangle.\gto{mw}{serv}\{\lb{done}. \gend\}\, ~\mathbf{,}~ $
   $\lb{wait}. \gto{mw}{serv}\{\lb{init}.$\\
   \>$\gmoves{\pt{mw}}{\pt{serv}
   }{w_\pt{priv}}{\mathit{Offload} }$
   ${\gto{mw}{cl}\{\lb{reply}\langle ans \rangle.\gend\}} \}  \}\}$
 \end{tabbing}}
By considering a first-class multiparty domain migration primitive at
the type and process levels, we
can specify that the {\em offload} portion of the protocol takes place
after the middleware and the server {\em migrate} to a private
domain $w_\pt{priv}$, as well as ensuring that only
accessible domains can be interacted with.
For instance, the type for
the server that is mechanically {\em projected} from the protocol
above ensures that the server first migrates to the private domain,
communicates with the middleware, and then migrates back to its
initial domain.

Perhaps surprisingly, our {domain-aware} \emph{multiparty} sessions are
studied within a context of logical \emph{binary} domain-aware
sessions, arising from a propositions-as-types interpretation of
hybrid linear logic
\cite{DBLP:journals/entcs/DespeyrouxOP17,chaudhuri:hal-01968154},
with strong {static} correctness guarantees derived from the
logical nature of the system. Multiparty domain-awareness arises
through an interpretation of 
multiparty protocols as \emph{medium processes}~\cite{DBLP:conf/forte/CairesP16}
that orchestrate the
multiparty interaction 
while enforcing the necessary domain-level constraints and migration steps.


\vspace{-0.3cm}
\subparagraph{Contributions}
The key contributions of this work are:
\begin{enumerate}
\item  A process model with explicit domain-based migration (\secref{sec:proc-model}).
We present a session $\pi$-calculus with domains
that can be communicated via novel domain movement prefixes.

\item  
  A session type discipline for domain-aware
  interacting processes~(\secref{sec:hill}).  Building upon an extension
    of linear logic with features from \emph{hybrid
    logic}~\cite{DBLP:journals/entcs/DespeyrouxOP17,chaudhuri:hal-01968154} we generalize the Curry-Howard
  interpretation of session types~\cite{DBLP:conf/concur/CairesP10,DBLP:journals/mscs/CairesPT16} by
  interpreting \emph{(modal) worlds} as \emph{domains} where session
  behavior resides.  
  In our system, types can specify domain \emph{migration} and
  \emph{communication}; domain mobility is governed
  by a parametric accessibility relation.
   Judgments stipulate
  the services  used and realized by processes \emph{and} the domains where
  sessions should be present.
  Our type discipline statically enforces session fidelity,
  global progress and, notably, that communication can only happen
  between accessible domains.

\item 
\jpnote{As a specific application, we introduce a framework}
of domain-aware multiparty sessions~(\secref{sec:hmpst}) \jpnote{that uniformly extends} the standard multiparty session
  framework of \cite{DBLP:conf/popl/HondaYC08} with domain-aware
  migration and communication primitives.
  Our development leverages our
  logically motivated domain-aware {\em binary} sessions (\secref{sec:hill}) to give a
  precise semantics to multiparty sessions through a   (typed)
  \emph{medium process} that acts as an orchestrator of domain-aware
  multiparty interactions, \jpnote{lifting} the strong correctness properties
  of typed processes \jpnote{to the} multiparty setting. 
  We show that 
  mediums soundly 
  and
  completely 
  encode the local behaviors of
  participants in a domain-aware multiparty session.

\end{enumerate}

\noindent
We conclude with a discussion of related work (\secref{s:relw}) and concluding remarks
(\secref{sect:concl}). 
 Appendix~\ref{app:app} lists omitted definitions
 and proofs. We point the interested reader to
 Appendices~\ref{sec:examples} and~\ref{app:examples} for extended
 examples on domain-aware multiparty and binary sessions, respectively.

\section{Process Model}\label{sec:proc-model}
We introduce a 
synchronous
$\pi$-calculus~\cite{sangiorgi-walker:book} with labeled choice and
explicit domain migration and communication.  
We write
$\omega,\omega',\omega''$ to stand for a concrete domain ($w, w', \ldots$) or a domain
variable ($\alpha, \alpha', \ldots$). 
Domains are handled at a high-level of abstraction, with
their identities being attached to session channels. Just as the
$\pi$-calculus allows for communication over names and name mobility,
our model also allows for domain communication and mobility. These
features are justified with the typing discipline of \secref{sec:hill}.

\begin{definition}
\label{d:procs}
Given infinite, disjoint sets $\Names$  of {\em names} $(x,y,z,u,v)$,
$\mathcal{L}$ of labels $l_1,l_2,\dots$,
$\mathcal{W}$ of \emph{domain tags} $(\wtag, \wtag', \wtag'')$ and $\mathcal{V}$ of
{\em domain variables} $(\alpha,\beta,\gamma)$, respectively,  
the set of {\em processes} $(P,Q,R)$ is  defined by
%
%

\vspace{-0.4cm}
{\small
\[ 
\begin{array}{rllllllllllllllllllllllllll}
P  & ::=  & 
 \zero  & \!\!\mid\!\! &   P \para Q   & \!\!\mid\!\! &  (\nub y)P &
     \!\!\mid\!\!  &x\out{y}.P &\!\!\mid\!\! & x(y).P     &  \!\!\mid\!\! & \bang x(y).P  &    &
     \\
    &\!\!\mid\!\! &  \linkr{x}{y} & \!\!\mid\!\! & \mycasebig{x}{\lb{l}_i : P_i}{i \in I}& \!\!\mid\!\! & \mysel{x}{\lb{l}_i};P \\
    & \!\!\mid\!\! &  x\langle y @ \omega\rangle.P
                               & \!\!\mid\!\! & x(y@\omega).P  &
                                                                 \!\!\mid\!\!  & x\langle \omega \rangle.P
  &\!\!\mid\!\! &  x(\alpha).P
\end{array}
\]}
\end{definition}

Domain-aware prefixes are present only in the last line.
As we make precise in the typed setting of \secref{sec:hill}, 
these constructs
realize mobility and domain communication, in the usual sense of the $\pi$-calculus:
migration to a domain is always associated to mobility with a fresh name.

The operators $\zero$ (inaction), $P \para Q$ (parallel composition) 
and $(\nub y)P$ (name restriction) are standard.
We then have $x\out{y}.P $ (send  $y$ on $x$ and
proceed as $P$), $\inp{x}{y}P $ (receive $z$ on $x$ and proceed
as $P$ with parameter $y$ replaced by $z$), and $ \bang
x(y).P$ which denotes replicated (persistent) input. 
The forwarding construct $\linkr{x}{y}$ equates 
$x$ and $y$; 
it is a primitive representation of a copycat process.
The last two constructs in the second line define a labeled choice
mechanism:
$\mycasebig{x}{\lb{l}_i : P_i}{i \in I}$ is a process that awaits
some label $l_j$ (with $j\in I$) and proceeds as
$P_j$. Dually, the process $\mysel{x}{\lb{l}_i};P$ emits a label
$\lb{l}_i$ and proceeds as $P$.

The first
two operators in the third line define explicit domain migration:
given a domain $\omega$, $x\langle y @ \omega \rangle.P$ denotes a
process that is prepared to migrate the communication actions in $P$
on endpoint $x$, to session $y$ on 
$\omega$.  Complementarily, process $x(y @ \omega).P$ signals an
endpoint $x$ to move to $\omega$, providing $P$ with the appropriate
session endpoint that is then bound to $y$. In a typed setting, domain
movement will be always associated with a fresh session channel.
Alternatively, this form of coordinated migration can be read as an
explicit form of agreement (or authentication) in trusted domains.
Finally, the last two operators in the third line define output and input of
domains, $x\langle \omega\rangle.P$ and $x(\alpha).P$, respectively.
These constructs allow for domain information to be obtained and propagated across
processes dynamically.

Following~\cite{DBLP:journals/tcs/Sangiorgi96a}, we abbreviate 
$(\nub y_{})x\out {y}$
and $(\nub y_{})x\out {y @ \omega}$
as $\outp{x}{y}$ and $\outp{x}{y@ \omega}$, respectively.
In 
$(\nub y_{})P$,
$x_{}(y_{}).P$, and $x_{}(y_{} @ \omega).P$ the distinguished
occurrence of name $y_{}$ is binding with scope  $P$. Similarly for
$\alpha$ in $x(\alpha).P$.
We identify processes
up to consistent renaming of bound names and variables, writing $\equiv_{\alpha}$
for this congruence.
$P\subst{x_{}}{y_{}}$ denotes 
the 
capture-avoiding
substitution of  $x_{}$ for  $y_{}$ in $P$.
%
While \emph{structural congruence}  $\equiv$  expresses
standard identities on the 
basic 
structure of
processes (cf. \cite{longversion}), \emph{reduction} 
expresses their behavior. 

%

  {\em Reduction} ($P\red Q$) is the binary relation defined by the 
 rules below \jpnote{and closed under structural congruence}; it specifies the computations that a process performs on its own.
  \[\begin{array}{ll}
\send{x}{y}.Q\para \inp{x}{z}{P} \red Q\para P\subst{y}{z_{}} &
\send{x}{y}.Q\para \bang \inp{x}{z_{}}{P} \red 
                                                           Q\para P\subst{y}{z_{}} \para \bang \inp{x}{z_{}}{P} \\
      x\langle y @ \omega\rangle.P \mid x(z@\omega').Q \red P \mid Q\{y/z\} &
       x\langle\omega\rangle.P \mid x(\alpha).Q \red P \mid Q\{\omega/\alpha\}\\                                                                       
(\nub x_{})(\linkr{x}{y} \para P) \red P\subst{y}{x} & 
Q\red Q'\Rightarrow P \para Q \red P \para Q'   \\
P\red Q\Rightarrow(\nub y)P\red(\nub y)Q  &
\mysel{x}{\lb{l}_j};P \para \mycasebig{x}{\lb{l}_i : Q_i}{i \in I}      \red P \para Q_j \quad (j \in I)
\end{array}
\]
\noindent For the sake of generality, 
reduction 
allows dual endpoints with the same name 
to interact,
independently 
of the domains of their subjects. 
The type system introduced next
will ensure, among other things, \emph{local reductions}, disallowing
synchronisations among distinct domains.





\section{Domain-aware Session Types via Hybrid Logic}\label{sec:hill}

This section develops a new domain-aware formulation of binary session
types. Our system is based on a Curry-Howard interpretation of a
linear variant of so-called \emph{hybrid logic}, and can be seen as an
extension of the interpretation of~\cite{DBLP:conf/concur/CairesP10,DBLP:journals/mscs/CairesPT16}
to hybrid (linear) logic.
Hybrid logic is often used as an umbrella term for a class of logics
that extend the expressiveness of propositional logic by considering
modal \emph{worlds} as syntactic objects that occur in propositions.

As in~\cite{DBLP:conf/concur/CairesP10,DBLP:journals/mscs/CairesPT16}, propositions are
interpreted as session types of communication channels, proofs as
typing derivations, and proof reduction as process
communication. As main novelties, here  we interpret:
\btnote{logical} worlds as \emph{domains}; 
the hybrid connective $@_\omega \,A$ as the type of
a session that {\em migrates} to an accessible domain $\omega$; 
and type-level quantification over worlds $\forall \alpha . A $ and $\exists
\alpha. A$ as {\em \btnote{domain} communication}. We also consider a type-level
operator $\here{\alpha.A}$ (read ``here'') which binds the {\em current} domain of the
session to $\alpha$ in $A$.
The syntax of domain-aware session types is given in
\defref{d:hillprops}, where $w, w_1, \ldots$ stand for \btnote{domains}
drawn from $\mathcal{W}$, and where
$\alpha, \beta $ and $\omega, \omega'$ are used
as in the syntax of processes.

\begin{definition}[Domain-aware Session Types] \label{d:hillprops}
The syntax of types $(A,B,C)$ 
is  defined by

\vspace{-0.4cm}
{\small\[
 \begin{array}{rllllllllllllllllllllllllll}
A  & ::= & \one &  \!\!   |\!\!  & A\lolli B &   \!\!  |\!\!  &  A \tensor B  &
       \!\!  |\!\!  &  \with\{\lb{l}_i :
                                               A_i\}_{i\in I}  & \!\! |\!\! &  \oplus\{\lb{l}_i :
                                               A_i\}_{i\in I} & \!\!|\!\! &
                                                                            \bang A  \\
   & |\!\!  & @_{\omega}\,A  
                & \!\!|\!\!  &  \forall \alpha . A &\!\!|\!\!  & \exists \alpha . A &
                  \!\!|\!\!  & \here{\alpha.A}                                                                    
\end{array}
\]}
\end{definition}

Types are the propositions of intuitionistic linear logic where the
additives $A\with B$ and $A\oplus B$ are generalized to a labelled
$n$-ary variant. Propositions take the standard interpretation as
session types, extended with hybrid logic operators
\cite{braunerdepaiva06}, with \btnote{worlds interpreted as domains} that are \btnote{explicitly}
subject to an {\em accessibility relation} \btnote{(in the style of
  \cite{DBLP:phd/ethos/Simpson94a})} that is tracked by environment
$\Omega$. Intuitively, $\Om$ \btnote{is made up of} direct
accessibility hypotheses of the form $\omega_1 \prec \omega_2$, meaning
that domain $\omega_2$ is accessible from $\omega_1$.

Types are assigned to channel names;
a \emph{type assignment} $x{:}A[\omega]$ 
enforces the use of name $x$ according to session $A$, \emph{in the domain $\omega$}. 
A \emph{type environment} is a collection of type assignments.
Besides the 
accessibility environment $\Om$ just mentioned, our typing judgments consider 
two kinds of type environments: a
\emph{linear} part $\Delta$ and an \emph{unrestricted} part $\Gamma$.
They are subject to different structural properties:
weakening and contraction principles hold for $\Gamma$ but not 
for $\Delta$. Empty environments are written as `$\,\cdot\,$'.
We then consider two judgments:

\vspace{-0.6cm}
{\[
\begin{array}{lll}
\text{(i)}~~ \Omega \vdash \omega_1 \prec \omega_2 & \quad\mbox{ and } \quad&
                                                                    \text{(ii)}~~ \Omega ; \G ; \D \vdash P :: z{:} A[\omega]
\end{array} 
\]}
Judgment $\text{(i)}$ states that $\omega_1$ can directly access $\omega_2$ under
the hypotheses in $\Omega$.   We write $\prec^*$ for the {reflexive, transitive
  closure} of $\prec$, and $\omega_1 \not\prec^* \omega_2$ when
$\omega_1 \prec^* \omega_2$ does not hold.  Judgment~\text{(ii)} states that
process $P$ offers the session behavior specified by type $A$ on
channel~$z$; the session $s$ resides at domain $\omega$, under the
accessibility hypotheses $\Omega$, using  {unrestricted}
sessions in $\G$ and {linear} sessions in $\D$.  
Note that each hypothesis in $\G$ and $\D$ is 
labeled with a specific domain.  We omit $\Omega$ when it is clear from
context. 

\subparagraph{Typing Rules}
\jpnote{Selected} typing rules are given in
Fig.~\ref{fig:phillrules}; \btnote{see \cite{longversion} for the
  full listing.}
Right rules (marked with $\rgt{}$) 
specify how to \emph{offer} a session of a
given type, left rules (marked with $\lft{}$)  define how to \emph{use} a session. 
The hybrid nature of the system induces a notion of \emph{well-formedness} of sequents: 
a sequent 
$\Omega ; \G ; \D \vdash P :: z: C[\omega_1]$ 
is
\emph{well-formed} if 
$\Omega \vdash \omega_1\,\prec^*\,\omega_2$ for every $x{:}A[\omega_2] \in \D$, which we
abbreviate as $\Omega \vdash \omega_1 \prec^* \D$, meaning that all domains
mentioned in $\D$ are accessible  from $\omega_1$ (not necessarily in a single
\emph{direct} step).
No such 
domain 
requirement is imposed on $\G$.  
If an end sequent is well-formed, every sequent in its proof will
also be well-formed.  
All rules (read bottom-up)
preserve this invariant; only \name{$\cut$}, \name{$\cpy$}, \name{$\rgt@$},
\name{$\lft \forall$} and \name{$\rgt \exists$}
require explicit  checks, which we discuss below.
This invariant 
 statically excludes interaction between sessions in 
accessible domains (cf.~Theorem~\ref{thrm:dom}).

We briefly discuss some of the typing rules, first noting that we
consider processes modulo structural congruence; hence,
typability is closed under $\equiv$ by definition.
Type $A\lolli B$ denotes a session that inputs a session of type $A$
and proceeds as $B$.
To offer $z{:}A\lolli  B$ at domain $\omega$, we input $y$ along $z$
that will offer $A$ at $\omega$ and proceed, now offering
$z{:}B$ at $\omega$:

\vspace{-0.3cm}
{\small\[
  \inferrule*[left=\name{$\rgt\lolli$}]
{\bluee{\Omega ; \G ; \D ,\,}\redd{ y{:}}\bluee{A[\omega]} \vdash
  \redd{P :: z{:}} \bluee{B[\omega]}}{\bluee{\Omega ; \G ; \D} \vdash
  \redd{z(y).P :: z{:}}\bluee{ A\lolli B[\omega]}}
\quad
\inferrule*[left=\name{$\rgt\tensor$}]{\bluee{\Omega ; \G ; \D_1} \vdash \redd{P :: y{:}}\bluee{A[\omega] \quad \Omega ; \G ; \D_2} \vdash \redd{Q:: z{:}}\bluee{B[\omega]}}{\bluee{\Omega ; \G ; \D_1 , \D_2} \vdash \redd{\outp{z}{y}.(P \mid  Q) :: z{:}} \bluee{A \tensor B [\omega]}}
\]} Dually, $A\tensor B$ denotes a session that outputs a session that
will offer $A$ and continue as~$B$.  To offer $z{:}A\otimes B$, we
  output  a fresh name $y$ with type $A$ along $z$
 and  proceed offering $z{:}B$.

\jpnote{The \name{$\cut$} rule allows us to compose process $P$, which offers
$x{:}A[\omega_2]$, with process $Q$, which uses $x{:}A[\omega_2]$ to offer
$z{:}C[\omega_1]$. 
We require that {domain} $\omega_2$ is accessible from $\omega_1$
(i.e., $\omega_1 \prec^* \omega_2$). 
We also require $\omega_1 \prec^* \D_1$: the domains mentioned in $\Delta_1$ (the context for $P$) must be accessible from $\omega_1$, 
which follows from the transitive closure
of the accessibility relation ($\prec^*$) using the
intermediary domain $\omega_2$. As in~\cite{DBLP:conf/concur/CairesP10,DBLP:journals/mscs/CairesPT16}, composition binds the name $x$:}

\vspace{-0.4cm}
{\small\[
\inferrule*[left=\name{$\cut$}]
{
    \Omega \vdash \omega_1 \prec^* \omega_2 
    \quad
   \Omega \vdash
   \omega_1 \prec^* \D_1  
   \quad \bluee{\Omega ; \G ; \D_1} \vdash \redd{P :: x{:}} \bluee{A[\omega_2]
    \quad 
 \Omega ; \G ; \D_2 ,\,} \redd{x{:}}\bluee{A[\omega_2]} \vdash \redd{Q ::
    z{:}}\bluee{C[\omega_1]}
    }
{\bluee{\Omega ; \G ; \D_1 , \D_2}  \vdash \redd{(\nub x)(P \mid Q) :: z{:}}\bluee{C[\omega_1]}}
\]}
    Type $\one$ means that no further interaction will
    take place on the session; names of type $\one$ may be passed
    around as opaque values.  $\with\{\lb{l}_i : A_i\}_{i\in I}$
    types a session channel that offers its partner a choice between
    the  $A_i$ behaviors, each uniquely identified by a label $\lb{l}_i$. 
    Dually, $\oplus\{\lb{l}_i : A_i\}_{i\in I}$ types a session
    that selects some behavior $A_i$ by emitting the corresponding label.
    \jpnote{For flexibility and consistency with
    merge-based projectability in multiparty session types, rules for choice and selection induce a standard notion of session
    subtyping \cite{DBLP:journals/acta/GayH05}.}
    
Type $\bang A$ types a shared (non-linear) channel, 
to be used by a server for spawning an arbitrary number of new sessions
(possibly none), each one conforming to type $A$.

Following our previous remark on well-formed sequents,
the only rules 
that appeal to accessibility are 
\name{$\rgt @$}, \name{$\lft @$}, \name{$\cpy$}, and \name{$\cut$}.
These conditions are directly associated with varying
degrees of flexibility in terms of typability, depending on what
relationship is imposed between the \btnote{domain} to the left and to the right
of the turnstile in the left rules.  Notably, our system leverages the
accessibility judgment to enforce that communication is only allowed
between processes whose sessions are in (transitively) {\em accessible}
domains.

The type operator $@_\omega$ realizes a \emph{domain migration} mechanism which 
is specified both at the level of types 
and processes via name mobility tagged with a domain name. 
Thus, a channel typed with $@_{\omega_2} A$ denotes that behavior $A$ is
available by first \emph{moving to} domain $\omega_2$,
directly accessible from the current domain.
    More precisely, we have:

    \vspace{-0.5cm}
    {\small\[
    \inferrule*[left=\name{$\rgt @$}]
    {\begin{array}{c}
      \bluee{\Omega} \vdash \bluee{\omega_1 \prec \omega_2}
\\ \bluee{\Omega} \vdash \bluee{\omega_2 \prec^* \D} \quad \Omega ;
      \G ; \D \vdash \redd{P    :: y{:}} \bluee{A[\omega_2]}
      \end{array}}
{\bluee{\Omega ; \G ;\D} \vdash \redd{\overline{z}\langle y @
    \omega_2\rangle.P :: z{:}} \bluee{@_{\omega_2} A[\omega_1]}}
    \quad
    \inferrule*[left=\name{$\lft @$}]
{\begin{array}{c}\Omega , \omega_2 \prec \omega_3 ; \G ; \D ,\, \redd{ y{:}}A[\omega_3] \vdash \redd{P :: z{:}} C[\omega_1]\end{array}}{\bluee{\Omega ; \G ; \D ,\,} \redd{x{:}}\bluee{@_{\omega_3} A[\omega_2]} \vdash \redd{x(y@\omega_3).P :: z{:}}\bluee{ C[\omega_1]}}
    \]}
    Hence, 
    a process \emph{offering} a behavior $z{:}@_{\omega_2}\, A$ at $\omega_1$ ensures: 
    (i)~behavior $A$ is available at $\omega_2$
    along a \emph{fresh} session channel $y$ that is emitted along $z$
    and (ii)~$\omega_2$ is directly accessible from $\omega_1$.  To maintain
    well-formedness of the sequent we also must check that all
    domains in $\Delta$ are still accessible from $\omega_2$.
    Dually, \emph{using} a service $x{:}@_{\omega_3} A [\omega_2]$ 
    entails receiving a channel $y$ that will offer behavior $A$ at
    domain $\omega_3$ (and also allowing the usage of the fact that $\omega_2
    \prec \omega_3$).



    Domain-quantified sessions introduce
    domains as \emph{fresh} parameters to types: a particular service can be
    specified with the ability to refer to any
    existing directly accessible domain (via universal
    quantification) or to some \emph{a priori} unspecified accessible
    domain:

\vspace{-0.5cm}
{\small\[
\inferrule*[left=\name{$\rgt \forall$}]
{\bluee{\Omega , \omega_1 \prec \alpha ; \G ; \D} \vdash \redd{P ::
    z{:}}\bluee{A[\omega_1]} \quad \alpha \not\in \Omega , \G , \D , \omega_1}
{\bluee{\Omega ; \G ; \D} \vdash \redd{z(\alpha).P :: z{:}}\bluee{\forall \alpha . A[\omega_1]}}
~~~
\inferrule*[left=\name{$\lft \forall$}]{\begin{array}{c} 
\Omega \vdash \omega_2 \prec \omega_3 \\
 \Omega ; \G ; \D ,\, \redd{x{:}}A\{\omega_3 / \alpha\}  [\omega_2] \vdash \redd{Q :: z{:}}C[\omega_1]
\end{array}}{\bluee{\Omega ; \G ; \D ,\,} \redd{x{:}}\bluee{\forall \alpha . A[\omega_2]} \vdash
  \redd{x\out{\omega_3}.Q :: z{:}}\bluee{C[\omega_1]}}
\]}
Rule \name{$\rgt \forall$} states that
    a process seeking to offer $\forall \alpha .A[\omega_1]$ denotes a
    service that is located at domain $\omega_1$ but that may refer to any
    fresh domain directly accessible from $\omega_1$ in its
    specification (e.g.~through the use of $@$). Operationally, this means that
    the process must be ready to receive from its client a reference to
    the domain being referred to in the type, which is bound to $\alpha$
    (occurring fresh in the typing derivation). Dually,
    Rule~\name{$\lft \forall$} indicates that a process interacting with a
    service of type $x{:}\forall \alpha .A[\omega_2]$ must make concrete
    the domain that is directly accessible from $\omega_2$ it wishes to
    use, which is achieved by the appropriate output action.  Rules
    {\name{$\lft \exists$}} and {\name{$\rgt \exists$}} for the
    existential quantifier have a dual reading.

Finally, the type-level operator $\here{\alpha.A}$ 
allows for a type to refer to its {\em current} domain:

\vspace{-0.4cm}
{\small\[
\inferrule*[left=\name{$\rgt\downarrow$}]
  {\Omega ; \G ; \D \vdash \redd{P :: z{:}}\bluee{A\{\omega/\alpha\} [\omega]}}
  {\Omega ; \G ; \D \vdash \redd{P :: z{:}}\bluee{\here{\alpha . A} [\omega]}}
  \qquad
  \inferrule*[left=\name{$\lft\downarrow$}]
   {\Omega ; \G ; \D , \redd{x{:}}\bluee{A\{\omega/\alpha\}[\omega]} \vdash
  \redd{P ::  z{:}}\bluee{C}}
  {\Omega ; \G ; \D , \redd{x{:}}\bluee{\here{\alpha.A}[\omega]} \vdash
  \redd{P ::  z{:}}\bluee{C}}
\]}
The typing rules that govern $\here{\alpha.A}$ are completely
symmetric and produce no action at the process level, merely
instantiating the \btnote{domain} variable $\alpha$ with the current domain
$\omega$ of the session. As will be made clear in \secref{sec:hmpst},
this connective plays a crucial role in ensuring the correctness of
our analysis of multiparty domain-aware sessions in our logical setting.

By developing our type theory with an \btnote{explicit} domain accessibility
judgment, we can consider the accessibility relation as a
\emph{parameter} of the framework. 
This allows changing accessibility
relations and \btnote{their} properties without having to alter the entire
system.
To consider the simplest possible
accessibility relation, the only defining rule for accessibility would
be  
Rule \name{$\mathsf{whyp}$} in 
Fig.~\ref{fig:phillrules}.
To consider an accessibility relation which is an 
equivalence relation we would add reflexivity, transitivity, and
symmetry rules to the judgment.

 \vspace{-0.2cm}
\subparagraph{Discussion and Examples}
\jpnote{Being an interpretation of {\em hybridized} linear logic, 
our domain-aware theory is \emph{conservative} wrt the Curry-Howard interpretation of session types
in~\cite{DBLP:conf/concur/CairesP10,DBLP:journals/mscs/CairesPT16}, in the following sense: 
the system in~\cite{DBLP:conf/concur/CairesP10,DBLP:journals/mscs/CairesPT16} 
corresponds to the case where every session resides at the same domain. 
As in~\cite{DBLP:conf/concur/CairesP10,DBLP:journals/mscs/CairesPT16}, 
the sequent calculus for the underlying (hybrid) linear logic can be recovered 
from our typing rules
by erasing processes and name assignments.}



Conversely, a fundamental consequence of our hybrid interpretation is
that it {\em refines} the session type structure in non-trivial
ways. By requiring that communication only \jpnote{occurs} between sessions
located at the same (or accessible) domain 
    we effectively introduce a new layer of reasoning
to session type systems. To illustrate this feature, consider the
following session type $\mathsf{WStore}$, which specifies a simple interaction between a web store
and its clients:

    \vspace{-0.5cm}
    {\small\[
\begin{array}{l}
\!\!\mathsf{WStore}   \triangleq  \mathsf{addCart} \lolli \with\!\{\lb{buy}: \mathsf{Pay} \, , \,
  \lb{quit}: \one \} \qquad
  \!\!\mathsf{Pay}   \triangleq  
\mathsf{CCNum} \lolli \oplus\!\{\lb{ok} : \mathsf{Rcpt}
\tensor \one \, , \, \lb{nok} : \one\}
 \end{array}
\]}
$\mathsf{WStore}$ allows clients to checkout their shopping carts by emitting a $\lb{buy}$
message or to $\lb{quit}$. In the former case, the client pays for the
    purchase by sending their credit card data. If a banking service (not shown) approves the transaction (via an $\lb{ok}$ message), a receipt is emitted. 
Representable in existing session type
systems
(e.g. \cite{DBLP:conf/concur/CairesP10,DBLP:journals/jfp/Wadler14,DBLP:conf/esop/HondaVK98}), types $\mathsf{WStore}$ and $\mathsf{Pay}$  describe the intended communications  but fail to capture the
    crucial fact that in practice the client's sensitive
    information should only be requested after entering a secure
    domain.
To address this limitation, we can use  type-level domain migration 
 to \emph{refine} $\mathsf{WStore}$ and $\mathsf{Pay}$:


    \vspace{-0.4cm}
    
    {\small\[
  \begin{array}{rl}
\mathsf{WStore}_{\mathtt{sec}} \triangleq & \mathsf{addCart} \lolli
\with \{ \lb{buy}: @_{\mathtt{sec}}\,\mathsf{Pay}_{\mathtt{bnk}},
\mathtt{quit}: \one \}
\\
\mathsf{Pay}_{\mathtt{bnk}}  \triangleq  & \mathsf{CCNum} \lolli \oplus\!\{\lb{ok} : (@_{\mathtt{bnk}}\mathsf{Rcpt})
    \tensor \one  ,  \lb{nok} : \one\}
    \end{array}
    \]
}
$\mathsf{WStore}_{\mathtt{sec}}$   decrees that the interactions
pertinent to 
type $\mathsf{Pay}_{\mathtt{bnk}}$ should be preceded by a
migration step to the trusted domain $\mathtt{sec}$, which should be
directly accessible from $\mathsf{WStore}_{\mathtt{sec}}$'s current
    domain.
The type  also specifies that the receipt
must originate from a bank domain $\mathtt{bnk}$ (e.g., ensuring that the receipt is never produced by the store without entering $\mathtt{bnk}$).
When considering the interactions with a client (at domain
    $\mathtt{c}$) that checks out their cart, we reach a state that is
    typed with the following judgment:
    \vspace{-0.2cm}

    {\small
    \[
\mathtt{c}\prec\mathtt{ws} ; \cdot ;
x{:}@_{\mathtt{sec}}\mathsf{Pay}_{\mathtt{bnk}}[\mathtt{ws}] \vdash Client ::
    z{:}@_{\mathtt{sec}}\one[\mathtt{c}]
\]}
\noindent 
At this point, it is \emph{impossible} for a (typed) client to interact with the
behavior that is protected by the domain $\mathtt{sec}$, since
it is not the case that $\mathtt{c}\prec^* \mathtt{sec}$. \btnote{That
    is, no judgment of the form $\mathtt{c}\prec\mathtt{ws} ; \cdot ;
    \mathsf{Pay}_\mathtt{bnk}[\mathtt{sec}] \vdash
    Client' :: z{:}T[\mathtt{c}]$
    is derivable.
    }
This ensures,
e.g., that a client cannot exploit the payment platform of the web
store by accessing the trusted domain in unforeseen ways.
%
The client can only communicate in the secure domain \emph{after} the web
store service has migrated accordingly, as shown by the judgment     

\vspace{-0.4cm}
    {\small
    \[
    \mathtt{c}\prec\mathtt{ws} , \mathtt{ws}\prec \mathtt{sec} ; \cdot ;
x'{:}\mathsf{Pay}_{\mathtt{bnk}}[\mathtt{sec}] \vdash Client' :: z'{:}\one[\mathtt{sec}].
\]}
\vspace{-0.2cm}

\begin{figure}[!t]
  {\footnotesize
    \[
\begin{array}{ccccc}
\inferrule*[left=\name{$\mathsf{whyp}$}]{ }{\bluee{\Omega , \omega_1 \prec \omega_2 \vdash \omega_1 \prec \omega_2}}
\qquad
\inferrule*[left=\name{$\mathsf{id}$}]{ }{\bluee{\Omega ; \G ;\,} \redd{ x{:}}\bluee{A[\omega]} \vdash \redd{\linkr{x}{z} :: z{:}}\bluee{A[\omega]}}
\\[1em]
\inferrule[\name{$\rgt @$}]{\bluee{\Omega} \vdash \bluee{\omega_1 \prec \omega_2
\quad \bluee{\Omega} \vdash \bluee{\omega_2 \prec^* \D} \quad \Omega ; \G ; \D} \vdash \redd{P    :: y{:}} \bluee{A[\omega_2]}}
{\bluee{\Omega ; \G ;\D} \vdash \redd{\overline{z}\langle y @ \omega_2\rangle.P :: z{:}} \bluee{@_{\omega_2} A[\omega_1]}}
\quad
\inferrule[\name{$\lft @$}]
{\begin{array}{c}\Omega , \omega_2 \prec \omega_3 ; \G ; \D ,\, \redd{ y{:}}A[\omega_3] \vdash \redd{P :: z{:}} C[\omega_1]\end{array}}{\bluee{\Omega ; \G ; \D ,\,} \redd{x{:}}\bluee{@_{\omega_3} A[\omega_2]} \vdash \redd{x(y@\omega_3).P :: z{:}}\bluee{ C[\omega_1]}}
  \\[1em]
\inferrule[\name{$\rgt \forall$}]
{\bluee{\Omega , \omega_1 \prec \alpha ; \G ; \D} \vdash \redd{P ::
    z{:}}\bluee{A[\omega_1]} \quad \alpha \not\in \Omega , \G , \D , \omega_1}
{\bluee{\Omega ; \G ; \D} \vdash \redd{z(\alpha).P :: z{:}}\bluee{\forall \alpha . A[\omega_1]}}
\quad
\inferrule[\name{$\lft \forall$}]{
\Omega \vdash \omega_2 \prec \omega_3 \quad
 \Omega ; \G ; \D ,\, \redd{x{:}}A\{\omega_3 / \alpha\}  [\omega_2] \vdash \redd{Q :: z{:}}C[\omega_1]
}{\bluee{\Omega ; \G ; \D ,\,} \redd{x{:}}\bluee{\forall \alpha . A[\omega_2]} \vdash
  \redd{x\out{\omega_3}.Q :: z{:}}\bluee{C[\omega_1]}} 
  \\[1.5em]
  \inferrule*[left=\name{$\rgt \exists$}]
{ \bluee{\Omega} \vdash \omega_1 \prec \omega_2 \quad \bluee{\Omega ; \G ; \D} \vdash \redd{P ::
    z{:}}\bluee{A\{\omega_2/\alpha\}[\omega_1]} }
{\bluee{\Omega ; \G ; \D} \vdash \redd{z\langle\omega_2\rangle.P :: z{:}}\bluee{\exists \alpha . A[\omega_1]}}
\quad
\inferrule*[left=\name{$\lft \exists$}]{\begin{array}{c} 
 \Omega , \omega_2 \prec \alpha; \G ; \D ,\, \redd{x{:}}A[\omega_2] \vdash \redd{Q :: z{:}}C[\omega_1]
\end{array}}{\bluee{\Omega ; \G ; \D ,\,} \redd{x{:}}\bluee{\exists \alpha . A[\omega_2]} \vdash
  \redd{x(\alpha).Q :: z{:}}\bluee{C[\omega_1]}} 
  \\[1em]
  \inferrule*[left=\name{$\rgt\downarrow$}]
  {\Omega ; \G ; \D \vdash \redd{P :: z{:}}\bluee{A\{\omega/\alpha\} [\omega]}}
  {\Omega ; \G ; \D \vdash \redd{P :: z{:}}\bluee{\here{\alpha . A} [\omega]}}

  \quad
  \inferrule*[left=\name{$\lft\downarrow$}]
   {\Omega ; \G ; \D , \redd{x{:}}\bluee{A\{\omega/\alpha\}[\omega]} \vdash
  \redd{P ::  z{:}}\bluee{C}}
  {\Omega ; \G ; \D , \redd{x{:}}\bluee{\here{\alpha.A}[\omega]} \vdash
  \redd{P ::  z{:}}\bluee{C}}
  \\[1em]
\inferrule*[left=\name{$\cpy$}]{\bluee{\Omega \vdash \omega_1 \prec^* \omega_2 \quad \Omega ; \G ,\,} \redd{u{:}}\bluee{A[\omega_2] ; \D ,\,} \redd{y{:}}\bluee{A[\omega_2]}  \vdash \redd{P :: z{:}}\bluee{ C[\omega_1]}}{\bluee{\Omega ; \G ,\,}\redd{ u{:}}\bluee{A[\omega_2] ; \D} \vdash \redd{\outp{u}{y}.P  :: z{:}}\bluee{C[\omega_1]}}
\\[1em]
\inferrule*[left=\name{$\cut$}]
{
\Omega \vdash \omega_1 \prec^* \omega_2 \quad \Omega \vdash \omega_2 \prec^* \D_1  
\quad
 \bluee{\Omega ; \G ; \D_1} \vdash \redd{P :: x{:}} \bluee{A[\omega_2]
    \quad 
 \Omega ; \G ; \D_2 ,\,} \redd{x{:}}\bluee{A[\omega_2]} \vdash \redd{Q ::
 z{:}}\bluee{C[\omega_1]}
 }
{\bluee{\Omega ; \G ; \D_1 , \D_2}  \vdash \redd{(\nub x)(P \mid Q) :: z{:}}\bluee{C[\omega_1]}}
\end{array}
\]}
\caption{Typing Rules (Excerpt -- see \btnote{\cite{longversion}})\label{fig:phillrules}}
\vspace{-3ex}
\end{figure}



\vspace{-0.5cm}
\subparagraph{Technical Results}\label{sec:techn}
We state the main results of type safety via type preservation (Theorem~\ref{thrm:pres}) and global progress (Theorem~\ref{thrm:prog}).
These results directly ensure session fidelity and deadlock-freedom.
Typing also ensures termination, 
i.e., processes do not exhibit infinite reduction paths
(Theorem~\ref{thrm:sn}). \btnote{We note that in the presence of
  termination, our progress result ensures that communication actions are
  always guaranteed to take place.}
Moreover, 
as a property specific to domain-aware processes, we show
\emph{domain preservation}, i.e., 
processes 
respect their domain accessibility conditions (Theorem~\ref{thrm:dom}).
The formal development of these results relies on a
\emph{domain-aware} labeled transition system \cite{longversion}, defined as 
a simple
generalization of the early labelled transition system for the session $\pi$-calculus given in~\cite{DBLP:conf/concur/CairesP10,DBLP:journals/mscs/CairesPT16}.


\smallskip

\noindent{\bf Type Safety and Termination.}
%
Following~\cite{DBLP:conf/concur/CairesP10,DBLP:journals/mscs/CairesPT16}, our proof of type
preservation relies on a simulation between reductions in the
session-typed $\pi$-calculus and logical proof reductions.

\begin{lemma}[Domain Substitution]\label{lem:wsubst}
  \jpnote{Suppose $\Omega \vdash \omega_1 \prec \omega_2$. Then we have:}
  {\small\begin{itemize}
    \item If $\Omega , \omega_1 \prec \alpha , \Omega'; \G ; \D
\vdash P :: z{:} A[\omega]$ then  \\ 
$\Omega ,\Omega'\{\omega_2/\alpha\};
\G\{\omega_2/\alpha\}; \D\{\omega_2/\alpha\} \vdash
P\{\omega_2/\alpha\} :: z{:}A[\omega\{\omega_2/\alpha\}]$.
\item $\Omega , \alpha \prec \omega_2 , \Omega'; \G ; \D
\vdash P :: z{:} A[\omega]$ then \\ 
$\Omega ,\Omega'\{\omega_1/\alpha\};
\G\{\omega_1/\alpha\}; \D\{\omega_1/\alpha\} \vdash
P\{\omega_1/\alpha\} :: z{:}A[\omega\{\omega_1/\alpha\}]$.
\end{itemize}}
\end{lemma}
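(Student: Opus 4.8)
The plan is to prove both items by induction on the derivation of the given typing judgment, after first isolating the corresponding substitution principle for the accessibility judgment itself. Call the two items (i) and (ii). They are structurally identical: in (i) the distinguished hypothesis $\omega_1 \prec \alpha$ has $\alpha$ as its target and we replace $\alpha$ by a domain $\omega_2$ with $\Omega \vdash \omega_1 \prec \omega_2$; in (ii) the hypothesis $\alpha \prec \omega_2$ has $\alpha$ as its source and we replace $\alpha$ by $\omega_1$, again using $\Omega \vdash \omega_1 \prec \omega_2$. In each case the distinguished hypothesis, once substituted, becomes exactly $\omega_1 \prec \omega_2$, which is derivable from $\Omega$ by assumption, and hence may be discarded. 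I would state the induction hypothesis with $\Omega'$ and the linear and unrestricted contexts left arbitrary, so that it stays applicable as rules traverse the derivation and extend the accessibility context.

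First I would establish the auxiliary claim for (i): if $\Omega \vdash \omega_1 \prec \omega_2$ and $\Omega, \omega_1 \prec \alpha, \Omega' \vdash \omega \prec^* \omega'$, then $\Omega, \Omega'\{\omega_2/\alpha\} \vdash \omega\{\omega_2/\alpha\} \prec^* \omega'\{\omega_2/\alpha\}$, and symmetrically for (ii). This goes by induction on the accessibility derivation: the only occurrence of the eliminated hypothesis arises through \name{$\mathsf{whyp}$}, and its substituted instance $\omega_1 \prec \omega_2$ holds by assumption (weakened into the larger context); every other \name{$\mathsf{whyp}$} instance is a hypothesis of $\Omega'\{\omega_2/\alpha\}$, and the remaining closure rules (reflexivity, transitivity, symmetry, should the parametric relation include them) commute with the renaming since they are schematic. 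Because the assumed step $\omega_1 \prec \omega_2$ is \emph{direct}, this claim also covers premises that demand a direct rather than transitive step.

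With the auxiliary claim in hand, the main induction is largely routine: for each typing rule the renaming is pushed through conclusion and premises, the induction hypothesis is applied to each subderivation, and the rule is reapplied. The cases needing care are the hybrid-specific rules. For \name{$\cut$}, \name{$\cpy$}, \name{$\rgt @$}, \name{$\lft \forall$}, and \name{$\rgt \exists$}, the explicit accessibility side conditions are transported across the substitution using the auxiliary claim. For \name{$\rgt \forall$}, \name{$\lft \exists$}, and \name{$\lft @$}, which extend $\Omega$ with a freshly bound domain variable, I would first $\alpha$-rename that variable so that it differs from $\alpha$, $\omega_1$, $\omega_2$, and all domains in the ambient contexts; the renaming then commutes past the introduced hypothesis, and the freshness side condition is re-established since the substitution introduces no new occurrence of the bound variable. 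In parallel I would check that the resulting sequents remain well formed, i.e. $\Omega, \Omega'\{\omega_2/\alpha\} \vdash \omega\{\omega_2/\alpha\} \prec^* \D\{\omega_2/\alpha\}$, by applying the auxiliary claim to each assignment of $\D$.

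The step I expect to be the main obstacle is the uniform treatment of precisely these accessibility-manipulating rules: keeping the distinguished hypothesis in a consistent position as rules such as \name{$\lft @$} grow the accessibility context, reconciling the fact that certain premises require a \emph{direct} rather than transitive step (resolved by the assumed step $\omega_1 \prec \omega_2$ being direct), and making the argument robust for the parametric accessibility relation rather than a fixed one. Everything else amounts to observing that domain substitution commutes with the logical structure of the remaining rules and with the substitution applied throughout the sequent.
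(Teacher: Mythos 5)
Your proposal is correct and follows exactly the route the paper intends: the paper states Lemma~\ref{lem:wsubst} without an explicit proof (deferring, as with the related reduction lemmas in Appendix~\ref{app:tpres}, to the standard substitution arguments of \cite{DBLP:conf/esop/CairesPPT13,Toninho2011}), and that implicit proof is precisely your induction on the typing derivation together with an auxiliary substitution property for the parametric accessibility judgment. Your two refinements --- observing that the assumed step $\omega_1 \prec \omega_2$ is \emph{direct}, so it can discharge both direct and $\prec^*$ side conditions, and stating the induction hypothesis with $\Omega'$ arbitrary so that rules like \name{$\lft @$}, \name{$\rgt \forall$}, and \name{$\lft \exists$} that grow the accessibility context remain handled --- are exactly the points the omitted proof must address.
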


\noindent Safe domain communication relies on
domain substitution preserving typing (Lemma~\ref{lem:wsubst}).


\begin{theorem}[Type Preservation]
\label{thrm:pres}
If $\Omega ; \G ; \D \vdash P :: z{:}A[\omega]$ and $P \tra{} Q$ then \\ $\Omega
; \G ; \D \vdash Q :: z{:}A[\omega]$.
\end{theorem}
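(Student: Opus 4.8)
The plan is to follow the Curry-Howard methodology of \cite{DBLP:journals/mscs/CairesPT16}, under which a process reduction $P \tra{} Q$ is mirrored by a proof reduction; preservation then reduces to checking that each principal cut reduction preserves typing. I would argue by induction on the typing derivation of $P$, inspecting the rule enabling the reduction. Since typability is closed under $\equiv$, closure of reduction under structural congruence is immediate; and a reduction occurring strictly within a subprocess is handled by the induction hypothesis on the corresponding premise, re-applying the concluding rule (e.g.\ \name{$\cut$}). The substantive cases are the top-level synchronisations: in a well-typed process every such synchronisation fires on a channel $x$ that is either bound by a \name{$\cut$} (a linear session) or consumed by \name{$\cpy$} against a replicated server, so inversion on the derivation exposes, just above that cut, the matching right and left rules for the principal type of $x$. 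Each communication case is therefore a principal cut reduction determined by the top connective of that type.

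For the connectives inherited from \cite{DBLP:journals/mscs/CairesPT16} --- $A\lolli B$, $A\tensor B$, the labelled $\with\{\ldots\}$/$\oplus\{\ldots\}$, $\one$, $\bang$, and the forwarder \name{$\mathsf{id}$} --- the argument is as in that work: inversion exposes the dual prefixes, the synchronisation rule applies, and the residual process is retyped by a composition of (smaller) cuts on the continuations, threading $\Om$, $\G$ and the linear contexts unchanged. The forwarding reduction $(\nub x)(\linkr{x}{z}\para P) \red P\subst{z}{x}$ is retyped by renaming $x$ to $z$ throughout the derivation of $P$; the replicated-input reduction retains the persistent hypothesis in $\G$ and retypes the spawned copy by a fresh linear cut. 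The additive cases retype at the selected branch, and the subtyping admitted for $\with/\oplus$ is propagated by a routine subtyping/weakening lemma.

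The genuinely new cases concern the hybrid connectives. For migration, a \name{$\rgt@$}/\name{$\lft@$} redex $(\nub x)(\outp{x}{y@\omega_2}.P \mid x(y'@\omega_2).Q)$ reduces to $(\nub y)(P \mid Q\subst{y}{y'})$, which I retype by forming a \emph{new} cut on the migrated channel $y$, now residing at the target domain $\omega_2$. Its side-conditions are supplied precisely by the two rules: \name{$\rgt@$} certifies $\Om \vdash \omega_1 \prec \omega_2$ and $\Om \vdash \omega_2 \prec^* \D$, so the continuation's linear context remains accessible from $\omega_2$; the hypothesis $\omega_1\prec\omega_2$ assumed by \name{$\lft@$} is discharged against $\Om$ using that same fact; and the outer cut's invariant $\omega_0 \prec^* \omega_1$ combines with $\omega_1 \prec \omega_2$, by transitivity of $\prec^*$, to re-establish $\omega_0 \prec^* \omega_2$. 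For domain communication, a \name{$\rgt\forall$}/\name{$\lft\forall$} redex $x(\alpha).P \mid x\out{\omega_3}.Q \red Q \mid P\subst{\omega_3}{\alpha}$ (and dually for $\exists$) is retyped by invoking the Domain Substitution Lemma (Lemma~\ref{lem:wsubst}) on $P$: the premise $\Om \vdash \omega_2 \prec \omega_3$ produced by \name{$\lft\forall$} discharges the hypothesis of that lemma, and since $\alpha$ is fresh the substitution leaves $\G$, $\D$ and the ambient domains fixed, so a cut on $x{:}A\subst{\omega_3}{\alpha}[\omega_2]$ retypes the result. The connective $\here{\alpha.A}$ carries no process content: its rules \name{$\rgt\downarrow$}/\name{$\lft\downarrow$} are silent, so it triggers no reduction of its own and is simply permuted past by inversion when it guards a reducing prefix.

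The main obstacle is the disciplined bookkeeping of the accessibility environment $\Om$ and of the well-formedness invariant --- that every domain in the linear context is $\prec^*$-accessible from the offered domain --- across these hybrid reductions, and above all the migration case, where the residual cut lives at a \emph{new} domain and its accessibility obligations must be re-derived from the premises of \name{$\rgt@$}/\name{$\lft@$} together with transitivity of $\prec^*$. Throughout, the argument must appeal only to accessibility facts derivable from $\Om$ and the hypotheses introduced by the left rules, never to any unstated structural property of $\prec$, so that preservation holds uniformly for every instantiation of the parametric accessibility relation. Setting up the Domain Substitution Lemma under this same discipline is the technical crux on which the quantifier cases depend.
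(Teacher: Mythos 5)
Your proposal is correct and follows essentially the same route as the paper: the paper's proof (sketched in the main text, with details in Appendix~\ref{app:tpres}) likewise simulates each reduction by a principal cut reduction, factoring the per-connective cases into reduction lemmas stated via the LTS and proved by simultaneous induction on the two component derivations --- the same content as your single induction with inversion and commuting cases --- and it invokes the Domain Substitution Lemma (Lemma~\ref{lem:wsubst}) for the $\forall$/$\exists$ cases exactly where you do. Your accessibility bookkeeping for the \name{$\rgt @$}/\name{$\lft @$} redex (re-establishing $\omega_0 \prec^* \omega_2$ by transitivity of $\prec^*$ from the rules' premises) matches the paper's treatment, so there is nothing substantive to correct.
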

\begin{proof}[Proof (Sketch)]
  The proof mirrors those of
  \cite{DBLP:conf/concur/CairesP10,DBLP:journals/mscs/CairesPT16,DBLP:conf/esop/CairesPPT13,Toninho2011},
  relying on a series of lemmas  
  relating the result of dual process actions (via our LTS semantics)
  with typable parallel compositions through the \name{$\cut$}
  rule \btnote{\cite{longversion}}. For session type constructors of
  \cite{DBLP:conf/concur/CairesP10}, the results are unchanged.
   For the domain-aware session type constructors, the
  development is identical that of \cite{DBLP:conf/esop/CairesPPT13}
  and \cite{Toninho2011}, which deal with communication of types and
  data terms, respectively.
\end{proof}

Following~\cite{DBLP:conf/concur/CairesP10,DBLP:journals/mscs/CairesPT16}, the proof of global progress
relies
on a notion of a \emph{live} process,
which intuitively consists of a process that has not yet fully carried
out its ascribed session behavior, and thus is a parallel composition
of processes where at least one is a non-replicated process, guarded
by some action. Formally, we define $live(P)$ if and only if $P \equiv
(\nub \tilde{n})(\pi.Q \mid R)$, for some $R$, 
names $\tilde{n}$ and a non-replicated guarded process $\pi.Q$.

\begin{theorem}[Global Progress]
\label{thrm:prog}
If $\Omega ; \cdot ; \cdot \vdash P :: x{:}\one[\omega]$ and $live(P)$ then
$\exists Q$ s.t. $P\tra{} Q$.
\end{theorem}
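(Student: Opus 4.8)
The plan is to follow the proof strategy of \cite{DBLP:conf/concur/CairesP10,DBLP:journals/mscs/CairesPT16}: global progress is obtained as a corollary of a more general \emph{contextual progress} lemma, proved by induction on the typing derivation and phrased in terms of the immediate actions of the domain-aware LTS. First I would set up the auxiliary machinery. For a free name $n$, write $P \downarrow_n$ when $P$ can perform an LTS action on $n$ (an input or output prefix ready to fire on $n$), and establish an \emph{inversion} lemma: if $\Om;\G;\D \vdash P :: z{:}A[\omega]$ and $P \downarrow_z$, then the shape of that action is uniquely determined by the top connective of $A$ (and dually for a name $x{:}A \in \D$). The central ingredient is a \emph{synchronization} lemma stating that if $P$ offers $x{:}A$ with an immediate action on $x$ and $Q$ uses $x{:}A$ with an immediate action on $x$, then those actions are dual and $(\nub x)(P \para Q) \red R$ for some $R$; the analogous statement for a shared channel composed by \name{$\cutbang$} pairs a \name{$\cpy$} output with a replicated input. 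This lemma is where the new domain prefixes enter, and I return to it below.

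Next I would state and prove the contextual progress lemma: if $\Om;\G;\D \vdash P :: z{:}C[\omega]$ and $live(P)$, then either $P \red Q$ for some $Q$, or $P \downarrow_n$ for some $n$ among $z$ and the names declared in $\D$ or $\G$. The proof goes by induction on the typing derivation with a case analysis on the last rule. The crux is the \name{$\cut$} case, $P = (\nub x)(P_1 \para P_2)$, with $P_1$ offering $x{:}A$ and $P_2$ using it. Applying the induction hypothesis to each premise, if either reduces internally we are done; otherwise each exhibits an immediate action on one of its free names. If either side is blocked on a name other than $x$, that name is external and its action propagates to $P$; if both are blocked precisely on $x$, inversion forces the two actions to be dual and the synchronization lemma yields $(\nub x)(P_1 \para P_2) \red R$. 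The shared-channel cut is analogous. The right/left-rule cases follow by inversion: a live process formed by a right rule for $z{:}C$ either already exposes the corresponding action on $z$, or its liveness resides in a premise to which the induction hypothesis applies. The \name{$\rgt\downarrow$} and \name{$\lft\downarrow$} cases are immediate, since those rules are silent and leave $P$ unchanged, so the induction hypothesis on the premise transfers verbatim.

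The theorem then follows by specialization. Taking $\G = \D = \cdot$ and $C = \one$, the contextual progress lemma gives that $P$ reduces or $P \downarrow_n$ for some free name $n$. Since $\G$ and $\D$ are empty there is no external name, and since $\one$ admits no right action---the only way to offer $z{:}\one$ is the silent rule \name{$\rgt\one$}, which types the inactive process---inversion yields $P \not\downarrow_z$. Hence the only surviving alternative is $P \red Q$, which coincides with $P \tra{} Q$, as required.

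The main obstacle is the synchronization lemma for the four domain-aware prefixes. What makes it go through is that the reduction semantics of \secref{sec:proc-model} fires these prefixes \emph{independently of the domains carried by their subjects}: for instance $x\langle y@\omega\rangle.P \para x(z@\omega').Q \red P \para Q\{y/z\}$ even when $\omega \neq \omega'$, and likewise for domain output/input. Thus the domain-accessibility constraints, which live purely in the typing judgment and in $\Om$, never block a reduction at the operational level, and $\Om$ is never consulted by reduction. It then remains to check duality connective by connective: offering $@_{\omega_2}A$ is a migration output matched by the migration input of a client that uses the channel; offering $\forall\alpha.A$ is a domain input matched by the client's domain output; and $\exists\alpha.A$ is the symmetric situation. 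The operator $\here{\alpha.A}$ contributes no action and is absorbed by the $\downarrow$ cases above. Verifying these dualities, and confirming that the presence of $\Om$ does not disturb the contextual-progress invariant, is the only genuinely new work beyond the proof of \cite{DBLP:journals/mscs/CairesPT16}.
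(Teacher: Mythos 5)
Your proposal is correct and takes essentially the same route as the paper, whose proof of Theorem~\ref{thrm:prog} is precisely the one inherited from \cite{DBLP:conf/concur/CairesP10,DBLP:journals/mscs/CairesPT16}: a contextual progress lemma proved by induction on typing, with inversion and dual-action synchronization lemmas at \name{$\cut$}, extended to the new domain-aware prefixes, and then specialized to empty contexts and type $\one$. Your closing observation---that reduction fires migration and domain prefixes independently of the domains of their subjects, so $\Omega$ never blocks synchronization and $\here{\alpha.A}$ is silent---is exactly the point the paper builds in when it stipulates that reduction ``allows dual endpoints with the same name to interact, independently of the domains of their subjects.''
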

Note that 
Theorem~\ref{thrm:prog} is without
loss of generality since using the cut rules we can compose arbitrary well-typed processes
together and $x$ need not occur in $P$ due to Rule
\name{$\rgt\one$}.
  
Termination (strong normalization) is a relevant property for
interactive systems: while from a global perspective they are meant to
run forever, at a local level participants should always react within
a finite amount of time, and never engage into infinite internal
behavior.  We say that a process $P$ \emph{terminates}, noted
$P \Downarrow$, if there is no infinite reduction path from $P$.
  
  \begin{theorem}[Termination]
  \label{thrm:sn}
  If $\Omega ; \G ; \D \vdash P :: x{:}A[\omega]$ then $P \Downarrow$.
  \end{theorem}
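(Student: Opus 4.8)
The plan is to prove strong normalization via a logical relations (reducibility) argument, following the standard approach for linear logic proofs as terms and its session-typed incarnation. First I would define a family of reducibility predicates indexed by types and domains, $\mathcal{T}_\omega\llbracket A \rrbracket$, specifying the set of well-typed processes offering $A[\omega]$ that are ``reducible'' (and in particular terminating). The base cases ($\one$, and more generally any type whose immediate observable is a single action) would require that every reduct terminates; the inductive cases would be defined compositionally so that, e.g., a process offering $A \lolli B[\omega]$ is reducible iff, when composed via \name{$\cut$} with any reducible process offering $A[\omega]$, the resulting interaction yields a reducible process offering $B[\omega]$, and dually for $\tensor$, the additives, and $\bang$. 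The crucial novelty here is handling the domain-aware connectives: for $@_{\omega_2}A[\omega_1]$ I would stipulate reducibility of the continuation at the migrated domain $\omega_2$, and for the quantifiers $\forall\alpha.A$ and $\exists\alpha.A$ I would quantify over all accessible domain instantiations, appealing to Domain Substitution (Lemma~\ref{lem:wsubst}) to ensure the predicate is well-defined under the substitution $\{\omega'/\alpha\}$. The $\here{\alpha.A}$ connective is benign since it produces no process action and merely instantiates $\alpha := \omega$, so its reducibility clause simply unfolds to that of $A\{\omega/\alpha\}$.

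Next I would establish the two standard structural properties of the predicates. The first is that reducibility implies termination: $P \in \mathcal{T}_\omega\llbracket A\rrbracket$ entails $P \Downarrow$; this is immediate at base types and follows by a short argument at composite types, using that interaction with a canonical reducible partner exposes each reduction step. The second is closure under anti-reduction (backward closure) and the compatibility of the predicates with structural congruence and with the accessibility-respecting composition enforced by well-formed sequents. Because our operational semantics and typing are conservative extensions of~\cite{DBLP:conf/concur/CairesP10,DBLP:journals/mscs/CairesPT16}, the clauses for the purely linear connectives and the metatheoretic lemmas about them can be imported essentially unchanged.

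The heart of the proof is the Fundamental Lemma: every well-typed process is reducible. Concretely, I would prove by induction on the typing derivation that if $\Omega ; \G ; \D \vdash P :: z{:}A[\omega]$, then for every reducible, accessibility-consistent closing substitution of the linear hypotheses in $\D$ and the shared hypotheses in $\G$ (each respecting the domain annotations and the accessibility constraints recorded in $\Omega$), the resulting closed process lies in $\mathcal{T}_\omega\llbracket A\rrbracket$. Each typing rule corresponds to one case: the right rules feed the reducibility clause of the offered connective, the left rules consume a reducible hypothesis, and \name{$\cut$} is discharged by composing the two inductive hypotheses and invoking backward closure together with the termination-under-interaction property. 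The domain-aware rules (\name{$\rgt@$}, \name{$\lft@$}, \name{$\rgt\forall$}, \name{$\lft\forall$}, \name{$\rgt\exists$}, \name{$\lft\exists$}, \name{$\rgt\downarrow$}, \name{$\lft\downarrow$}) are handled using the corresponding clauses above, with Lemma~\ref{lem:wsubst} supplying the needed typing after domain instantiation and the side-conditions $\Omega \vdash \omega_1 \prec \omega_2$ (resp.\ $\prec^*$) guaranteeing that the composed processes form a well-formed sequent. Theorem~\ref{thrm:sn} then follows by taking identity-like reducible substitutions (which exist, since the predicates are inhabited by forwarders $\linkr{x}{z}$ at the appropriate domains) and applying termination-implies-$\Downarrow$.

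The main obstacle I expect is designing the reducibility clauses for the domain-sensitive connectives so that they are both well-founded and stable under domain substitution. In particular, the $\forall/\exists$ cases must quantify over domains while the candidate set for each instantiation depends on $\omega$ through the accessibility relation; I would resolve this by indexing the predicates by the accessibility environment $\Omega$ and proving a substitution lemma for the predicates themselves, mirroring Lemma~\ref{lem:wsubst}, so that $\mathcal{T}_{\omega\{\omega'/\alpha\}}\llbracket A\{\omega'/\alpha\}\rrbracket$ coincides with the predicate obtained by instantiating $\alpha$. Since domains carry no computational content (domain communication is finitary and the accessibility relation is only consulted in side-conditions), the quantification does not threaten well-foundedness: the induction remains on the structure of the type together with the measure already used for the linear fragment, exactly as in~\cite{DBLP:conf/esop/CairesPPT13,Toninho2011} for type- and data-communication, from which this development differs only in bookkeeping of the domain annotations.
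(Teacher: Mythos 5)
Your proposal is correct and follows essentially the same route as the paper, which proves termination by adapting the linear logical relations of~\cite{DBLP:conf/esop/PerezCPT12,DBLP:journals/iandc/PerezCPT14} extended to carry the accessibility environment $\Omega$, and handles the domain quantifiers as in the polymorphic development of~\cite{DBLP:conf/esop/CairesPPT13}. Your observation that the quantification over domains is harmless for well-foundedness matches the paper's remark that, unlike full polymorphism, no impredicativity concerns arise here.
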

  \begin{proof}[Proof (Sketch)]
    By adapting the \emph{linear} logical relations given
    in~\cite{DBLP:conf/esop/PerezCPT12,DBLP:journals/iandc/PerezCPT14,DBLP:conf/esop/CairesPPT13}.
    For the system in \secref{sec:hill} without quantifiers, the
    logical relations correspond to those
    in~\cite{DBLP:conf/esop/PerezCPT12,DBLP:journals/iandc/PerezCPT14}, extended to carry
    over 
    $\Omega$. When considering quantifiers, the logical relations
    resemble those proposed for polymorphic session types
    in~\cite{DBLP:conf/esop/CairesPPT13}, noting that
    no impredicativity concerns are
    involved. 
  \end{proof}
  
  \noindent{\bf Domain  Preservation.}
As a consequence of the hybrid nature of our system, well-typed
processes are guaranteed not only to faithfully 
perform their prescribed behavior
in a deadlock-free manner, but they also do so without
breaking the constraints put in place on domain accessibility given by
our well-formedness constraint on sequents. 

\begin{theorem}~\label{thrm:NI}
Let $\mathcal E$ be a derivation of $\Omega ; \G ; \D \vdash P ::
z{:}A[\omega]$. If $\Omega ; \G ; \D \vdash P :: z{:}A[\omega]$ is well-formed
then every sub-derivation in $\mathcal E$ well-formed.
\end{theorem}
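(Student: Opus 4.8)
The plan is to show that well-formedness is an invariant of the typing rules when they are read bottom-up, so that it propagates from the end sequent of $\mathcal{E}$ to every sequent occurring in it. First I would isolate the single-step claim: \emph{for each typing rule, if the conclusion sequent is well-formed, then so is each premise sequent}. Granting this, the theorem follows by a routine induction on the structure of $\mathcal{E}$, identifying a sub-derivation with its root sequent: the end sequent is well-formed by hypothesis, each immediate premise is well-formed by the single-step claim, and the induction hypothesis then applies to the sub-derivation rooted at that premise.

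For the single-step claim I would perform a case analysis on the last rule of $\mathcal{E}$, dispatching most cases with two uniform observations. First, whenever a rule leaves the offered domain $\omega_1$ and the linear context $\D$ unchanged from conclusion to premise---as in \name{$\rgt\lolli$}, the additive and $\bang$ rules, and the symmetric \name{$\rgt\downarrow$}/\name{$\lft\downarrow$} rules---or merely partitions $\D$ among premises all offered at $\omega_1$ (as in \name{$\rgt\tensor$}), the condition $\Omega \vdash \omega_1 \prec^* \D$ transfers verbatim, using reflexivity of $\prec^*$ for any hypothesis that is added at the offered domain itself. Second, several rules only \emph{extend} the accessibility environment $\Omega$ (e.g.\ \name{$\rgt\forall$} and \name{$\lft\exists$}); since $\prec^*$ is monotone under such extensions, every reachability fact of the conclusion survives into the premise, so these cases are immediate as well. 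The rules \name{$\lft\forall$} and \name{$\rgt\exists$} also fall here: their explicit accessibility checks constrain the instantiating domain but leave the offered domain and the top-level domain annotation of the principal hypothesis unchanged, so well-formedness is inherited directly.

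The genuinely substantive cases are those that reshape the reachability structure, and here the explicit side conditions are exactly what is needed. In \name{$\cut$}, the left premise offers at the \emph{new} domain $\omega_2$ over $\D_1$, and its well-formedness $\Omega \vdash \omega_2 \prec^* \D_1$ is precisely a side condition; the right premise keeps the offered domain $\omega_1$ but gains $x{:}A[\omega_2]$ in its context, so its well-formedness follows from $\omega_1 \prec^* \D_2$ (inherited) together with the side condition $\omega_1 \prec^* \omega_2$. In \name{$\cpy$} the fresh linear copy $y{:}A[\omega_2]$ is reachable thanks to the side condition $\omega_1 \prec^* \omega_2$, and in \name{$\rgt@$} the premise offers at $\omega_2$, whose well-formedness is the side condition $\omega_2 \prec^* \D$. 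The one case where reachability must be \emph{reconstructed} rather than read off a side condition is \name{$\lft@$}: the premise introduces $y{:}A[\omega_3]$ while extending $\Omega$ with $\omega_2 \prec \omega_3$, so I would combine the inherited fact $\omega_1 \prec^* \omega_2$ (still valid since $\Omega$ only grows) with this new hypothesis and transitivity of $\prec^*$ to derive $\omega_1 \prec^* \omega_3$.

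I expect the main obstacle to be the \name{$\cut$} case, as it is the only rule that simultaneously changes the offered domain (for the left premise) and enlarges a linear context (for the right premise); consequently both accessibility side conditions of \name{$\cut$} must be invoked, one per premise, and neither is redundant. Once this case is handled, the remaining obligations reduce to straightforward appeals to reflexivity, monotonicity, and transitivity of $\prec^*$.
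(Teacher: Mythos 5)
Your proposal is correct and takes essentially the same route as the paper, whose proof of this theorem is precisely the observation (stated when the well-formedness invariant is introduced in \secref{sec:hill}) that every rule read bottom-up preserves well-formedness, with the substantive accessibility checks confined to \name{$\cut$}, \name{$\cpy$}, and \name{$\rgt @$}, and with \name{$\lft @$} recovered by transitivity of $\prec^*$ under the extended $\Omega$ --- exactly the case analysis you carry out. One remark: your \name{$\cut$} case correctly invokes the side condition $\Omega \vdash \omega_2 \prec^* \D_1$ as stated in Fig.~\ref{fig:phillrules}, which is the formulation under which the left premise is well-formed; the inline rendering of \name{$\cut$} in \secref{sec:hill}, which writes $\omega_1 \prec^* \D_1$, would not by itself yield $\omega_2 \prec^* \D_1$, so your reading is the one that makes the invariant go through.
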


While inaccessible domains
 can appear in $\Gamma$,
such channels can never be used and thus can not appear in a
well-typed process due to the restriction on the $(\mathsf{copy})$ rule.
Combining Theorems~\ref{thrm:pres} and \ref{thrm:NI} we can then show
that even if \jpnote{a session in the environment} changes domains, typing ensures that such a domain will be
(transitively) accessible:

\begin{theorem}~\label{thrm:dom}
Let 
(1) $\Omega ; \G ; \D , \D' \vdash (\nub x)(P \mid Q) :: z:A[\omega]$, (2)
 $\Omega ; \G ; \D \vdash P :: x{:}B[\omega'']$, and (3)
 $\Omega ; \G ; \D' , x{:}B[\omega'] \vdash Q :: z{:}A[\omega]$.
If $(\nub
x)(P \mid Q) \tra{} (\nub x)(P' \mid Q')$ then:
(a) $\Omega ; \G ; \D \vdash P' :: x'{:}B'[\omega'']$, for some
$x',B',\omega''$;
(b) $\Omega ; \G , \D' , x'{:}B'[\omega''] \vdash Q' :: z{:}A[\omega]$;
(c) $\omega \prec^* \omega''$.
\end{theorem}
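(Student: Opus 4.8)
The plan is to read claims (a) and (b) off the proof of Type Preservation (Theorem~\ref{thrm:pres}) and to establish the genuinely new content, claim (c), by tracking the domain of the cut channel across the reduction. First I would invert \name{$\cut$} on (1): its accessibility premise gives $\Omega \vdash \omega \prec^* \omega''$ for the domain $\omega''$ at which the cut channel $x$ is typed, and this is the pre-reduction instance of (c). I would then case-split on how $(\nub x)(P \mid Q) \tra{} (\nub x)(P' \mid Q')$ arises, reusing the decomposition of reductions-under-restriction from the proof of Theorem~\ref{thrm:pres}: the step is internal to $P$ (with $Q' \equiv Q$), internal to $Q$ (with $P' \equiv P$), or a synchronization of $P$ and $Q$ on $x$. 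Recall that the $\omega''$ appearing in (a)--(c) is existentially quantified, so it may be instantiated to a new carrier domain when the reduction is a migration.

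For (a) and (b): in each case the proof of Theorem~\ref{thrm:pres} already exhibits $P'$ and $Q'$ together with a common interface type $x'{:}B'[\omega'']$ through which the reduct recomposes via \name{$\cut$} — this is exactly the content of the lemmas relating dual actions with cut-typable compositions on which that proof rests. Thus (a) and (b) are inherited, and the only remaining work is the accessibility claim (c).

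For (c): in the two internal cases and in every synchronization on a purely linear-logical connective ($\lolli/\tensor$, $\with/\oplus$, $\one$, $\bang$, and the silent $\here{}$ rules) the continuation on the cut channel stays at the same domain, so $\omega''$ is unchanged and (c) is the instance derived above. The crux is a synchronization that is a \emph{domain migration} or a \emph{domain communication}. If $P = \overline{x}\langle y @ \omega_2\rangle.P_0$ and $Q = x(y@\omega_2).Q_0$, then $P$ is typed by \name{$\rgt @$} offering $x{:}@_{\omega_2}B'[\omega'']$ and $Q$ by \name{$\lft @$}; the redex reduces to $P_0 :: y{:}B'[\omega_2]$ and $Q_0$ using $y{:}B'[\omega_2]$, so the session migrates and the fresh $\omega_2$ becomes the new carrier domain (instantiating the conclusion's $\omega''$). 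Here the premise $\Omega \vdash \omega'' \prec \omega_2$ of \name{$\rgt @$} is exactly what is needed: chaining it with the pre-reduction $\omega \prec^* \omega''$ gives $\omega \prec^* \omega_2$ by transitivity of $\prec^*$, which is (c). The quantifier cases are analogous, extracting the direct-accessibility premise $\Omega \vdash \omega'' \prec \omega_3$ from \name{$\lft \forall$} (resp. \name{$\rgt \exists$}) for the transmitted domain $\omega_3$ and appealing to Domain Substitution (Lemma~\ref{lem:wsubst}) to retype the continuations after instantiating the bound domain variable.

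I expect the main obstacle to lie in this last case, for two reasons. First, one must correctly invert the dual domain-aware rules, track the renaming of the cut channel to the fresh $y$, and discharge the accessibility hypothesis $\omega'' \prec \omega_2$ (resp. $\omega'' \prec \omega_3$) that \name{$\lft @$} (resp. \name{$\lft \forall$}) adds to $\Omega$, using that this fact is validated by the corresponding right-rule premise — precisely the step already handled in Theorem~\ref{thrm:pres}. Second, and most importantly, the argument for (c) hinges on the interplay between the \emph{direct} accessibility step recorded by the synchronizing rule and the \emph{transitive} accessibility guaranteed by well-formedness (Theorem~\ref{thrm:NI}): the chaining $\omega \prec^* \omega'' \prec \omega_2 \Rightarrow \omega \prec^* \omega_2$ is what ultimately ensures that sessions may change domains only to domains that remain accessible from $\omega$.
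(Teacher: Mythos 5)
Your proposal is correct and follows essentially the same route as the paper, which proves this result simply by combining Theorem~\ref{thrm:pres} with Theorem~\ref{thrm:NI}: type preservation (via the reduction lemmas) reconstitutes the reduct as a \name{$\cut$} whose premises yield (a) and (b) directly and whose accessibility premise $\Omega \vdash \omega \prec^* \omega''$ is exactly (c), with the migration case discharged by chaining the \name{$\rgt @$} premise $\omega'' \prec \omega_2$ with transitivity, just as you describe. The only minor imprecision is your grouping of the quantifier cases with migration: under $\forall/\exists$ the carrier domain of the cut channel is unchanged (only the type is instantiated via Lemma~\ref{lem:wsubst}), so (c) there is the trivial, pre-reduction instance rather than a new chaining step.
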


\section{Domain-Aware Multiparty Session Types}
\label{sec:hmpst}

We now shift our attention to multiparty session
types~\cite{DBLP:conf/popl/HondaYC08}.  We consider the standard
ingredients: \emph{global types}, \emph{local types}, and the
\emph{projection function} that connects the two.  Our global types
include a new domain-aware construct,
$\gmoves{\pt{p}}{\ptset{q}}{\omega}{G_1}{G_2}$; our local types
exploit the hybrid session types from \defref{d:hillprops}.  Rather
than defining a separate type system based on local types for the
process model of \secref{sec:proc-model}, our analysis of multiparty
protocols extends the approach defined
in~\cite{DBLP:conf/forte/CairesP16}, which uses \emph{medium
  processes} to characterize correct multiparty implementations. The
advantages are twofold: on the one hand, medium processes provide a
precise semantics for global types; on the other hand, they enable the
principled transfer of the correctness properties established in
\secref{sec:hill} for binary sessions (type preservation, global
progress, termination, domain preservation) to the multiparty setting.
Below,
\emph{participants} 
 are ranged over by $\pt{p},  \pt{q}, \pt{r},   \ldots$;  
 we write $\ptset{\pt{q}}$ to denote a finite set of participants $\pt{q}_1, \ldots, \pt{q}_n$.

Besides the new domain-aware global type, 
our syntax of global types includes constructs from \cite{DBLP:conf/popl/HondaYC08,DBLP:conf/icalp/DenielouY13}.
We consider value  passing 
in branching (cf. $U$ below), fully supporting delegation. 
 To streamline the presentation,  we consider 
  global types without recursion.

\begin{definition}[Global and Local Types]\label{d:gltypes}
Define global types ($G$) and local types ($T$) as

\vspace{-0.4cm}
{\small\[
\begin{array}{rcl}
&U  ::=    & \mathsf{bool} \sep \mathsf{nat} \sep \mathsf{str} \sep\ldots \sep T
\\
&G  ::= &  \gend \sep \gto{p}{q}\{\lb{l}_i\langle U_i\rangle.G_i\}_{i \in I} 
          \sep {\gmoves{\pt{p}}{\ptset{q}}{\omega}{G_1}{G_2}}
\\
&T  ::= &  \lend \sep \mathtt{p}?\{\lb{l}_i\langle U_i\rangle.T_i\}_{i \in I} \sep \mathtt{p}!\{\lb{l}_i\langle U_i\rangle.T_i\}_{i \in I}  
          \sep {\forall \alpha.  T \sep \exists \alpha. T}
  \sep  {  @_\alpha T } \sep {\here{\alpha.T}}
\end{array}
\]}
\end{definition}

The completed global type is denoted $\gend$.
Given a finite $I$  and pairwise different labels,
$ \gto{p}{q}\{\lb{l}_i\langle U_i\rangle.G_i\}_{i \in I}$ specifies that by choosing label $\lb{l}_i$,
participant
 \pt{p} may send a message of type $U_i$ to 
  participant
 \pt{q}, and then continue as  $G_i$.
 We decree $\pt{p} \neq \pt{q}$, 
 so reflexive interactions are disallowed. 
  The global type \gmoves{\pt{p}}{\ptset{\pt{q}}}{\omega}{G_1}{G_2}
  specifies the  migration of participants $\pt{p}, \ptset{\pt{q}}$ to domain $\omega$ in order to perform the \emph{sub-protocol}~$G_1$; this migration is lead by $\pt{p}$. Subsequently, all of $\pt{p}, \ptset{\pt{q}}$ migrate  from $\omega$  back to their original domains and protocol $G_2$ is executed. 
  This intuition will be made precise by the medium processes for global types (cf.~\defref{d:ether}).
  Notice that $G_1$ and $G_2$ may involve different sets of  participants.
  In writing  $\gmoves{\pt{p}}{\ptset{\pt{q}}}{\omega}{G_1}{G_2}$ we assume two natural conditions: 
  (a)~all migrating participants intervene in the sub-protocol (i.e., the set of participants of $G_1$ is exactly $\pt{p}, \ptset{\pt{q}}$)
  and 
  (b)~domain $\omega$ is accessible (via $\prec$) by all these migrating participants in $G_1$.
  \btnote{While subprotocols and session delegation may appear as similar, delegation supports a
    different idiom altogether, and has no support for domain
    awareness. Unlike delegation, with subprotocols we can specify a point
    where some of the participants perform a certain
    protocol {\em within the same multiparty session} and then
    return to the main session as an ensemble.}

  \begin{definition}
  The set of participants of $G$ (denoted \partp{G}) is defined as:
~$\partp{\gend} 
=\emptyset$,
$\partp{\gto{p}{q}\{\lb{l}_i\langle U_i\rangle.G_i\}_{i \in I}} = \{\mathtt{p} , \mathtt{q}\} \cup \bigcup_{i \in I} \partp{G_i}$,
 {$\partp{\gmoves{\pt{p}}{\ptset{\pt{q}}}{\omega}{G_1}{G_2}} = \{\pt{p}\} \cup \ptset{\pt{q}} \cup \partp{G_1} \cup \partp{G_2}$}.
We sometimes write $\pt{p} \in G$ to mean $\pt{p} \in \partp{G}$.
\end{definition}

Global types are projected onto participants so as to obtain local types. 
  The terminated local type is  $\lend$.
    The local type $ \mathtt{p}?\{\lb{l}_i\langle U_i\rangle.T_i\}_{i \in I}$
  denotes an offer of a  set of labeled alternatives; 
  the 
  local type  $\mathtt{p}!\{\lb{l}_i\langle U_i\rangle.T_i\}_{i \in I}$ denotes a behavior that chooses one of such alternatives.
 {Exploiting the domain-aware framework in \secref{sec:hill}, we introduce four new local types. 
They increase the expressiveness of standard local types by specifying
universal and existential quantification over domains ($\forall \alpha. T$ and $\exists \alpha. T$),  migration to a specific domain ($@_\alpha T$), and a  reference to the current domain ($\here{\alpha.\,T}$, with $\alpha$ occurring in $T$).}

We now define \emph{(merge-based) projection} for global types~\cite{DBLP:conf/icalp/DenielouY13}.
To this end, we rely on a \emph{merge} operator on local types, which
in our case considers messages $U$.

\begin{definition}[Merge]\label{d:mymerg}
We define $\fuse$ 
as the commutative partial operator 
on base and local types such that
$\mathsf{bool} \fuse \mathsf{bool} =  \mathsf{bool}$ (and analogously for other base types), and 
\begin{enumerate}
\item $T \fuse T = T$, where $T$ is one of the following: $\lend$,  $\pt{p}!\{\lb{l}_i \langle U_i\rangle.T_i\}_{i \in I}$, $@_\omega T$, $\forall \alpha. T$, or $\exists\alpha. T$; 


\item $\pt{p}?\{\lb{l}_k\langle U_k\rangle.T_k\}_{k \in K} \fuse \mathtt{p}?\{\lb{l}'_j\langle U'_j\rangle.T'_j\}_{j \in J} =$ \\ 
 $~\qquad \pt{p}?\big(\{\lb{l}_k\langle U_k\rangle.T_k\}_{k \in K\setminus J} \cup 
        \{\lb{l}'_j\langle U'_j\rangle.T'_j\}_{j \in J\setminus K}  \cup \{\lb{l}_l\langle U_l \fuse U'_l \rangle.(T_l \fuse T'_l)\}_{l \in K\cap J}\big)
        $ 
\end{enumerate}
and is undefined otherwise.
\end{definition}

\noindent Therefore, 
for $U_1 \fuse U_2$ to be defined there are two options:
(a) $U_1$ and $U_2$ are identical base, terminated, selection, or ``hybrid'' local types;
 (b) $U_1$ and $U_2$ are branching types, but not necessarily identical:
they may offer different options but with the condition that the
behavior in labels occurring in both $U_1$ and $U_2$ must be mergeable.

To define projection and medium processes for the new global type $\gmoves{\pt{p}}{\ptset{\pt{q}}}{\omega}{G_1}{G_2}$, we require 
ways of ``fusing'' local types and processes. The intent is to  capture in a single (sequential) specification the behavior of two distinct (sequential) specifications, i.e., those corresponding to protocols $G_1$ and $G_2$. For local types, we have 
the following definition, \btnote{which safely appends a local type to another:}
\begin{definition}[Local Type Fusion]
\label{d:ltfuse}
The \emph{fusion} of $T_1$ and $T_2$, written $T_1 \ltfuse T_2$, is
given by:

\vspace{-0.5cm}
{\small\[
  \begin{array}{lcllcl}
\mathtt{p}!\{\lb{l}_i\langle U_i\rangle.T_i\}_{i \in I}  \ltfuse T  & =&
\mathtt{p}!\{\lb{l}_i\langle U_i\rangle.(T_i \ltfuse T) \}_{i \in I} 
    &
      \lend \ltfuse T & = & T\\
\mathtt{p}?\{\lb{l}_i\langle U_i\rangle.T_i\}_{i \in I} \ltfuse T & = &
\mathtt{p}?\{\lb{l}_i\langle U_i\rangle.(T_i \ltfuse T) \}_{i \in I} 
&
(\exists \alpha.   T_1) \ltfuse T & = &
\exists \alpha.   (T_1 \ltfuse T)
\\
(\forall \alpha.   T_1) \ltfuse T & = &
\forall \alpha.   (T_1 \ltfuse T)
&
  (@_\alpha T_1) \ltfuse T & = &@_\alpha (T_1 \ltfuse T)\\
    (\here{\alpha.T_1}) \ltfuse T & = &\here{\alpha.(T_1\ltfuse T)}                                   
  \end{array}
  \]}
\end{definition}
This way, e.g., if $T_1 = \exists \alpha.@_\alpha\,\mathtt{p}?\{\lb{l}_1\langle \mathsf{Int} \rangle.\lend \, , \lb{l}_2\langle \mathsf{Bool} \rangle.\lend\}$
and $T_2 = @_\omega \, \mathtt{q}!\{\lb{l}\langle \mathsf{Str}\rangle.\lend\}$, then
$T_1 \ltfuse T_2 = \exists \alpha.@_\alpha\,\mathtt{p}?\{\lb{l}_1\langle \mathsf{Int} \rangle.@_\omega \, \mathtt{q}!\{\lb{l}\langle \mathsf{Str}\rangle.\lend\} \, , \lb{l}_2\langle \mathsf{Bool} \rangle.@_\omega \, \mathtt{q}!\{\lb{l}\langle \mathsf{Str}\rangle.\lend\}\}$.
We can now define:


\begin{definition}[Merge-based Projection~\cite{DBLP:conf/icalp/DenielouY13}]\label{d:proj}
Let $G$ be a global type.
The \emph{merge-based projection} of $G$ under participant $\pt{r}$, 
denoted \proj{G}{\pt{r}}, is defined as 
$\proj{\gend}{\pt{r}}  =  \lend$ and 
\begin{itemize}

\item $\proj{\gto{p}{q}\{\lb{l}_i\langle U_i\rangle.G_i\}_{i \in I}}{\pt{r}}  = 
{\small\begin{cases}
\pt{p}!\{\lb{l}_i \langle U_i\rangle.\proj{G_i}{\pt{r}}\}_{i \in I} & \text{if $\pt{r} = \pt{p}$} \\
\pt{p}?\{\lb{l}_i \langle U_i\rangle.\proj{G_i}{\pt{r}}\}_{i \in I} & \text{if $\pt{r} = \pt{q}$} \\
\fuse_{i \in I} \,\proj{G_i}{\pt{r}} & \text{otherwise ($\fuse$ as in \defref{d:mymerg})} 
\end{cases}} $

											
\item {$\proj{(\gmoves{\pt{p}}{\ptset{\pt{q}}}{\omega}{G_1}{G_2})}{\pt{r}} =
{\small\begin{cases}
\here{\beta.(\exists \alpha. @_\alpha \,\proj{G_1}{\pt{r}}) \ltfuse @_\beta \, \proj{G_2}{\pt{r}}}  & \text{if $\pt{r} = \pt{p}$}
\\
\here{\beta.(\forall \alpha. @_\alpha \, \proj{G_1}{\pt{r}}) \ltfuse @_\beta \, \proj{G_2}{\pt{r}}}  & \text{if $\pt{r} \in \ptset{\pt{q}}$}
\\
\proj{G_2}{\pt{r}}  & \text{otherwise}
\end{cases}} $}

\end{itemize}
When \btnote{no side condition} holds, the map is undefined.
\end{definition}

 {\noindent The  projection for the   type $\gmoves{\pt{p}}{\ptset{\pt{q}}}{w}{G_1}{G_2}$ is one of the key points in our analysis. 
The local type for $\pt{p}$, the leader of the migration, starts by binding the identity of its current domain (say, $\omega_\pt{p}$) to $\beta$. 
Then, the (fresh) domain  $\omega$ is communicated, and there is a migration step to $\omega$, which is where protocol $\proj{G_1}{\pt{p}}$ will be performed. Finally, there is a migration step from $\omega$ back to $\omega_\pt{p}$; once there, the protocol $\proj{G_2}{\pt{p}}$ will be performed.
The local type for all of $\pt{q}_i \in \ptset{\pt{q}}$ follows accordingly: they expect $\omega$ from $\pt{p}$; the migration from their original domains to $\omega$ (and back) is as for $\pt{p}$. For participants in  $G_1$, the fusion on local types (\defref{d:ltfuse}) defines a local type that includes the actions for $G_1$ but also for $G_2$, if any: a participant  in $G_1$ need not be involved in $G_2$.  
Interestingly, the resulting local types 
$\here{\beta.(\exists \alpha. @_\alpha \,\proj{G_1}{\pt{p}}) \ltfuse @_\beta \, \proj{G_2}{\pt{p}}}$
and
$\here{\beta.(\forall \alpha. @_\alpha \,\proj{G_1}{{\pt{q}_i}}) \ltfuse @_\beta \, \proj{G_2}{{\pt{q}_i}}}$
define a precise combination of hybrid connectives whereby each migration step is bound  by a quantifier or the current domain.
}

The following notion of \emph{well-formedness} for global types is standard:

\begin{definition}[Well-Formed Global Types~\cite{DBLP:conf/popl/HondaYC08}]\label{d:wfltypes}
We say that global type $G$ is \emph{well-formed (WF, in the following)} if 
the   projection $\proj{G}{\pt{r}}$ is defined for all $\pt{r} \in G$.
\end{definition}

\subparagraph{Analyzing Global Types via Medium Processes}
A \emph{medium process} is a well-typed process from \secref{sec:proc-model} that captures the communication behavior of the domain-aware global types of \defref{d:gltypes}. Here we define medium processes and establish two fundamental characterization results for them (Theorems \ref{l:ltypesmedp} and \ref{l:medltypes}).
We shall consider names \emph{indexed by participants}: given a name $c$ and a participant $\pt{p}$, 
we use $c_{\pt{p}}$ to denote the name along which the session behavior of $\pt{p}$ will be made available. 
This way, if  $\pt{p} \neq \pt{q}$ then $c_{\pt{p}} \neq c_{\pt{q}}$.
%
%
  %
To define mediums, we need to \btnote{append or fuse} sequential processes, just as \defref{d:ltfuse} fuses local types:

\begin{definition}[Fusion of Processes]
\label{d:prfuse}
We define $\prfuse$ as the partial operator on well-typed processes
such that (with $\pi \in \{c(y), c\langle \omega \rangle, c(\alpha),
c\langle y @ \omega\rangle, c(y @ \omega), \mysel{c}{\lb{l}}\}$) :

\vspace{-0.5cm}
{\small\[
  \begin{array}{lclrcl}

c\out{y}.(\linkr{u}{y} \mid P) \prfuse Q & \triangleq &
                                                        c\out{y}.(\linkr{u}{y} \mid (P\prfuse Q))
    &           \mathbf{0} \prfuse Q & \triangleq &Q\\                                        

\mycasebig{c}{\lb{l}_i : P_i}{i \in I} \prfuse Q & \triangleq &
                                                                \mycasebig{c}{\lb{l}_i
                                                                : (P_i
                                                                \prfuse
                                                                Q) }{i
                                                                \in I}

&
(\pi.P) \prfuse Q & \triangleq & \pi.(P \prfuse Q)                                           
  \end{array}
  \]}
and is undefined otherwise.
\end{definition}
The previous definition suffices to define a medium process (or simply \emph{medium}), which 
uses indexed names to uniformly capture the behavior of a global type:

\begin{definition}[Medium Process]\label{d:ether}
Let $G$ be a global type (cf. \defref{d:gltypes}),
$\tilde{c}$ be a set of indexed names, and $\tilde{\omega}$ a set of domains.
The \emph{medium} of $G$, denoted \etherp{G}{\tilde{c}}{\tilde{\omega}}, is  defined as:

\vspace{-0.3cm}
{\small\[
\begin{cases}
\mathbf{0} & 
\text{if $G = \gend$}
\vspace{1mm}
\\
\mycasebig{c_\pt{p}}{\lb{l}_i :
  c_\pt{p}(u).\mysel{c_\pt{q}}{\lb{l}_i};\outp{c_\pt{q}}{v}.(
  \linkr{u}{v} \para 
\etherp{G_i}{\tilde{c}}{\tilde{\omega}} )}{i \in I}
& 
\text{if $G = \gto{p}{q}\{\lb{l}_i\langle U_i\rangle.G_i\}_{i \in I}$}
\vspace{1mm} \\
c_\pt{p}(\alpha).c_{\pt{q}_1}\out{\alpha}. \cdots .c_{\pt{q}_n}\out{\alpha}. 
& 
\text{if $G = \gmoves{\pt{p}}{\pt{q}_1, \ldots, \pt{q}_n}{w}{G_1}{G_2}$}
\\
\quad
c_\pt{p}(y_\pt{p}@\alpha).c_{\pt{q}_1}(y_{\pt{q}_1}@\alpha).\cdots.c_{\pt{q}_n}(y_{\pt{q}_n}@\alpha).
&
\\
\quad \quad \etherp{G_1}{\tilde{y}}{\tilde{\omega}\{\alpha/\omega_{\pt{p}},
  \dots , \alpha/\omega_{\pt{q}_n}\} } \,\,\prfuse 
  \\
 \quad \qquad
 (y_\pt{p}(m_\pt{p}@\omega_\pt{p}). y_{\pt{q}_1}(m_{\pt{q}_1}@\omega_{\pt{q}_1}).
\cdots. y_{\pt{q}_n}(m_{\pt{q}_n}@\omega_{\pt{q}_n}).
\\
\qquad \quad\quad\etherp{G_2}{\tilde{m}}{\tilde{\omega}}) & 
\end{cases} 
\]}
where $\etherp{G_1}{\tilde{c}}{\tilde{\omega}} \prfuse \etherp{G_2}{\tilde{c}}{\tilde{\omega}}$ 
is as in \defref{d:prfuse}.
\end{definition}

The medium  for
$G = \gto{p}{q}\{\lb{l}_i\langle U_i\rangle.G_i\}_{i \in I}$
exploits four prefixes to mediate in the interaction between the implementations of $\pt{p}$ and $\pt{q}$:
the first two prefixes (on name $c_\pt{p}$) capture the label selected
by $\pt{p}$ and the subsequently received value; the third and fourth
prefixes (on name $c_\pt{q}$) propagate the choice and forward the
value sent by $\pt{p}$ to $\pt{q}$. \btnote{We omit the forwarding and
  value exchange when the interaction does not involve a value payload.}
  
The medium for 
$G = \gmoves{\pt{p}}{\pt{q}_1, \ldots, \pt{q}_n}{w}{G_1}{G_2}$
 showcases the expressivity and convenience of our domain-aware
process framework. In this case, the medium's behavior takes place
through the following steps: First, $\etherp{G}{\tilde{c}}{\tilde{\omega}}$ inputs a domain identifier
(say, $\omega$) from  $\pt{p}$ which is forwarded to $\pt{q}_1,
\ldots, \pt{q}_n$, the other participants of $G_1$. Secondly, the
roles $\pt{p}, \pt{q}_1, \ldots, \pt{q}_n$ migrate from their domains 
$\omega_{\pt{p}}, \omega_{\pt{q}_1}\ldots, \omega_{\pt{q}_n}$ to
$\omega$.
At this point, the medium for $G_1$ can execute, keeping track the
current domain $\omega$ for all participants. Finally, the
participants of $G_1$ migrate back to their original domains and the
medium for $G_2$ executes.

Recalling the domain-aware global type of 
\secref{sect:intro}, we produce its medium process:

\vspace{-0.1cm}
{\small\begin{tabbing}
     $\mycasebig{c_\pt{cl}}{\lb{request} : c_\pt{cl}(r).
     \mysel{c_\pt{mw}}{\lb{request}};
     \outp{c_\pt{mw}}{v}.(\linkr{r}{v} \mid$
     \\
     \quad $\mycasebig{c_\pt{mw}}{$\=$\lb{reply} :
       c_\pt{mw}(a).\mysel{c_\pt{cl}}{\lb{reply}};
       \outp{c_\pt{cl}}{n}.(\linkr{a}{n} \mid
       \mycasebig{c_\pt{mw}}{\lb{done} : \mysel{c_\pt{serv}}{\lb{done}};\zero }{} ),$
       \\
       \>$ \lb{wait} : \btnote{\mysel{c_\pt{cl}}{\lb{wait}};}\mycasebig{c_\pt{mw}}{$\=$\lb{init} :
         \mysel{c_\pt{serv}}{\lb{init}};
         c_\pt{mw}(w_\pt{priv}).c_\pt{serv}\langle
         w_\pt{priv}\rangle.$  \\
         \>\>$c_\pt{mw}(y_\pt{mw}@ w_\pt{priv}).
         c_\pt{serv}(y_\pt{serv}@ w_\pt{priv}).
         \etherp{\mathit{Offload}}{y_\pt{mw},y_\pt{serv}}{\tilde{w_\pt{priv}}}
         \,\,\prfuse$\\
         \>\>$(y_\pt{mw}(z_\pt{mw}@w_\pt{mw}).
         y_\pt{serv}(z_\pt{serv}@w_\pt{serv}).$\\
         \>\>$\mycasebig{z_\pt{mw}}{\lb{reply} : z_\pt{mw}(a).
         \mysel{c_\pt{cl}}{\lb{reply}};\outp{c_\pt{cl}}{n}.(\linkr{a}{n}
       \mid \zero)}{} )
       }{}  }{}  ) }{}$
 \end{tabbing}}
The medium ensures the client's domain remains fixed through the entire
 interaction, regardless of whether the middleware chooses to
 interact with the server. 
 This showcases
 how our medium transparently manages domain migration of participants.


\vspace{-0.4cm}
\subparagraph{Characterization Results}
We state results that offer a sound and complete account of the relationship between: 
(i)~a global type $G$ (and its  local types), 
(ii)~its  medium process $\etherp{G}{\tilde{c}}{\tilde{\omega}}$, and 
(iii)~process implementations for the participants $\{\pt{p}_1, \ldots, \pt{p}_n\}$ of $G$.
In a nutshell, 
these results say that 
the typeful composition of $\etherp{G}{\tilde{c}}{\tilde{\omega}}$ with 
processes for each $\pt{p}_1, \ldots, \pt{p}_n$ (well-typed in the system of \secref{sec:hill}) performs the intended global type. 
Crucially, these  processes reside in distinct domains and can be independently developed, guided by their local type---they need not know about the medium's existence or structure.
The results generalize those in~\cite{DBLP:conf/forte/CairesP16} to the domain-aware setting.
Given a  global type $G$ with $\partp{G} = \{\pt{p}_1, \ldots, \pt{p}_n\}$, 
 below we write $\npart{G}$ to denote the set of indexed names
 $\{c_{\pt{p}_1}, \ldots, c_{\pt{p}_n}\}$. 
We  define: 

\begin{definition}[Compositional Typing]\label{d:compty}
We say
$\Omega; \G; \D \vdash \etherp{G}{\tilde{c}}{\tilde{\omega}} ::z{:}C$
is a \emph{compositional typing} 
if:  (i)~it is a valid typing derivation;
(ii)~$ \npart{G} \subseteq \dom{\D}$; 
and
 (iii)~$C = \one$. 
\end{definition}

\noindent A compositional typing 
says that  $\etherp{G}{\tilde{c}}{\tilde{\omega}}$ depends on
behaviors associated to each participant of $G$; 
it also specifies that $\etherp{G}{\tilde{c}}{\tilde{\omega}}$
does not offer any behaviors of its own.

The following definition relates 
binary session types and local types: 
the main difference is that
the former do not mention participants.
Below,   $B$ ranges over base types ($\mathsf{bool}, \mathsf{nat}, \ldots$). 
\begin{definition}[Local Types$\to$Binary Types]\label{d:loclogt}
Mapping \lt{\cdot} from local types $T$ (\defref{d:gltypes}) 
into binary types $A$ (\defref{d:hillprops}) is 
inductively defined as $\lt{\lend} = \lt{B} =  \one$ and

\vspace{-0.4cm}
{\small
  \[
\begin{array}{lcllcl}
\lt{\mathtt{p}!\{\lb{l}_i\langle U_i\rangle.T_i\}_{i \in I}}  & = &  \myoplus{\lb{l}_i : \lt{U_i} \otimes \lt{T_i}}{i \in I} 
  &
    \lt{\forall \alpha.  T} & =  &\forall \alpha.  \lt{T}
\\
 \lt{\mathtt{p}?\{\lb{l}_i\langle U_i\rangle.T_i\}_{i \in I}} & =  &  \mywith{\lb{l}_i : \lt{U_i} \lolli \, \lt{T_i}}{i \in I} 
&

\lt{\exists \alpha.  T} & = & \exists \alpha.   \lt{T}
\\
  \lt{  @_\omega T} & = &  @_\omega \lt{T}
  &
  \lt{ \here{\alpha.T}} & = &\,\here{\alpha.\lt{T}}
\end{array}
\]
}
\end{definition}

Our first characterization result ensures that well-formedness of a global type $G$  guarantees the typability of its medium  $\etherp{G}{\tilde{c}}{\tilde{\omega}}$ using binary session types. Hence, it ensures that multiparty protocols can be analyzed by composing the medium with independently obtained, well-typed implementations for each protocol participant. Crucially, the resulting well-typed process will inherit all correctness properties ensured by binary typability established in \secref{sec:hill}.

\begin{restatable}[Global Types $\to$ Typed Mediums]{theorem}{thmtypesmedp}\label{l:ltypesmedp}
If  $G$ is 
WF
with $\partp{G}\!= \{\pt{p}_1, \ldots, \pt{p}_n\}$
then 
$
\Omega; \G; c_{\pt{p}_1}{:}\lt{\proj{G}{\pt{p}_1}}[\omega_1], \ldots, c_{\pt{p}_n}{:}\lt{\proj{G}{\pt{p}_n}}[\omega_n] \vdash  \etherp{G}{\tilde{c}}{\tilde{\omega}}::z:\one[\omega_m]
$
 is a compositional typing,  
for some $\Omega$, $\G$, with $\tilde{\omega} = \omega_1, \ldots, \omega_n$.
We assume that $\omega_i \prec \omega_m$
 for all $i \in \{1, \ldots, n\}$ (the medium's domain is accessible by all), and that $i\neq j$ implies $\omega_i \neq \omega_j$.
\end{restatable}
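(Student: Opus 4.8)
The plan is to argue by structural induction on $G$, following the three clauses of \defref{d:gltypes} and exploiting the compositional shape of \etherp{G}{\tilde{c}}{\tilde{\omega}} (\defref{d:ether}). Two auxiliary facts are needed, each by a routine induction on types: (i)~the translation $\lt{\cdot}$ (\defref{d:loclogt}) commutes with fusion, so that $\lt{T_1 \ltfuse T_2}$ is obtained from $\lt{T_1}$ by splicing $\lt{T_2}$ in place of its trailing $\one$ (mirroring \defref{d:ltfuse}); and (ii)~a \emph{process-fusion typing} lemma stating that if a sub-medium is a compositional typing offering $z{:}\one$ once its linear channels are exhausted, then fusing it with a well-typed continuation through $\prfuse$ (\defref{d:prfuse}) stays typable, with the shared channels now carrying the $\ltfuse$-combined local types. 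These are exactly what let the $\prfuse$/$\ltfuse$ pairing in \defref{d:ether} and \defref{d:proj} be typed uniformly.

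For the base case $G = \gend$ we have $\partp{G} = \emptyset$ and $\etherp{G}{\tilde{c}}{\tilde{\omega}} = \mathbf{0}$, so the goal $\Omega;\G;\cdot \vdash \mathbf{0}::z{:}\one[\omega_m]$ is closed by \name{$\rgt\one$}. For $G = \gto{p}{q}\{\lb{l}_i\langle U_i\rangle.G_i\}_{i \in I}$ I read the medium's leading prefixes as left rules on the two active channels. Since $\lt{\proj{G}{\pt{p}}}$ is an $\oplus$-type, the branching on $c_\pt{p}$ is typed by \name{$\lft\oplus$} and the input $c_\pt{p}(u)$ by \name{$\lft\tensor$}; dually, since $\lt{\proj{G}{\pt{q}}}$ is a $\with$-type, the selection $\mysel{c_\pt{q}}{\lb{l}_i}$ is typed by \name{$\lft\with$} and the output $\outp{c_\pt{q}}{v}$ by \name{$\lft\lolli$}, whose left premise supplies the emitted argument through $\linkr{u}{v}$ (typed by \name{$\id$}, linking the received $u{:}\lt{U_i}$ to $v$) in parallel with the continuation $\etherp{G_i}{\tilde{c}}{\tilde{\omega}}$, to which the IH applies. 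For the bystander participants $\pt{r}\notin\{\pt{p},\pt{q}\}$, whose projection is the merge $\fuse_{i\in I}\proj{G_i}{\pt{r}}$, I use the subtyping induced by $\fuse$ (\defref{d:mymerg}) to employ the single merged channel in each branch. The domain side-conditions of the rules follow from the ambient accessibility hypotheses relating the $\omega_i$ and $\omega_m$, and the continuations stay at the same domains.

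The migration case $G = \gmoves{\pt{p}}{\pt{q}_1, \ldots, \pt{q}_n}{w}{G_1}{G_2}$ is the crux. By fact (i), $\lt{\proj{G}{\pt{p}}} = \here{\beta.\exists\alpha.@_\alpha\,(\lt{\proj{G_1}{\pt{p}}} \ltfuse @_\beta\,\lt{\proj{G_2}{\pt{p}}})}$ and, for each $\pt{q}_j$, the analogous type with $\forall$ for $\exists$; note that the target $w$ does not occur, being abstracted as the communicated domain. I type the prefixes left-to-right: the leading \name{$\lft\downarrow$} steps instantiate each $\beta$ with the channel's current domain; \name{$\lft \exists$} types $c_\pt{p}(\alpha)$, introducing a fresh $\alpha$; each forwarding $c_{\pt{q}_j}\out{\alpha}$ is typed by \name{$\lft \forall$}; and the migration prefixes $c_\pt{p}(y_\pt{p}@\alpha)$, $c_{\pt{q}_j}(y_{\pt{q}_j}@\alpha)$ by \name{$\lft @$}, relocating each channel to $\alpha$. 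Since the participants of $G_1$ are precisely $\pt{p},\pt{q}_1,\ldots,\pt{q}_n$ (a well-formedness condition on the migration construct), they now all reside at $\alpha$ and the IH applies to $\etherp{G_1}{\tilde{y}}{\tilde{\omega}\{\alpha/\cdots\}}$. The trailing migrate-back prefixes $y_\pt{r}(m_\pt{r}@\omega_\pt{r})$ (again \name{$\lft @$}) followed by $\etherp{G_2}{\tilde{m}}{\tilde{\omega}}$ (the IH for $G_2$) form the continuation, and the two sub-mediums are glued by fact (ii), matching $\prfuse$ against the $\ltfuse$ in the projected types; bystanders outside $G_1$ follow their $\proj{G_2}{\pt{r}}$ projection as in the continuation.

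I expect this migration case to be the main obstacle, for two intertwined reasons. The first is sequencing: making the process-fusion lemma~(ii) precise and checking its compatibility with $\lt{\cdot}$, so that fusing the two sub-mediums produces exactly the binary types prescribed by \defref{d:proj}. The second, subtler, is the accessibility bookkeeping around the fresh $\alpha$: \name{$\lft \exists$} records only $\omega_\pt{p}\prec\alpha$, whereas typing the forwardings $c_{\pt{q}_j}\out{\alpha}$ via \name{$\lft \forall$} demands $\omega_{\pt{q}_j}\prec\alpha$, and the IH for $G_1$ requires the participant and medium domains to remain accessible once sessions migrate to $\alpha$. Discharging these obligations is where the freedom to choose $\Omega$ and the (parametric) closure properties of $\prec$ must be used: working over a suitably closed accessibility relation---an equivalence relation suffices, as the framework permits---in which $\Omega$ makes $\omega_1,\ldots,\omega_n,\omega_m$ mutually accessible, the fresh fact $\omega_\pt{p}\prec\alpha$ propagates to $\omega_{\pt{q}_j}\prec\alpha$ and simultaneously validates well-formedness (Theorem~\ref{thrm:NI}), the \name{$\cut$} side-conditions, and the domain forwarding. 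Keeping this invariant intact through the fresh-domain manipulations is the delicate part.
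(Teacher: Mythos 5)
Your proposal is correct and follows essentially the same route as the paper's proof: the same structural induction on $G$, the same rule-by-rule derivations ($\lft\oplus$, $\lft\tensor$, $\lft\with$, $\lft\lolli$ with $\linkr{u}{v}$ in the exchange case; $\lft\downarrow$, $\lft\exists$, $\lft\forall$, $\lft@$ bracketing the two sub-mediums in the migration case), with your process-fusion lemma~(ii) being precisely the paper's Proposition~\ref{p:composemed} and your $\fuse$-based treatment of bystanders corresponding to the paper's inner induction with silent uses of $\lft\with_2$. Even the accessibility bookkeeping you flag as delicate is resolved in the paper exactly as you suggest: by exploiting the existential quantification over $\Omega$ (the paper simply assumes the needed facts, e.g.\ $\alpha \prec \omega_1$ and $\alpha \prec \omega_2$, when applying $\lft@$ for the migrate-back prefixes).
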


\noindent 
The second  characterization result, given next,  
is the converse of Theorem~\ref{l:ltypesmedp}:
  binary typability precisely delineates the interactions that underlie well-formed multiparty protocols.
  We need an auxiliary relation on local types, written $ \subt$,
that relates types with branching and ``here'' type operators, which
have silent process interpretations (cf. Figure~\ref{fig:phillrules}
and \cite{longversion}). 
First, 
we have $T_1  \subt T_2$ if there is a $T'$ such that $T_1 \fuse T' = T_2$ (cf. \defref{d:mymerg}). 
Second, 
we have $T_1  \subt T_2$ if 
(i)~$T_1 = T'$ and $T_2 =\, \here{\alpha.T'}$ and $\alpha$ does not occur in $T'$;
but also if (ii)~$T_1 =\, \here{\alpha.T'}$ and $T_2 = T'\subst{\omega}{\alpha}$.
(See \cite{longversion} for a formal definition of  $\subt$).

\begin{restatable}[Well-Typed Mediums $\to$ Global Types]{theorem}{thmmedltypes}\label{l:medltypes}
Let $G$ be a global type (cf. \defref{d:gltypes}). \\ 
If 
$\Omega; \G; c_{\pt{p}_1}{:}A_1[\omega_1], \ldots, c_ {\pt{p}_n}{:}A_ n[\omega_n] \vdash  \etherp{G}{\tilde{c}}{\tilde{\omega}}::z:\one[\omega_m]$
is a compositional typing 
then $\exists 
T_1, \ldots, T_ n$ 
such that 
$\proj{G}{\mathtt{p}_j} \subt T_j$ and 
$\lt{T_j} = A_j$, 
for all $\pt{p}_j \in \partp{G}$.
\end{restatable}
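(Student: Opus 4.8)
The plan is to proceed by induction on the structure of the recursion-free global type $G$, inverting the typing derivation of the compositional typing of $\etherp{G}{\tilde{c}}{\tilde{\omega}}$ and reading back each local type $T_j$ from the shape that the medium's prefixes force on the binary type $A_j$. Throughout, the relation $\subt$ is what absorbs the two (and only two) sources of slack between a projection and a realised type: the branch-merging performed by merge-based projection (\defref{d:proj}, via $\fuse$ of \defref{d:mymerg}), and the silent current-domain bindings $\here{\alpha.\cdot}$, whose rules \name{$\rgt\downarrow$} and \name{$\lft\downarrow$} produce no process action. The base case $G = \gend$ is immediate: the medium is $\mathbf{0}$, $\partp{G} = \emptyset$, and the statement holds vacuously.

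For the communication case $G = \gto{p}{q}\{\lb{l}_i\langle U_i\rangle.G_i\}_{i\in I}$, the outermost operator of the medium is a branch on $c_\pt{p}$, so inversion through \name{$\lft\oplus$} forces $A_\pt{p}$ to be a selection type, and the ensuing input $c_\pt{p}(u)$ (via \name{$\lft\tensor$}) splits each branch as $\lt{U_i}\tensor B_i^{\pt{p}}$, matching $\lt{\proj{G}{\pt{p}}}$. Dually, $\mysel{c_\pt{q}}{\lb{l}_i}$ and $\outp{c_\pt{q}}{v}$ (via \name{$\lft\with$} and \name{$\lft\lolli$}) force $A_\pt{q}$ to be a branching type $\with\{\lb{l}_i : \lt{U_i}\lolli B_i^{\pt{q}}\}$, and the forwarder $\linkr{u}{v}$ identifies the two payload types. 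Applying the induction hypothesis to each sub-medium $\etherp{G_i}{\tilde{c}}{\tilde{\omega}}$ (a compositional typing with $c_\pt{p}{:}B_i^{\pt{p}}$, $c_\pt{q}{:}B_i^{\pt{q}}$ and the shared non-participant types) yields the continuation local types, from which $T_\pt{p}$ and $T_\pt{q}$ are reassembled. For a non-participant $\pt{r}$, one fixed type $A_\pt{r}$ must type every branch; the induced local type is a common $\subt$-upper bound of the per-branch projections, which is exactly the output of $\fuse$, so $\proj{G}{\pt{r}} = \fuse_{i}\proj{G_i}{\pt{r}} \subt T_\pt{r}$ as required.

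The migration case $G = \gmoves{\pt{p}}{\pt{q}_1,\ldots,\pt{q}_n}{w}{G_1}{G_2}$ is the delicate one. Inverting the leading prefixes forces, via \name{$\lft\exists$} then \name{$\lft @$}, that $A_\pt{p}$ begins with $\exists\alpha.@_\alpha(\cdots)$, and via \name{$\lft\forall$} then \name{$\lft @$} that each $A_{\pt{q}_i}$ begins with $\forall\alpha.@_\alpha(\cdots)$, matching the hybrid prefixes produced by \defref{d:proj} for the move type. The real work is to untangle the fused body $\etherp{G_1}{\tilde{y}}{\tilde{\omega}} \prfuse (y_\pt{p}(m_\pt{p}@\omega_{\pt{p}}).\cdots.\etherp{G_2}{\tilde{m}}{\tilde{\omega}})$. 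I would first establish a \emph{fusion-inversion lemma}: any compositional typing of $P \prfuse Q$ (with $P,Q$ in the grammar of \defref{d:prfuse}) decomposes into a typing of $P$ and a typing of $Q$ whose per-channel binary types are recombined exactly by $\ltfuse$; equivalently, $\lt{\cdot}$ carries local-type fusion to the operation that grafts $\lt{T_2}$ at the terminal $\one$ along the continuation spine of $\lt{T_1}$ (cf.\ \defref{d:ltfuse} and \defref{d:loclogt}). Granting this lemma, the $G_1$-part is a compositional typing at the migrated domain $\alpha$ and the $G_2$-part is a compositional typing at the restored domains $\omega_{\pt{p}},\omega_{\pt{q}_1},\ldots$, so the induction hypothesis applies to each; their per-participant local types fuse by $\ltfuse$ precisely as in the two projection clauses, and the silent $\here{\beta.\cdot}$ binding is reconciled by clause (ii) of $\subt$, which instantiates $\beta$ with the participant's concrete original domain. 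Since $G_1$ and $G_2$ need not share participants, one must also check the degenerate cases where a participant occurs in only one subprotocol; these agree with $\lend\ltfuse T = T$ and the ``otherwise'' clause of projection.

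As anticipated, the crux and main obstacle is the fusion-inversion lemma for this last case: $\prfuse$ threads $Q$ through the prefixes, branches and forwarding outputs of $P$, so its typing inversion must simultaneously re-split the linear context, respect the domain substitutions $\{\alpha/\omega_{\pt{p}},\ldots\}$ introduced at migration, and track the accessibility hypotheses that \name{$\lft @$} adds and the migration-back prefixes later consume. Verifying that this bookkeeping composes so that the recovered types are exactly $\lt{T_j}$ with $\proj{G}{\pt{p}_j}\subt T_j$ is where essentially all the effort lies; by comparison the base and communication cases are routine inversions.
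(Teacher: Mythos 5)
Your plan coincides with the paper's proof: both proceed by induction on the structure of $G$, inverting the medium's typing derivation along the derivations constructed in the proof of Theorem~\ref{l:ltypesmedp}, and using $\subt$ exactly as you describe---its first axiom to absorb the branch alternatives silently added by Rule~\name{$\lft\with_2$} (the merge/$\fuse$ slack), and its second and third axioms to reconcile the silent $\here{\alpha.\cdot}$ bindings introduced by Rule~\name{$\lft\downarrow$} in the migration case. The only difference is presentational: your ``fusion-inversion lemma'' for $\prfuse$/$\ltfuse$ makes explicit the converse of Proposition~\ref{p:composemed}, which the paper leaves implicit in its appeal to typing inversion on the third case, so you have correctly identified (and usefully isolated) where the real bookkeeping effort lies.
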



\noindent
The above theorems offer a \emph{static guarantee} that connects multiparty protocols and well-typed processes. They can be used to establish also \emph{dynamic guarantees} relating the behavior of a global type $G$ and that of its associated set of \emph{multiparty systems} (i.e., the typeful composition of $\etherp{G}{\tilde{c}}{\tilde{\omega}}$ with processes for each of $\pt{p}_i \in \partp{G}$).
These dynamic guarantees 
can be easily obtained by combining Theorems \ref{l:ltypesmedp} and \ref{l:medltypes} with the approach  in~\cite{DBLP:conf/forte/CairesP16}.




\section{Related Work}\label{s:relw}

\jpnote{There is a rich history of works on the logical foundations of concurrency (see, e.g.,~\cite{DBLP:journals/tcs/BellinS94,DBLP:conf/tapsoft/GirardL87,DBLP:journals/tcs/Abramsky93,DBLP:journals/entcs/Beffara06}), which has been extended to  
session-based concurrency by
Wadler~\cite{DBLP:journals/jfp/Wadler14}, 
Dal Lago and Di Giamberardino~\cite{DBLP:journals/corr/abs-1108-4467},
and others.}
Medium-based
analyses of multiparty sessions were developed
in~\cite{DBLP:conf/forte/CairesP16} and used in an account of multiparty sessions in an extended classical linear
logic~\cite{DBLP:conf/concur/CarboneLMSW16}.

Two salient calculi with distributed features are the Ambient
calculus~\cite{DBLP:journals/tcs/CardelliG00}, in which processes  
move across \emph{ambients} (abstractions of administrative domains),
and the \emph{distributed
  $\pi$-calculus}~(\textsc{Dpi})~\cite{DBLP:journals/iandc/HennessyR02},
which extends the $\pi$-calculus with flat locations, 
local communication, and process migration.
%
 %
While domains in our model may be read as locations, this is just one specific interpretation; 
they admit various alternative readings
(e.g.,~administrative domains, security-related levels), leveraging the
partial view of the domain hierarchy.
Type systems for Ambient calculi such as~\cite{DBLP:journals/iandc/CardelliGG02,DBLP:journals/cl/BugliesiC02}
enforce security and
communication-oriented properties in terms of ambient movement but 
do not cover issues of structured interaction, central in our work. 
Garralda et
al.~\cite{DBLP:conf/ppdp/GarraldaCD06} integrate binary sessions in an
Ambient calculus, ensuring that session protocols are
undisturbed by ambient mobility.  In contrast, our type system ensures that both migration and
communication are safe and, for the first time in such a setting,
satisfy global progress (i.e.,~session protocols never jeopardize
migration and vice-versa).

The multiparty sessions with nested protocols of Demangeon and Honda~\cite{DBLP:conf/concur/DemangeonH12} include 
a nesting construct that is
similar to our \jpnote{new global type $\gmoves{\pt{p}}{\ptset{\pt{q}}}{w}{G_1}{G_2}$}, which also introduces
nesting.  The focus in~\cite{DBLP:conf/concur/DemangeonH12} is on modularity in choreographic
programming; domains nor domain migration are not addressed.
The nested protocols in~\cite{DBLP:conf/concur/DemangeonH12} can have {\em local} participants and
may be parameterized on data from previous actions.  We conjecture
that our approach can accommodate local participants in a
similar way. Data parameterization can be transposed to our logical
setting via dependent session types
\cite{Toninho2011,DBLP:conf/fossacs/ToninhoY18}.
Asynchrony and recursive behaviors can also
be integrated by exploiting existing logical foundations \cite{DBLP:conf/csl/DeYoungCPT12,DBLP:conf/tgc/ToninhoCP14}. 

Balzer et al.~\cite{DBLP:conf/esop/BalzerTP19} overlay a notion of
world and accessibility on a system of {\em shared} session types to
ensure deadlock-freedom.  Their work differs substantially from ours:
\btnote{they instantiate accessibility as a partial-order, equip
sessions with multiple worlds and are not conservative wrt
linear logic,} 
being closer to partial-order-based typings for
deadlock-freedom~\cite{DBLP:conf/concur/Kobayashi06,DBLP:conf/csl/Padovani14}.
\section{Concluding Remarks}\label{sect:concl}

We developed a Curry-Howard interpretation of hybrid linear logic
as domain-aware session types.  
Present in processes and
types, 
domain-awareness   \jpnote{can account} for 
scenarios where domain
information \btnote{is} only determined at runtime. 
The resulting type system features 
strong correctness properties 
 for well-typed 
processes (session fidelity, global progress, termination). Moreover, 
by leveraging a {\em  parametric} accessibility relation,  
it rules out processes
that communicate with inaccessible domains, thus going beyond the scope of previous works.

As an application of our 
framework, we presented the first systematic study of domain-awareness in a
{\em multiparty} setting,  considering
multiparty sessions with domain-aware migration and communication whose semantics is given by 
a typed (binary) medium
process that orchestrates the multiparty protocol. 
Embedded in a fully distributed domain structure, our medium
is shown to strongly encode domain-aware multiparty sessions; it 
naturally allows us to transpose the correctness properties of our logical
development to the multiparty setting.

Our work opens up interesting avenues for future work. 
Mediums can be seen as \emph{monitors} that enforce the
 specification of a domain-aware
multiparty session. 
We plan to investigate contract-enforcing mediums
building upon works such as~\cite{DBLP:conf/esop/GommerstadtJP18,DBLP:conf/popl/JiaGP16,DBLP:journals/fmsd/DemangeonHHNY15}, which
study runtime monitoring in session-based
systems. 
Our enforcement of communication across accessible domains suggests
high-level similarities with
information flow
analyses in 
 multiparty sessions  (cf.~\cite{DBLP:conf/concur/CapecchiCDR10,DBLP:journals/corr/abs-1108-4465,DBLP:journals/fac/CastellaniDP16}), \btnote{but does not capture the directionality needed to
  model such analyses outright}.
It would be insightful to establish the precise relationship with such prior works.
\bibliography{referen}

\appendix

\section{Appendix}\label{app:app}

\subsection{Structural Congruence}\label{app:procs}
\begin{definition}
  \label{def:struct-cong} {\em Structural congruence} ($P \equiv Q$)
  is the least congruence relation on processes such that
  \[
\begin{array}{c}
  P \para \zero  \equiv P  \qquad   P \equiv_{\alpha} Q \Rightarrow P \equiv Q \qquad  (\nub x_{})\zero \equiv \zero  \qquad \linkr{x}{y} \equiv \linkr{y}{x}  \qquad
  P \para Q \equiv Q \para P\\ P \para (Q \para R) \equiv (P \para Q) \para R  \qquad
    x \not\in\fn{P} \Rightarrow P \para (\nub x_{})Q \equiv (\nub x_{})(P \para Q)   \\
       \qquad  (\nub x_{})(\nub y_{})P \equiv (\nub y_{})(\nub x_{})P  
\end{array}
\]
\end{definition}

\subsection{Labeled Transition System}\label{app:lts}
Some technical results rely on labeled transitions rather than on reduction.
To characterize the interactions of a well-typed process with its environment,
we extend
the 
early labeled transition system (LTS) for the
$\pi$-calculus~\cite{sangiorgi-walker:book} with 
labels and transition rules for  choice, migration, and forwarding constructs.  
A transition
$P\tra{\,\labelset\,}Q$ denotes that 
$P$ may evolve to 
$Q$
by performing the action represented by label $\labelset$. 
Transition labels are defined below:
\begin{eqnarray*}
\labelset & ::= &  \tau  \sep x(y) \sep x(\wtag) \sep x.\lb{l} \sep  x.\lmigrate  
           \sep \ov{x \, y } \sep  \ov{x\out{y_{}}} \sep \ov{x \, \wtag } \sep\ov{x.\lb{l}} \sep  \ov{x.\lmigrate}   
\end{eqnarray*}
Actions are name input $x(y)$, 
domain input $x(\wtag)$, 
the 
 offers
$x.\inl$ and $ x.\inr$, migration $x.\lmigrate$ and their matching
co-actions,
respectively the output $\ov{x \, y}$ and bound output $\ov{x\out y}$ actions, 
the domain output $\ov{x \, \wtag }$,
label selections
$\ov{x.\lb{l}}$ and $\ov{x.\lb{l}}$, and domain migration $\ov{x.\lmigrate}$. 
Both the bound output $\ov{x\out y}$ and migration action $\ov{x.\lmigrate}$ denote extrusion of a fresh name $y$
along 
$x$. Internal action is denoted by $\tau$. 
In general, an action
requires a matching 
co-action
in the environment to enable progress. 
%


\begin{definition}[Labeled Transition System]\label{def:lts} 
  The relation \emph{labeled transition} ($ P\tra{\labelset}Q$) is defined
  by the rules in Fig.~\ref{fig:LTS},
   subject to the side
  conditions: in rule $(\mathsf{res})$, we require $y_{}\not\in\fn{\labelset}$; 
  in rule $(\mathsf{par})$, we require $\bn{\labelset} \cap \fn{R} = \emptyset$; in rule
  $(\mathsf{close})$, we require $y_{}\not\in\fn{Q}$. We omit the symmetric versions
  of rules $(\mathsf{par})$, $(\mathsf{com})$, and $(\mathsf{close})$.
  
We write $subj(\lambda)$ for the subject of the action $\lambda$, that
is, the channel along which the action takes place. Weak transitions are defined as usual.
Let us write $\rho_1 \rho_2$ for the composition of relations $\rho_1, \rho_2$
and $\wtra{}$ for the reflexive, transitive closure of
$\tra{\tau}$. 
Notation $\wtra{\labelset}$ stands for $\wtra{~}\tra{\labelset}\wtra{~}$ (given $\labelset \neq \tau$)
and $\wtra{\tau}$ stands for $\wtra{}$.
We recall basic facts about reduction, structural congruence,
and labeled transition: closure of labeled transitions under
structural congruence, and coincidence of $\tau$-labeled transition
and reduction~\cite{sangiorgi-walker:book}: (1) if $P
\equiv\tra{\labelset}Q$ then $P \tra{\labelset}\equiv Q$; 
(2) $P\to Q$ iff $P \tra{\tau} \equiv Q$. 

\begin{figure}[t!]
{\small
 $$
\begin{array}{ccccc}
\inferrule*[Left=\name{$\mathsf{id}$}]{}{(\nub x_{})(\linkr{x_{}}{y_{}} \para P) \tra{\,\tau\,}  P\subst{y_{}}{x_{}}} 
\qquad
\inferrule[\name{$\mathsf{n.out}$}]{}{x_{} \out {y_{}}.P \tra{\ov{x \, y_{}}} P}
\hspace{0.6cm}
\inferrule[\name{$\mathsf{n.in}$}]{}{x_{}(y).P \tra{x_{}(z_{})} P \subst{z_{}}{y}}
\vspace{0.2cm}
\\
\inferrule[\name{$\mathsf{d.out}$}]{}{x_{} \out {\wtag}.P \tra{\ov{x \, \wtag}} P}
\hspace{0.6cm}
\inferrule[\name{$\mathsf{d.in}$}]{}{x_{}(\alpha).P \tra{x(\wtag)} P \subst{\wtag}{\alpha}}
\vspace{0.2cm}
\\
\inferrule*[left=\name{$\mathsf{\textmigrate}$}]{}{x_{}\out{y@\wtag}.P
  \tra{\ov{x.\lmigrate}} (\nub y) P}
\quad
\inferrule*[left=\name{$\mathsf{\textmigrate}'$}]{}{x_{}(z@\wtag).P \tra{x.\lmigrate} P\subst{y}{z}}
\vspace{0.2cm}
\\
\inferrule*[Left=\name{$\mathsf{par}$}]{P \tra{\labelset} Q}{P\para R \tra{\labelset} Q \para R}
\hspace{1.5cm} 
\inferrule*[Left=\name{$\mathsf{com}$}]{P\tra{\overline{\labelset}} P' \;\;\; Q \tra{\labelset} Q'} {P \para
  Q \tra{\tau} P' \para Q'}
\quad
\inferrule*[left=\name{$\mathsf{res}$}]{P \tra{\labelset} Q} {(\nub y)P \tra{\labelset} (\nub y_{})Q}
\quad
\inferrule*[left=\name{$\mathsf{open}$}]{P \tra{\ov{x \, y_{}}} Q} {(\nub y_{})P \tra{\overline{x\out{y}}} Q}
\\
\inferrule*[Left=\name{$\mathsf{close}$}]{P \tra{\overline{x\out{y}}} P' \;\;\; Q \tra{x(y)} Q'}
{P \para Q \tra{\tau} (\nub y_{})(P' \para Q')}
\qquad
\inferrule[\name{$\mathsf{rep}$}]{}{\bang x_{}(y_{}).P \tra{x(z)} P \subst{z_{}}{y_{}}\para \bang x_{}(y_{}).P}
\vspace{0.15cm} 
\\
\inferrule[\name{$\mathsf{l.out}$}]{}{\mysel{x}{\lb{l}_i};P \tra{ \overline{x.\lb{l}_i} } P}
\quad
\inferrule[\name{$\mathsf{l.in}$}]{}{\mycasebig{x}{\lb{l}_i : P_i}{i \in I} \tra{x.\lb{l}_i} P_i}
\end{array}
$$
}
\caption{\label{fig:LTS} Labeled Transition System.}
\end{figure}
\end{definition} 

\subsection{Omitted Typing Rules}
\label{app:trules}

{\small
  \[
\begin{array}{ccccc}
\inferrule*[left=\name{$\rgt\with$}]{\bluee{\Omega ; \G; \D} \vdash
  \redd{P_1 :: x{:}}\bluee{A_1 [\omega] \quad \dots \quad \Omega ; \G; \D} \vdash \redd{P_n :: x{:}}\bluee{A_n[\omega]}}{\bluee{\Omega ; \G; \D} \vdash \redd{\mycasebig{x}{\lb{l}_i : P_i}{i \in I}::  z{:}}\bluee{ \with\{\lb{l}_i :
                                               A_i\}_{i\in I}[\omega]}}
\\[1em]
\inferrule*[left=\name{$\lft\with_1$}]{\Gamma; \Delta, \redd{x{:}}A[\omega_2] \vdash  \redd{P:: z}{:}C[\omega_1]}
{\Gamma; \Delta, \redd{x{:}}\mywith{\lb{l}_i : A}{\{i\}}[\omega_2]  \vdash  \redd{\mysel{x}{\lb{l}_i};P:: z{:}}C[\omega_1]}
\\[1em]
\inferrule*[left=\name{$\lft\with_2$}]{\Gamma; \Delta, \redd{x{:}}\with\{\lb{l}_i{:}A_i\}_{i \in I}[\omega_2] \vdash  \redd{P:: z{:}}C[\omega_1] \quad k \not\in I}
{\Gamma; \Delta, \redd{x{:}}\mywith{\lb{l}_j{:}A_j}{j \in I\cup\{k\}}[\omega_2] \vdash  \redd{P:: z{:}}C[\omega_1]}
\\[1em]
\inferrule*[left=\name{$\rgt\oplus_1$}]{\Gamma; \Delta \vdash  \redd{P:: x{:}}A[\omega]}
{\Gamma; \Delta \vdash  \redd{\mysel{x}{\lb{l}_i};P::x{:}}\myoplus{\lb{l}_i : A}{\{i\}}[\omega]}
\qquad
\inferrule*[left=\name{$\rgt\oplus_2$}]{\Gamma; \Delta \vdash  \redd{P:: x{:}}\myoplus{\lb{l}_i : A_i}{i \in I}[\omega] \quad k \not\in I}
{\Gamma; \Delta \vdash  \redd{P:: x{:}}\myoplus{\lb{l}_j : A_j}{j \in I\cup\{k\}}[\omega]}
\\[1em]
  \inferrule*[left=\name{$\lft\oplus$}]{\begin{array}{c}
    \bluee{\Omega ;\G;  \D,\,} \redd{x{:}}\bluee{A_1[\omega_2]} \vdash
                                          \redd{Q_1 ::
                                          z{:}}\bluee{C[\omega_1]} 
                                          \quad \dots \quad
\bluee{\Omega ; \G; \D ,\,} \redd{x{:}}\bluee{A_n[\omega_2]} \vdash \redd{Q_n :: z{:}}\bluee{C[\omega_1]}\end{array}}{\bluee{\Omega ; \G; \D ,\,} \redd{x{:}}\bluee{\oplus\{\lb{l}_i :
                                               A_i\}_{i\in I}[\omega_2]} \vdash  \redd{\mycasebig{x}{\lb{l}_i : Q_i}{i \in I} :: z{:}}\bluee{C[\omega_1]}}
\vspace{0.15cm}
\\
\inferrule*[left=\name{$\lft\bang$}]{
\Omega ; \Gamma, \redd{u{:}} A[\omega_2] ; \Delta \vdash \redd{P:: z{:}}\bluee{C[\omega_1]}}
{\Omega ; \Gamma; \Delta, \redd{x{:}}\bang A[\omega_2] \vdash \redd{x(u).P:: z{:}}\bluee{C[\omega_1]}}
\quad
\inferrule*[left=\name{$\rgt\bang$}]{\Omega ; \Gamma; \cdot \vdash \redd{Q:: y{:}}A[\omega]}
{\Omega ; \Gamma; \cdot  \vdash \redd{\ov{x}\langle u \rangle.\bang
  u(y).Q:: x{:}}\bang A[\omega]}\;
  \\[1em]
\inferrule*[left=\name{$\cut^{\bang}$}]
{\bluee{\Omega ; \G ; \cdot} \vdash \redd{P :: x{:}}\bluee{A[\omega_1] \quad \Omega ; \G ,\,} \redd{u{:}}\bluee{A[\omega_1] ; \D}
  \vdash \redd{Q :: z{:}}\bluee{C[\omega_2]}}{\bluee{\Omega ; \G ; \D} \vdash \redd{(\nub u)(\bang u(x).P  \mid Q) :: z{:}}\bluee{C[\omega_2]}}
\end{array}
\]}

\subsection{Additional Lemmas for Type Preservation}
\label{app:tpres}

The development of type preservation extends that of
\cite{DBLP:conf/concur/CairesP10} to account for domain communication
and migration. The proof mainly relies on 
a series of reduction lemmas (one per session type connective that
produces observable process actions) that relate
process actions with parallel composition through the \name{$\cut$}
rule, which correspond to logical proof reductions.
For instance, the reduction lemma for $\tensor$ is:

\begin{lemma}[Reduction Lemma -- $\tensor$]\label{lemma:cp-otimes}
  Assume \\
  (a)~ $\Omega;\G ; \D_1 \vdash  P :: x {:} A_1\otimes A_2[\omega]$ with
  $P \stackrel{\ov{(\nu y)x\out y}}{\rightarrow} P'$; and\\
  ~(b) $\G ;
  \D_2, x{:} A_1\otimes A_2[\omega]\vdash  Q :: z{:} C[\omega']$ with $Q
  \stackrel{x(y)}{\rightarrow} \! Q'$.\\
Then:  $\Omega ; \G ; \D_1 , \D_2 \vdash (\nub x)(P' \mid Q') ::
  z{:}C[\omega']$
\end{lemma}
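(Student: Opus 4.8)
The plan is to follow the standard template for reduction lemmas in the Curry--Howard interpretation of session types, adapting it to additionally track domains. The two observable actions in the hypotheses are dual: the bound output $\overline{(\nu y)x\out{y}}$ witnesses that $P$ \emph{offers} the $\tensor$ on $x$, while the input $x(y)$ witnesses that $Q$ \emph{consumes} it. First I would perform inversion on the typing derivation of $P$ from (a), using the action label to single out the principal rule for the offered channel $x$: since the action is a bound output on $x$, the derivation must be (up to $\equiv$, under which typability is closed) one ending in \name{$\rgt\tensor$}. This yields a partition $\D_1 = \D_1' , \D_1''$ together with subprocesses $P_1, P_2$ such that $P \equiv \outp{x}{y}.(P_1 \para P_2)$, $P' \equiv P_1 \para P_2$, and both $\Omega;\G;\D_1' \vdash P_1 :: y{:}A_1[\omega]$ and $\Omega;\G;\D_1'' \vdash P_2 :: x{:}A_2[\omega]$ hold at the \emph{same} domain $\omega$ (the connective $\tensor$ being domain-preserving). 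Dually, inverting the derivation of $Q$ from (b) along the input $x(y)$ forces a final \name{$\lft\tensor$}, producing $Q'$ with $\Omega;\G;\D_2, y{:}A_1[\omega], x{:}A_2[\omega] \vdash Q' :: z{:}C[\omega']$.

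Next I would discharge the domain side-conditions needed to recompose the pieces with \name{$\cut$}; this is the only genuinely new ingredient relative to the non-hybrid setting. From well-formedness of the sequent in (b)---available because every sequent in a well-typed derivation is well-formed---the assignment $x{:}A_1\tensor A_2[\omega]$ in $Q$'s linear context forces $\Omega \vdash \omega' \prec^* \omega$. Well-formedness of (a) gives $\Omega \vdash \omega \prec^* \D_1$, hence $\Omega \vdash \omega \prec^* \D_1'$ and $\Omega \vdash \omega \prec^* \D_1''$; composing with $\omega' \prec^* \omega$ by transitivity of $\prec^*$ yields $\Omega \vdash \omega' \prec^* \D_1'$ and $\Omega \vdash \omega' \prec^* \D_1''$. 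These are exactly the accessibility premises demanded by the cuts below.

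Then I would rebuild the reduct with two nested cuts, innermost first. Cutting $P_2$ against $Q'$ along $x$ at type $A_2[\omega]$ (using premises $\omega' \prec^* \omega$ and $\omega' \prec^* \D_1''$) gives $\Omega;\G;\D_1'',\D_2, y{:}A_1[\omega] \vdash (\nub x)(P_2 \para Q') :: z{:}C[\omega']$. Cutting $P_1$ against this result along $y$ at type $A_1[\omega]$ (using $\omega' \prec^* \omega$ and $\omega' \prec^* \D_1'$) gives $\Omega;\G;\D_1',\D_1'',\D_2 \vdash (\nub y)(P_1 \para (\nub x)(P_2 \para Q')) :: z{:}C[\omega']$. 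Since $\D_1 = \D_1',\D_1''$ and $x \notin \fn{P_1}$, scope extrusion and commutation of restrictions give $(\nub y)(P_1 \para (\nub x)(P_2 \para Q')) \equiv (\nub x)(\nub y)((P_1 \para P_2) \para Q') \equiv (\nub x)(\nub y)(P' \para Q')$, which is precisely the reduct produced by rule \name{$\mathsf{close}$} under $(\nub x)$ (the outer restriction rebinding the freshly extruded $y$). As typability is closed under $\equiv$, this establishes the claim.

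The routine parts---the two inversions and the final congruence rearrangement---are inherited essentially verbatim from the interpretation of \cite{DBLP:conf/concur/CairesP10}. The one step demanding care is the second paragraph: the accessibility premises of the two cuts are not available directly and must be recovered from the well-formedness invariant of the hypothesis sequents together with transitivity of $\prec^*$. This is where the domain discipline does real work, and it is the step I expect to be the crux.
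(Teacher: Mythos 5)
Your proof covers only the \emph{principal} case, and the step that selects it is flawed: from hypothesis (a) you cannot conclude that the typing derivation of $P$ ends in \name{$\rgt\tensor$}. Closure of typability under $\equiv$ means that typing is preserved across structurally congruent processes; it does \emph{not} let you normalize a given derivation so that the principal rule for $x$ comes last. Two families of counterexamples arise. First, $P$ may be a composition $(\nub w)(P_a \para P_b)$ typed by \name{$\cut$} (or \name{$\cut^{\bang}$}), with the bound output on $x$ performed by one component via the \name{$\mathsf{par}$}/\name{$\mathsf{res}$}/\name{$\mathsf{open}$} LTS rules; no structural rearrangement removes that cut from the derivation. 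Second, this hybrid system has \emph{silent} typing rules --- \name{$\lft\downarrow$} on a channel of $\D_1$, \name{$\lft\with_2$}, \name{$\rgt\oplus_2$} --- which leave the process unchanged, so the last rule of the derivation can be any of these even though $P$'s observable action is the output on $x$. The same objections apply to your inversion of (b). This is precisely why the paper (following \cite{DBLP:conf/concur/CairesP10}, and as stated in its appendix) proves the reduction lemmas by \emph{simultaneous induction on the two given typing derivations}: the non-principal cases are commuting cases in which the inductive hypothesis is applied to a subderivation and the cut is pushed inward, and only the base case is the one you treat. Your remark that the inversions are ``inherited essentially verbatim'' from \cite{DBLP:conf/concur/CairesP10} mischaracterizes that source, which does not argue by inversion either.

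That said, your second paragraph --- recovering the accessibility premises of the two cuts from well-formedness of the hypothesis sequents ($\Omega \vdash \omega' \prec^* \omega$ from (b), $\Omega \vdash \omega \prec^* \D_1$ from (a), composed by transitivity of $\prec^*$) --- is correct and is indeed the genuinely new, domain-specific content relative to the non-hybrid lemma; your third paragraph (two nested cuts, innermost on $x{:}A_2[\omega]$, then on $y{:}A_1[\omega]$, followed by scope extrusion) is the right reconstruction for the principal case. To repair the proof, keep those two paragraphs as the base case, replace the inversion by the simultaneous induction, and check in each commuting case that the well-formedness invariant (Theorem~\ref{thrm:NI}) is maintained so that your accessibility argument remains available at the point where the principal rules finally meet.
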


These lemmas carry over straightforwardly from
\cite{DBLP:conf/concur/CairesP10}. The new lemmas are: 
\begin{lemma}[Reduction Lemma - $\forall$]\label{lem:redforall}~ Assume \\
(a)~$\Omega ; \G ; \D_1 \vdash P :: x{:}\forall \alpha . A [\omega_2]$
  with $P \tra{x(w_3)} P'$
and \\ 
(b)~$\Omega ; \G ; \D_2 , x{:}\forall \alpha . A[\omega_2] \vdash Q ::
  z{:} C [\omega_1]$ with $Q \tra{\overline{x \,w_3}} Q'$. \\
Then: $\Omega ; \G ; \D_1 , \D_2 \vdash (\nub x)(P' \mid Q') ::
  z{:}C[\omega_1]$
\end{lemma}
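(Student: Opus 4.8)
The plan is to invert the two given typing derivations together with their accompanying transitions to expose the principal rules \name{$\rgt\forall$} and \name{$\lft\forall$}, then to specialise the residual of $P$ to the concrete domain $w_3$ via a single application of the Domain Substitution Lemma (Lemma~\ref{lem:wsubst}), and finally to recompose $P'$ and $Q'$ with the \name{$\cut$} rule.

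First I would invert hypothesis~(a). Since $P$ offers $x{:}\forall\alpha.A[\omega_2]$ and the transition $P\tra{x(w_3)}P'$ is a domain input on $x$ (rule \name{$\mathsf{d.in}$}), the only prefix that can perform it is a domain input $x(\alpha).P_0$, and the principal typing rule is \name{$\rgt\forall$}. Inversion thus yields $P'=P_0\{w_3/\alpha\}$ together with the premise $\Omega,\omega_2\prec\alpha; \G; \D_1 \vdash P_0 :: x{:}A[\omega_2]$, where $\alpha$ is fresh for $\Omega,\G,\D_1,\omega_2$. Dually, inverting hypothesis~(b): as $Q$ uses $x{:}\forall\alpha.A[\omega_2]$ and performs the domain output $Q\tra{\ov{x\,w_3}}Q'$ (rule \name{$\mathsf{d.out}$}), the principal rule is \name{$\lft\forall$}, giving $\Omega\vdash\omega_2\prec w_3$ together with $\Omega; \G; \D_2, x{:}A\{w_3/\alpha\}[\omega_2] \vdash Q' :: z{:}C[\omega_1]$.

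Next I would specialise the behaviour that $P$ offers on $x$ to the concrete domain $w_3$. Applying the first clause of Lemma~\ref{lem:wsubst} to the premise for $P_0$, using the accessibility fact $\Omega\vdash\omega_2\prec w_3$ obtained above, produces $\Omega; \G\{w_3/\alpha\}; \D_1\{w_3/\alpha\} \vdash P_0\{w_3/\alpha\} :: x{:}A\{w_3/\alpha\}[\omega_2\{w_3/\alpha\}]$. Because $\alpha$ is fresh for $\G,\D_1,\omega_2$, the substitution is the identity on each of these, so this reads exactly as $\Omega; \G; \D_1 \vdash P' :: x{:}A\{w_3/\alpha\}[\omega_2]$. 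The type at which $P'$ offers $x$ now coincides with the type at which $Q'$ uses it.

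Finally I would close with \name{$\cut$} on $P'$ and $Q'$ along $x$. Its two accessibility side conditions, $\Omega\vdash\omega_1\prec^*\omega_2$ and $\Omega\vdash\omega_2\prec^*\D_1$, follow from the well-formedness invariant (Theorem~\ref{thrm:NI}): the first holds because $x{:}\forall\alpha.A[\omega_2]$ sits in the linear context of the well-formed sequent typing $Q$ against the conclusion domain $\omega_1$, and the second because the well-formed sequent typing $P$ offers at $\omega_2$ while relying on $\D_1$. The cut then delivers $\Omega; \G; \D_1,\D_2 \vdash (\nub x)(P'\mid Q') :: z{:}C[\omega_1]$, as required. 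I expect the only genuinely delicate point to be the inversion step: one must argue that the silent typing rules which leave the process term unchanged---notably \name{$\lft\downarrow$} and the width-subtyping rules \name{$\lft\with_2$} and \name{$\rgt\oplus_2$}---can be permuted past the principal $\forall$-rules, so that inversion really does expose the \name{$\rgt\forall$} and \name{$\lft\forall$} premises. Once this is settled, the remainder is a direct instantiation of Lemma~\ref{lem:wsubst} followed by \name{$\cut$}.
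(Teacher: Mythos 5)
Your handling of the principal case coincides with the paper's: the paper proves this lemma (together with the other reduction lemmas) by \emph{simultaneous induction on the two typing derivations}, and in the key case the ingredients are exactly yours --- expose the $\rgt\forall$ premise $\Omega,\omega_2\prec\alpha;\G;\D_1\vdash P_0::x{:}A[\omega_2]$, instantiate $\alpha$ at $w_3$ via Lemma~\ref{lem:wsubst} (enabled by the side condition $\Omega\vdash\omega_2\prec w_3$ coming from $\lft\forall$), and recompose with \name{$\cut$}, discharging the accessibility side conditions from well-formedness (Theorem~\ref{thrm:NI}). Up to that point your argument is the intended one.

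The genuine gap is in the inversion step, and it is larger than the one you flag. The lemma is stated for \emph{arbitrary} typed $P$ and $Q$ with the given transitions, so neither derivation need end in the principal rule: for instance $P=(\nub y)(P_1\mid P_2)$ typed by \name{$\cut$} (or \name{$\cutbang$}), where $P_2$ offers $x{:}\forall\alpha.A[\omega_2]$ and the transition $P\tra{x(w_3)}P'$ is derived through \name{$\mathsf{par}$} and \name{$\mathsf{res}$}. Such a derivation cannot be locally commuted into one ending in $\rgt\forall$ in the way you propose for the silent rules $\lft\downarrow$, $\lft\with_2$, $\rgt\oplus_2$; one must instead apply the statement inductively to the subderivation for $P_2$ (keeping the derivation for $Q$ fixed) and then cut $P_1$ back in, closing under structural congruence --- and symmetrically when $Q$'s derivation ends in a cut. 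This is precisely why the paper argues by simultaneous induction on both derivations rather than by inversion: the induction uniformly absorbs the cut and silent-rule cases, and your three-step argument is its principal case. So the key lemma and the recomposition are right, but the reduction to the principal case is asserted rather than proved, and the permutation claim as you state it fails for the cut cases.
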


\begin{lemma}[Reduction Lemma - $\exists$]\label{lem:redexists} ~ Assume \\
(a)~$\Omega ; \G ; \D_1 \vdash P :: x{:}\exists \alpha . A [\omega_2]$
  with $P \tra{\overline{x \,w_3}} P'$
and \\ 
(b)~$\Omega ; \G ; \D_2 , x{:}\exists \alpha . A[\omega_2] \vdash Q ::
  z{:} C [\omega_1]$ with $Q \tra{x(w_3)} Q'$. \\
Then: $\Omega ; \G ; \D_1 , \D_2 \vdash (\nub x)(P' \mid Q') ::
  z{:}C[\omega_1]$
\end{lemma}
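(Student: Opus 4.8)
The plan is to prove this as the exact dual of the $\forall$ reduction lemma (Lemma~\ref{lem:redforall}), simply swapping the roles of provider and client: here it is the \emph{provider} $P$ that emits a concrete domain $w_3$ and the \emph{client} $Q$ that receives it into a fresh world variable. I would establish the statement by inverting both typing derivations along their transition labels, realigning the world contexts via domain substitution, and finally recomposing the two continuations through \name{$\cut$}. As remarked in the proof sketch of Theorem~\ref{thrm:pres}, the shape of this argument is identical to the treatment of type communication in the polymorphic setting, with concrete domains playing the role of instantiating worlds.

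First I would invert hypothesis~(a). Since $P$ offers $x{:}\exists\alpha.A[\omega_2]$ and the label $\overline{x\,w_3}$ is an output on the offered channel $x$, the only right rule that can introduce this behavior is \name{$\rgt \exists$}; after commuting past any intervening cuts (handled uniformly, as for the other connectives in Theorem~\ref{thrm:pres}), this inversion yields both the side condition $\Omega \vdash \omega_2 \prec w_3$ and a typing of the continuation $\Omega;\G;\D_1 \vdash P' :: x{:}A\{w_3/\alpha\}[\omega_2]$. Symmetrically, I would invert hypothesis~(b): the input label $x(w_3)$ on the used channel $x$ forces rule \name{$\lft \exists$}, whose premise is $\Omega,\omega_2\prec\alpha;\G;\D_2,x{:}A[\omega_2] \vdash Q_0 :: z{:}C[\omega_1]$ for a body $Q_0$ with $\alpha$ fresh, where the $(\mathsf{d.in})$ rule gives $Q' = Q_0\{w_3/\alpha\}$.

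The crux is then to reconcile the two sides. The left rule introduces the \emph{hypothetical} accessibility $\omega_2\prec\alpha$, whereas the provider has supplied the \emph{concrete} domain $w_3$. I discharge this mismatch using the Domain Substitution Lemma (Lemma~\ref{lem:wsubst}, first clause), instantiated with the fact $\Omega \vdash \omega_2 \prec w_3$ recovered from the inversion of~(a): applying it to the premise of \name{$\lft \exists$} substitutes $w_3$ for $\alpha$ throughout and drops the now-redundant hypothesis, producing $\Omega;\G\{w_3/\alpha\};(\D_2,x{:}A[\omega_2])\{w_3/\alpha\} \vdash Q' :: z{:}C[\omega_1\{w_3/\alpha\}]$. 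Because $\alpha$ is fresh for $\G$, $\D_2$, $C$, and the ambient worlds $\omega_1,\omega_2$, every one of these substitutions is vacuous except inside $A$, so the judgment collapses to exactly $\Omega;\G;\D_2,x{:}A\{w_3/\alpha\}[\omega_2] \vdash Q' :: z{:}C[\omega_1]$.

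With both continuations now agreeing on the interface $x{:}A\{w_3/\alpha\}[\omega_2]$, I recompose them with \name{$\cut$} on $x$ to conclude $\Omega;\G;\D_1,\D_2 \vdash (\nub x)(P'\mid Q') :: z{:}C[\omega_1]$. The two accessibility premises of \name{$\cut$}, namely $\Omega\vdash\omega_1\prec^*\omega_2$ and $\Omega\vdash\omega_2\prec^*\D_1$, come for free from the well-formedness of the sequents in~(b) and~(a) respectively, since well-formedness is a standing invariant preserved by every rule (Theorem~\ref{thrm:NI}). I expect the only genuinely delicate point to be the freshness bookkeeping in the application of Lemma~\ref{lem:wsubst}: one must ensure $\alpha$ has not leaked into $\G,\D_2,C$ or the ambient worlds before substituting, so that eliminating the hypothesis $\omega_2\prec\alpha$ leaves the remainder of the judgment untouched. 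Everything else is routine and mirrors the $\forall$ case step for step.
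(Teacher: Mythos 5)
Your proof is correct and follows essentially the same route as the paper, which establishes this lemma by simultaneous induction on the two typing derivations using Lemma~\ref{lem:wsubst} (mirroring the treatment of type/data communication in the polymorphic and dependent settings): in the principal case one inverts \name{$\rgt\exists$} to obtain $\Omega \vdash \omega_2 \prec w_3$ and the typing of $P'$ at $x{:}A\{w_3/\alpha\}[\omega_2]$, applies domain substitution to the premise of \name{$\lft\exists$} (vacuous outside $A$ by freshness of $\alpha$), and recomposes with \name{$\cut$}, exactly as you describe. Your remark about commuting past intervening cuts corresponds to the paper's non-principal inductive cases, which are indeed handled uniformly.
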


\begin{lemma}[Reduction Lemma - $@$] ~ Assume \\
(a)~$\Omega ; \G ; \D_1 \vdash P :: x{:}@_{\omega} A [\omega']$
  with $P  \tra{\ov{x.\lmigrate}}  P'$
and \\ 
(b)~$\Omega ; \G ; \D_2 , x{:}@_{\omega} A[\omega'] \vdash Q ::
  z{:} C [\omega'']$ with $Q \tra{x.\lmigrate} Q'$. \\
Then: $\Omega ; \G ; \D_1 , \D_2 \vdash (\nub x)(P' \mid Q') ::
  z{:}C[\omega'']$
\end{lemma}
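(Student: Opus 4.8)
The plan is to invert the two hypotheses to expose their continuations together with the accessibility facts they carry, and then to recombine those continuations with a single application of \name{$\cut$} on the \emph{migrated} session channel; the domain side-conditions of that cut will be discharged using exactly the accessibility data released by the two inversions.

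First, invert (a). The label $\ov{x.\lmigrate}$ is produced only by rule \name{$\mathsf{\textmigrate}$}, so $P$ has the form $\outp{x}{y@\omega}.P''$ and $P'$ is its continuation $P''$ (with the migrated name $y$ extruded). Since $P$ offers $x{:}@_\omega A[\omega']$, its derivation ends in \name{$\rgt @$}, and inversion yields (i)~$\Omega \vdash \omega' \prec \omega$, (ii)~the well-formedness check $\Omega \vdash \omega \prec^* \D_1$, and (iii)~$\Omega ; \G ; \D_1 \vdash P'' :: y{:}A[\omega]$. Dually, invert (b): the label $x.\lmigrate$ forces $Q = x(y@\omega).Q''$ (renaming the bound parameter to match the provider's fresh name $y$), so $Q' = Q''$, and since $Q$ uses $x{:}@_\omega A[\omega']$ its derivation ends in \name{$\lft @$}, giving $\Omega , \omega' \prec \omega ; \G ; \D_2 , y{:}A[\omega] \vdash Q'' :: z{:}C[\omega'']$.

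Now recombine. The hypothesis $\omega' \prec \omega$ introduced by \name{$\lft @$} is precisely the fact (i) already derivable from $\Omega$, so a structural argument on the accessibility context removes it, yielding $\Omega ; \G ; \D_2 , y{:}A[\omega] \vdash Q'' :: z{:}C[\omega'']$. I then apply \name{$\cut$} on $y{:}A[\omega]$, composing the provider $P''$ (over $\D_1$) with the client $Q''$ (over $\D_2$). Its two side-conditions hold: $\Omega \vdash \omega \prec^* \D_1$ is exactly (ii); and $\Omega \vdash \omega'' \prec^* \omega$ follows by transitivity from $\omega'' \prec^* \omega'$ — which holds by well-formedness of the sequent in (b), since $x{:}@_\omega A[\omega'] \in \D_2$ — together with $\omega' \prec \omega$ from (i). The cut produces $\Omega ; \G ; \D_1 , \D_2 \vdash (\nub y)(P'' \mid Q'') :: z{:}C[\omega'']$, which, after the routine scope-extrusion of the fresh migrated name encoded by the matching labels $\ov{x.\lmigrate}$ and $x.\lmigrate$, is the claimed judgment for $(\nub x)(P' \mid Q')$.

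The main obstacle is the accessibility bookkeeping in the recombination step: justifying that the direct hypothesis $\omega' \prec \omega$ added by \name{$\lft @$} may be discharged against the derivation of $\Omega \vdash \omega' \prec \omega$ obtained from \name{$\rgt @$}, while preserving well-formedness of every intermediate sequent. For the bare accessibility judgment (rule \name{$\mathsf{whyp}$}) this is the mere deletion of a duplicated hypothesis, but under a richer parametric accessibility relation it becomes an admissibility-of-hypothesis argument (a cut on $\Omega$); this is the single place where the hybrid layer genuinely interacts with the session-level cut, and where care is needed to keep the construction parametric in the accessibility relation.
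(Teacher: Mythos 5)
There is a genuine gap at your very first step. From $P \tra{\ov{x.\lmigrate}} P'$ you conclude that $P$ is literally the prefix $\ov{x}\langle y@\omega\rangle.P''$ and hence that the derivation of (a) ends in \name{$\rgt @$}. Neither holds in general: the LTS (Fig.~\ref{fig:LTS}) closes actions under $(\mathsf{par})$ and $(\mathsf{res})$, so the migration output may be performed by a subprocess buried under parallel compositions and restrictions---e.g.\ $P = (\nub w)(R \mid S)$ typed by \name{$\cut$}, with $S$ offering $x$ and emitting the action. Correspondingly, the last rule of the typing derivation in (a) can be \name{$\cut$}, \name{$\cut^{\bang}$}, \name{$\cpy$}, a left rule acting on some other channel of $\D_1$, or a silent rule such as \name{$\lft\downarrow$} or \name{$\lft\with_2$}; symmetrically for (b) and \name{$\lft @$}. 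A one-shot inversion is therefore unavailable, and this is precisely why the paper proves these reduction lemmas by \emph{simultaneous induction on the two given typing derivations} (cf.\ Appendix~\ref{app:tpres}, following \cite{DBLP:conf/concur/CairesP10,DBLP:conf/esop/CairesPPT13,Toninho2011}): your argument is exactly the principal case of that induction, while all the remaining cases commute the final \name{$\cut$} past the non-principal last rule using the induction hypothesis. Your proposal omits this inductive scaffolding entirely, so as written it establishes only one case of the lemma.

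Within the principal case itself, your reasoning is essentially sound and matches what the paper's induction would do there: the side conditions of \name{$\cut$} on the migrated channel $y{:}A[\omega]$ are discharged by $\Omega \vdash \omega \prec^* \D_1$ (obtained from inverting \name{$\rgt @$}) and by $\Omega \vdash \omega'' \prec^* \omega$ (from well-formedness of the sequent in (b) together with $\Omega \vdash \omega' \prec \omega$), and you are right that eliminating the duplicated hypothesis $\omega' \prec \omega$ introduced by \name{$\lft @$} requires an admissibility principle for accessibility hypotheses, parametric in the rules defining $\Omega \vdash \omega_1 \prec \omega_2$---a companion to the domain substitution property (Lemma~\ref{lem:wsubst}) that the paper leaves implicit. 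So the fix is structural rather than conceptual: wrap your principal-case analysis inside the simultaneous induction and supply the routine commuting cases; the accessibility bookkeeping you identified then goes through unchanged.
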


The proofs of the lemmas above follow by simultaneous induction on the two
given typing derivations, with Lemmas~\ref{lem:redforall} and~\ref{lem:redexists}
making use of Lemma~\ref{lem:wsubst}.
This development is essentially that of
\cite{DBLP:conf/esop/CairesPPT13} and \cite{Toninho2011} which
consider an extension of the core propositional system of~\cite{DBLP:conf/concur/CairesP10} with communication of types
(i.e.~polymorphism) and communication of data (i.e.~value
dependencies). 
By appealing to such lemmas, we can establish
type preservation for our system.

\subsection{Pre-congruence on Local Types}\label{app:pctypes}

The following definition is used in the proof of \thmref{l:medltypes}:

\begin{definition}
  \label{def:pctypes} We define $\subt$
  as the least pre-congruence relation on local types such that
  \[
\begin{array}{c}
  T_1 \subt T_1 \fuse T_2  
  \qquad   
  T_1 \subt T_2 \Rightarrow ~\here{\alpha.T_1} \subt T_2\subst{\omega}{\alpha} 
  \qquad  
  T \subt ~\here{\alpha.T} ~~\text{if $\alpha$ does not occur in $T$}
\end{array}
\]
\end{definition}

\subsection{Proofs of Medium Characterization}
\label{app:proofs}

The proof of Theorem~\ref{l:ltypesmedp} relies on the following auxiliary proposition:
\begin{restatable}{proposition}{propfuse}
\label{p:composemed}
Let 
\begin{enumerate}
\item  $\Omega; \G; \D_1 \vdash \etherp{G_1}{\tilde{y}}{\tilde{\omega}} :: z:\one[\omega_m]$, 
with $\dom{\D_1} = \{y_{\pt{p}}, y_{\pt{q}_1}, \ldots, y_{\pt{q}_n}\}$
\item $\Omega; \G; \D_2 \vdash y_\pt{p}(m_\pt{p}@\omega_\pt{p}). y_{\pt{q}_1}(m_{\pt{q}_1}@\omega_{\pt{q}_1}).
\cdots. y_{\pt{q}_n}(m_{\pt{q}_n}@\omega_{\pt{q}_n}).\etherp{G_2}{\tilde{m}}{\tilde{\omega}} :: z:\one[\omega_m]$
\end{enumerate}
be two compositional typings.
Then

\vspace{-0.5cm}
{\small
\[
\Omega; \G; \D_1 \ltfuse \D_2 \vdash  \etherp{G_1}{\tilde{y}}{\tilde{\omega}} 
\prfuse 
y_\pt{p}(m_\pt{p}@\omega_\pt{p}). y_{\pt{q}_1}(m_{\pt{q}_1}@\omega_{\pt{q}_1}).
\cdots. y_{\pt{q}_n}(m_{\pt{q}_n}@\omega_{\pt{q}_n}).\etherp{G_2}{\tilde{m}}{\tilde{\omega}}
:: z:\one[\omega_m]
\]}
is a compositional typing, where 
the typing environment
$\Delta_1 \ltfuse \Delta_2$ is defined as follows:

\vspace{-0.3cm}
{\small\[
\Delta_1 \ltfuse \Delta_2(c) = 
\begin{cases} 
c:\lt{T}[\omega] & \text{if $c \in \dom{\Delta_j}$ and $c \not\in \dom{\Delta_i}$, with $i,j \in \{1,2\}, i \neq j$}
\\
c:\lt{T_1 \ltfuse T_2}[\omega] & \text{if $c:\lt{T_1}[\omega] \in \Delta_1$ and $c:\lt{T_2}[\omega] \in \Delta_2$}
\end{cases}
\]}
\end{restatable}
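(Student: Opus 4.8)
The plan is to prove the statement by induction on the given typing derivation of hypothesis (1), namely $\Omega;\G;\D_1 \vdash \etherp{G_1}{\tilde{y}}{\tilde{\omega}} :: z{:}\one[\omega_m]$, after first generalizing the claim so that $G_1$ ranges over \emph{arbitrary} global types. (The base case $G_1=\gend$, where $\etherp{G_1}{\tilde{y}}{\tilde{\omega}}=\zero$, $\D_1=\cdot$, and the claim degenerates to hypothesis (2) via $\zero \prfuse Q = Q$ and $\cdot \ltfuse \D_2 = \D_2$, is indispensable for the induction even though the top-level use of \propref{p:composemed} concerns a non-trivial $G_1$.) The conceptual heart of the proof is that the process fusion $\prfuse$ of \defref{d:prfuse} mirrors, prefix by prefix, the local-type fusion $\ltfuse$ of \defref{d:ltfuse} under the translation $\lt{\cdot}$ of \defref{d:loclogt}: whenever a shared name $c$ carries $\lt{T_1}$ in $\D_1$ and $\lt{T_2}$ in $\D_2$, its type in $\D_1 \ltfuse \D_2$ is precisely $\lt{T_1 \ltfuse T_2}$, and the defining clauses of $\ltfuse$ unfold this composite exactly along the prefixes through which $\prfuse$ threads $Q$.

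\textbf{Inductive cases.} These follow the shape of $\etherp{G_1}{\tilde{y}}{\tilde{\omega}}$, i.e., the last rule of its derivation. For $G_1 = \gto{p}{q}\{\ldots\}$ the derivation ends with $\lft{\oplus}$ on $c_\pt{p}$ (mediating $\pt{p}$'s selection), beneath which sit the value-forwarding steps ($\lft{\tensor}$, a selection on $c_\pt{q}$, and a $\rgt{\tensor}$ emitting a forwarder $\linkr{u}{v}$). The clauses $c\out{y}.(\linkr{u}{y}\mid P)\prfuse Q = c\out{y}.(\linkr{u}{y}\mid(P\prfuse Q))$ and $(\pi.P)\prfuse Q = \pi.(P\prfuse Q)$ push $Q$ into the continuation $\etherp{G_i}{\tilde{y}}{\tilde{\omega}}$ while leaving the forwarders untouched; applying the induction hypothesis to $\etherp{G_i}{\tilde{y}}{\tilde{\omega}}\prfuse Q$ and re-applying the same rules reconstructs the derivation, since $\lt{(\proj{G_1}{\pt{p}})\ltfuse T} = \myoplus{\lb{l}_i : \lt{U_i}\otimes\lt{(\proj{G_i}{\pt{p}})\ltfuse T}}{i\in I}$ by \defref{d:ltfuse}. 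The silent rules $\rgt{\downarrow}$ and $\lft{\downarrow}$ pass through unchanged because $\here{\alpha.T_1}\ltfuse T_2 = \here{\alpha.(T_1\ltfuse T_2)}$ and $\lt{\here{\alpha.(T_1\ltfuse T_2)}} = \here{\alpha.\lt{T_1\ltfuse T_2}}$. The nested case where $G_1$ is itself a migration needs no new ideas: because we induct on the given derivation rather than on the top-level shape of $G_1$, the migration prefixes (typed by $\lft{\exists}$, $\lft{@}$ and the domain outputs) are absorbed by the very same clause $(\pi.P)\prfuse Q = \pi.(P\prfuse Q)$, so associativity of $\prfuse$ is never invoked explicitly.

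\textbf{Main obstacle.} The routine-but-delicate difficulty is the bookkeeping of the linear context together with the domain annotations: at each $\rgt{\tensor}$/$\cut$-style split the context $\D_1\ltfuse\D_2$ must be partitioned so that names occurring in only one of $\D_1,\D_2$ (participants of $G_1$ absent from $G_2$, or conversely) are routed to the correct premise with their \emph{unfused} type, while genuinely shared names receive $\lt{T_1\ltfuse T_2}$; checking that $\ltfuse$ on environments commutes with these splits is where the care lies. The genuinely new point relative to~\cite{DBLP:conf/forte/CairesP16} is the domain layer: since $Q$ migrates each participant back from the sub-protocol domain to its original domain $\omega_\pt{p}$ through the $@_{\omega_\pt{p}}$-typed input prefixes, I must verify that appending this behaviour after $G_1$ preserves sequent well-formedness (Theorem~\ref{thrm:NI}), i.e., that the migrate-back domains remain accessible throughout the fused derivation. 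This is exactly the invariant that the interplay of $@$ with the ``here'' binder in the projection of the move construct (\defref{d:proj}) is designed to maintain, and it is the step I expect to require the most attention.
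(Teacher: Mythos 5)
Your overall strategy is the inductive rendering of the paper's own argument: the paper proves \propref{p:composemed} by a ``grafting'' (or stacking) construction, observing that the compositional typing of $\etherp{G_1}{\tilde{y}}{\tilde{\omega}}$ forces every terminal sequent of its derivation to have the form $\Omega;\G; y_{\pt{p}}{:}\one[\omega_\pt{p}],\ldots,y_{\pt{q}_n}{:}\one[\omega_{\pt{q}_n}] \vdash \zero :: z{:}\one[\omega_m]$ (one per occurrence of $\gend$ in $G_1$, possibly several because of branching), and then stacking the derivation of hypothesis~(2) on each such leaf, re-deriving the types of the shared names starting from $\D_2(y_\pt{p}),\ldots$ instead of $\one$; this is justified by exactly the two equations you invoke, $\zero \prfuse Q = Q$ and $\lend \ltfuse T = T$. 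Your base case contains a small slip: at the $\gend$ leaves the names of $\dom{\D_1}$ are \emph{present at type $\one$} (discharged by Rule~\name{$\lft\one$}), not absent, so the base case should read ``$y{:}\one$ fuses with $y{:}\lt{T}$ to $\lt{\lend \ltfuse T} = \lt{T}$'' rather than ``$\D_1 = \cdot$''; the conclusion is unaffected. Your identification of the linear-context routing at multiplicative splits (channels of participants in $G_2$ only must travel untouched to the continuation premise at every split) is indeed the bookkeeping the paper's sketch glosses over.

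There is, however, a genuine gap in your treatment of the case where $G_1$ is itself a migration, and your explicit claim that this case ``needs no new ideas'' and that ``associativity of $\prfuse$ is never invoked'' is where it breaks. If $G_1 = \gmoves{\pt{p}}{\tilde{\pt{q}}}{\omega}{G_a}{G_b}$, then by \defref{d:ether} the medium is a sequence of domain/migration prefixes whose body is \emph{itself a fusion}, $\etherp{G_a}{\cdot}{\cdot} \prfuse R$, where $R$ is the migrate-back-prefixed medium of $G_b$. The clause $(\pi.P)\prfuse Q = \pi.(P\prfuse Q)$ does absorb the outer prefixes, but you are then left having to type $(\etherp{G_a}{\cdot}{\cdot} \prfuse R)\prfuse Q$, and the subject of the corresponding sub-derivation is \emph{not} the medium of any global type. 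Since you generalized your induction hypothesis only over global types $G_1$ (and hypothesis~(2) fixes a specific migrate-back shape for the second component), the IH simply does not apply at this point. To repair the proof you must either (i) strengthen the statement so that the first component ranges over arbitrary well-typed processes in the domain of $\prfuse$ and the second over arbitrary well-typed processes, with the case analysis following the clauses of \defref{d:prfuse} rather than the shape of $G$, or (ii) prove associativity of $\prfuse$ together with the matching associativity of $\ltfuse$ on environments and restructure the fusion before applying the IH --- precisely the lemma you disclaim. The paper's leaf-grafting formulation sidesteps this entirely: it never recurses on the structure of the medium, using only the characterization of the $\zero$-typed leaves of whatever derivation is given, which holds uniformly for nested migrations as well.
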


\begin{proof}[Proof (Sketch)]
We must prove the existence of a typing derivation for the resulting fused process. 
We start by observing that the compositional typing  for $\etherp{G_1}{\tilde{y}}{\tilde{\omega}}$ ensures that its associated derivation will contain one or more occurrences 
of the sequent
\begin{equation}
\Omega; \G;  
y_{\pt{p}}:\one[\omega_\pt{p}], y_{\pt{q}_1}:\one[\omega_{\pt{q}_1}], \ldots, y_{\pt{q}_n}:\one[\omega_{\pt{q}_n}]
\vdash \zero :: z:\one[\omega_m]
\label{eq:fuse1}
\end{equation}
corresponding to one or more occurrences of $\gend$ in $G_1$ (possible because of labeled choices in ${G_1}$): indeed, by \defref{d:ether} we have
$\etherp{\gend}{\tilde{y}}{\tilde{\omega}} = \zero$
and 
by 
\defref{d:loclogt}
we have $\lt{\lend} = \one$.
Observe that the compositional typing 
for 
\begin{equation}
y_\pt{p}(m_\pt{p}@\omega_\pt{p}). y_{\pt{q}_1}(m_{\pt{q}_1}@\omega_{\pt{q}_1}).
\cdots. y_{\pt{q}_n}(m_{\pt{q}_n}@\omega_{\pt{q}_n}).\etherp{G_2}{\tilde{m}}{\tilde{\omega}}
\label{eq:fuse2}
\end{equation}
ensures that 
$\{y_{\pt{p}}, y_{\pt{q}_1}, \ldots, y_{\pt{q}_n}\} \subseteq \dom{\D_2}$, i.e., 
$\D_2$ contains judgements for at least the names in $\dom{\D_1}$---it may also contain other judgments, corresponding to participants that intervene in $G_2$ but not in the sub-protocol $G_1$. 
Given this, the compositional typing for the fused process
\[
\etherp{G_1}{\tilde{y}}{\tilde{\omega}} 
\prfuse 
y_\pt{p}(m_\pt{p}@\omega_\pt{p}). y_{\pt{q}_1}(m_{\pt{q}_1}@\omega_{\pt{q}_1}).
\cdots. y_{\pt{q}_n}(m_{\pt{q}_n}@\omega_{\pt{q}_n}).\etherp{G_2}{\tilde{m}}{\tilde{\omega}}
\]
is obtained by ``stacking up'' the typing derivation for \eqref{eq:fuse2} exactly on the occurrences of sequents of the form \eqref{eq:fuse1} in the typing derivation for  $\etherp{G_1}{\tilde{y}}{\tilde{\omega}}$. 
This is fully consistent with definitions of fusion for processes 
(\defref{d:prfuse})
and local types (\defref{d:ltfuse}): 
the former decrees that $\zero \prfuse P = P$ whereas the latter decrees that $\lend \ltfuse T = T$.
In the resulting ``stacked'' typing derivation, the types for $y_{\pt{p}}, y_{\pt{q}_1}, \ldots, y_{\pt{q}_n}$ that correspond to the behavior of $\etherp{G_1}{\tilde{y}}{\tilde{\omega}} $
can be derived exactly as in the derivation of the first assumption, now starting from the types  $\D_2(y_{\pt{p}}), \D_2(y_{\pt{q}_1}), \ldots, \D_2(y_{\pt{q}_n})$ 
rather than from $\one$. 
\end{proof}


\thmtypesmedp*

\begin{proof} 
By induction on the structure of $G$. 
There are three cases. 

\begin{itemize}
\item The base case, $G = \gend$, is immediate as there are no participants. 

\item The case $G = \gto{p_1}{p_2}\{\lb{l}_i\langle U_i\rangle.G^i\}_{i \in I}$ is exactly as in~\cite{DBLP:conf/forte/CairesP16}, but we report it here for the sake of completeness.  
By the well-formedness assumption (\defref{d:wfltypes}), local types $\proj{G}{\pt{p}_1}, \ldots, \proj{G}{\pt{p}_n}$
are all defined. 
Writing $\mathtt{p}$ and $\mathtt{q}$ instead of
$ \mathtt{p}_1$ and $ \mathtt{p}_2$,
by \defref{d:proj} we have:
\begin{eqnarray}
\proj{G}{\pt{p}} & = & \pt{p}!\{\lb{l}_i\langle U_i\rangle.\proj{G^i}{\pt{p}}\}_{i \in I} \label{neq:p00000} \\
\proj{G}{\pt{q}} & = & \pt{p}?\{\lb{l}_i\langle U_i\rangle.\proj{G^i}{\pt{q}}\}_{i \in I} \label{neq:p0000}\\
\proj{G}{\pt{p}_j} & = & \fuse_{i \in I} \, \proj{G^i}{\pt{p}_j} \quad\text{for  every $j \in  \{3,\ldots, n\}$}  \label{eq:p000}
\end{eqnarray}
We need to show that, for some $\Omega$ and $\G$, 
\begin{equation}
\Omega; \G; c_\pt{p}{:}\lt{\proj{G}{\pt{p}}}[\omega_\pt{p}], 
c_\pt{q}{:}\lt{\proj{G}{\pt{q}}}[\omega_\pt{q}], 
\D
\vdash \etherp{G}{\tilde{c}}{\tilde{\omega}}  ::z:\one[\omega_m]
\label{eq:p0}
\end{equation}
is a compositional typing, with 
$D = c_{\pt{p}_3}{:}\lt{\proj{G}{\pt{p}_3}}[\omega_{\pt{p}_3}], \cdots, 
c_{\pt{p}_n}{:}\lt{\proj{G}{\pt{p}_n}}[\omega_{\pt{p}_n}]$. 

Without loss of generality, we detail the case  $I = \{1,2\}$. 
By \defref{d:ether}, we have:
$$
\etherp{G}{\tilde{c}}{\tilde{\omega}} = c_\mathtt{p}\triangleright\! \begin{cases}
\lb{l}_1 : c_\mathtt{p}(u).\mysel{c_\mathtt{q}}{\lb{l}_1};\outp{c_\mathtt{q}}{v}.( \linkr{u}{v} \para \etherp{G^1}{\tilde{c}}{\tilde{\omega}}) \label{eq:p2}\\
\lb{l}_2 : c_\mathtt{p}(u).\mysel{c_\mathtt{q}}{\lb{l}_2};\outp{c_\mathtt{q}}{v}.( \linkr{u}{v} \para \etherp{G^2}{\tilde{c}}{\tilde{\omega}}) 
\end{cases}
$$
and 
by combining \eqref{neq:p00000} and \eqref{neq:p0000} with \defref{d:loclogt}  we have:
\begin{align*}
\lt{\proj{G}{\pt{p}}}  &= \myoplus{\lb{l}_1: \lt{U_1} \otimes \lt{\proj{G^1}{\pt{p}}} \,,\,\lb{l}_2:\lt{U_2} \otimes \lt{\proj{G^2}{\pt{p}}} }{i \in I} 
\\
\lt{\proj{G}{\pt{q}}}  &=
 \mywith{\lb{l}_1: \lt{U_1} \lolli \, \lt{\proj{G^1}{\pt{q}}}\,,\,\lb{l}_2: \lt{U_2} \lolli \, \lt{\proj{G^2}{\pt{q}}}}{i \in I} 
\end{align*}
Now, 
by assumption $G$ is WF; then, by  construction, 
both $G^1$ and $G^2$ are WF too.
Therefore, by using IH twice we may infer that both
\begin{align}
\Omega; \G; c_\pt{p}{:}\lt{\proj{G^1}{\pt{p}}}[\omega_\pt{p}],\, 
c_\pt{q}{:}\lt{\proj{G^1}{\pt{q}}}[\omega_\pt{q}],\, 
\D_1
& \vdash \!\etherp{G^1}{\tilde{c}}{\tilde{\omega}}::z:\one[\omega_m]  \label{neq:p2}
\\
\Omega; \G; c_\pt{p}{:}\lt{\proj{G^2}{\pt{p}}}[\omega_\pt{p}],\, 
c_\pt{q}{:}\lt{\proj{G^2}{\pt{q}}}[\omega_\pt{q}],\, 
\D_2
& \vdash \!\etherp{G^2}{\tilde{c}}{\tilde{\omega}}::z:\one[\omega_m]  \label{neq:p3}
\end{align}
are compositional typings, for any $\Omega$ and $\G$, with 
\begin{align*}
\D_1 & = c_{\pt{p}_3}{:}\lt{\proj{G^1}{\pt{p}_3}}[\omega_{\pt{p}_3}], \ldots, 
c_{\pt{p}_n}{:}\lt{\proj{G^1}{\pt{p}_n}}[\omega_{\pt{p}_n}]
\\
\D_2 & = c_{\pt{p}_3}{:}\lt{\proj{G^2}{\pt{p}_3}}[\omega_{\pt{p}_3}], \ldots, 
c_{\pt{p}_n}{:}\lt{\proj{G^2}{\pt{p}_n}}[\omega_{\pt{p}_n}]
\end{align*}

Now, to obtain a compositional typing for $\etherp{G}{\tilde{c}}{\tilde{\omega}}$, we must  consider that
$\D_1$ and $\D_2$ may not be identical.
This  is due to the merge-based well-formedness assumption, which admits non identical behaviors in branches $G^1$ and $G^2$
in the case of (local) branching types (cf. \defref{d:mymerg}).

We proceed by induction on $k$, defined as the size of $\D_1$ and $\D_2$ (note that $k = n -2$).
\begin{enumerate} 
\item (Case $k = 0$): Then $\D_1 = \D_2 = \emptyset$ and $\pt{p}$ and $\pt{q}$ are the only participants  in $G$.
Let us write $A_\pt{q}$ to stand for the session type 
$$\mywith{\lb{l}_1 :\lt{U_1} \lolli \lt{\proj{G^1}{\pt{q}}} \,,\, \lb{l}_2 :\lt{U_2} \lolli \lt{\proj{G^2}{\pt{q}}}}{}$$

Based on \eqref{neq:p2} and \eqref{neq:p3}, following the derivation in \figref{fig:ltypesmed},
we may derive typings 
{\small
\begin{align}
\Omega; \G; c_\mathtt{p} : \lt{U_1} \otimes \lt{\proj{G^1}{\pt{p}}}[\omega_\pt{p}], \, c_\mathtt{q}{:}A_\pt{q}[\omega_\pt{q}] \vdash 
      c_\mathtt{p}(u).\mysel{c_\mathtt{q}}{\lb{l}_1};\outp{c_\mathtt{q}}{v}.(\linkr{u}{v} \para  \etherp{G^1}{\tilde{c}}{\tilde{\omega}})  ::z:\one[\omega_m]
      \label{neq:p4}\\
      \Omega; \G; c_\mathtt{p} : \lt{U_2} \otimes \lt{\proj{G^2}{\pt{p}}}[\omega_\pt{p}], \, c_\mathtt{q}{:}A_\pt{q}[\omega_\pt{q}] \vdash 
      c_\mathtt{p}(u).\mysel{c_\mathtt{q}}{\lb{l}_2};\outp{c_\mathtt{q}}{v}.(\linkr{u}{v} \para  \etherp{G^2}{\tilde{c}}{\tilde{\omega}}) ::z:\one[\omega_m]
      \label{neq:p5}
\end{align}
}
The proof for this case is completed using \eqref{neq:p4} and \eqref{neq:p5} as premises for Rule~\name{$\lft\oplus$}:
\[
\Omega; 
 \G; c_\mathtt{p} : \myoplus{\lb{l}_1 : \lt{U_1} \otimes \lt{\proj{G^1}{\pt{p}}} \,,\, \lb{l}_2 : \lt{U_2} \otimes \lt{\proj{G^2}{\pt{p}}}}{}[\omega_\pt{p}], \, c_\mathtt{q}{:}A_\pt{q}[\omega_\pt{p}] \vdash 
      \etherp{G}{\tilde{c}}{\tilde{\omega}} ::z:\one[\omega_m]
\]

\begin{figure}[t!]
{\scriptsize
\[
{
\infer[\name{${\lft\otimes}$}]
{ \Omega; \G; c_\mathtt{p} : \lt{U_1} \otimes \lt{\proj{G^1}{\pt{p}}}[\omega_\pt{p}], \, c_\mathtt{q}{:}\mywith{\lb{l}_i :(\lt{U_i} \lolli \lt{\proj{G^i}{\pt{q}}})}{i \in I}[\omega_\pt{q}]  , \D_1 \vdash 
      c_\mathtt{p}(u).\mysel{c_\mathtt{q}}{\lb{l}_1};\outp{c_\mathtt{q}}{v}.(\linkr{u}{v} \para \etherp{G^1}{\tilde{c}}{\tilde{\omega}} ) ::z:\one[\omega_m]}{
      \infer=[\name{$\lft\with_2$}]{\Omega; \G; u: \lt{U_1}[\omega_\pt{p}],\,  c_\mathtt{p} :  \lt{\proj{G^1}{\pt{p}}}[\omega_\pt{p}],\, c_\mathtt{q}{:}\mywith{\lb{l}_i :(\lt{U_i} \lolli \lt{\proj{G^i}{\pt{q}}})}{i \in I}[\omega_\pt{q}] , \D_1 \vdash \mysel{c_\mathtt{q}}{\lb{l}_1};\outp{c_\mathtt{q}}{v}.( \linkr{u}{v} \para  \etherp{G^1}{\tilde{c}}{\tilde{\omega}} ) \mathstrut ::z:\one[\omega_m]}{
\infer[\name{$\lft\with_1$}]
      {  \Omega; \G; u: \lt{U_1}[\omega_\pt{p}],\,  c_\mathtt{p} :  \lt{\proj{G^1}{\pt{p}}}[\omega_\pt{p}],\, c_\mathtt{q}{:}\mywith{\lb{l}_1 :\lt{U_1} \lolli \lt{\proj{G^1}{\pt{q}}}}{\{1\}}[\omega_\pt{q}] , \D_1 \vdash \mysel{c_\mathtt{q}}{\lb{l}_1};\outp{c_\mathtt{q}}{v}.( \linkr{u}{v} \para  \etherp{G^1}{\tilde{c}}{\tilde{\omega}}  ) ::z:\one[\omega_m]}
      {\infer[\name{${\lft\lolli}$}]
             { \Omega; \G; u: \lt{U_1}[\omega_\pt{p}],\,  c_\mathtt{p} :  \lt{\proj{G^1}{\pt{p}}}[\omega_\pt{p}],\, c_\mathtt{q}{:}\lt{U_1} \lolli \lt{\proj{G^1}{\pt{q}}}[\omega_\pt{q}] , \D_1 \vdash \outp{c_\mathtt{q}}{v}.( \linkr{u}{v} \para  \etherp{G^1}{\tilde{c}}{\tilde{\omega}}  ) ::z:\one[\omega_m]}
             {\infer[\name{$\mathsf{id}$}]{\Omega; \G; u{:}\lt{U_1}[\omega_\pt{p}] \vdash \linkr{u}{v} :: v : \lt{U_1}[\omega_\pt{p}]}{} &
              \infer[\name{${\lft\one}$}]{ \Omega; \G; c_\pt{p}{:}\lt{\proj{G^1}{\pt{p}}}[\omega_\pt{p}],\, c_\pt{q}{:}\lt{\proj{G^1}{\pt{q}}}[\omega_\pt{q}],\,  \D_1 \vdash  \etherp{G^1}{\tilde{c}}{\tilde{\omega}} ::z:\one[\omega_m]}{
            }}}}}
              }
\]
}
\caption{Derivation for $c_\mathtt{p}(u).\mysel{c_\mathtt{q}}{\lb{l}_1};\outp{c_\mathtt{q}}{v}.(\linkr{u}{v} \para  \etherp{G^1}{\tilde{c}}{\tilde{\omega}} )$.\label{fig:ltypesmed}}
\end{figure}

\item (Case $k > 0$): Then there exists
a participant  $\pt{p}_k$, 
 types 
 $B_1 = \lt{\proj{G^1}{\pt{p}_k}}$, 
 $B_2  = \lt{\proj{G^2}{\pt{p}_k}}$ and environments
$\D'_1, \D'_2$ such that 
$\D_1 = c_{\pt{p}_k}{:}B_1[\omega_{\pt{p}_k}], \D'_1$ 
and $\D_2 = c_{\pt{p}_k}{:}B_2[\omega_{\pt{p}_k}], \D'_2$.

By induction hypothesis, 
there is a compositional typing starting from 
\begin{align*}
\Omega; \G; c_\pt{p}{:}\lt{\proj{G^1}{\pt{p}}}[\omega_\pt{p}],\, 
c_\pt{q}{:}\lt{\proj{G^1}{\pt{q}}}[\omega_\pt{q}],\, {\D'_1} & \vdash \!\etherp{G^1}{\tilde{c}}{\tilde{\omega}}  ::z:\one[\omega_m] \\
\Omega; \G; c_\pt{p}{:}\lt{\proj{G^2}{\pt{p}}}[\omega_\pt{p}],\, 
c_\pt{q}{:}\lt{\proj{G^2}{\pt{q}}}[\omega_\pt{q}],\, {\D'_2} & \vdash \!\etherp{G^2}{\tilde{c}}{\tilde{\omega}}  ::z:\one[\omega_m] 
\end{align*}
 resulting into 
$$
 \Omega; \G; c_\mathtt{p} : \myoplus{\lb{l}_1 : \lt{U_1} \otimes \lt{\proj{G^1}{\pt{p}}} \,,\, \lb{l}_2 : \lt{U_2} \otimes \lt{\proj{G^2}{\pt{p}}}}{}[\omega_\pt{p}], \, c_\mathtt{q}{:}A_\pt{q}[\omega_\pt{q}], \D'_1 \vdash 
      \etherp{G}{\tilde{c}}{\tilde{\omega}}  ::z:\one[\omega_m]
$$
since $\D'_1 = \D'_2$.

To extend the typing derivation to $\D_1$ and $\D_2$, 
we proceed by a case analysis on the shape of $B_1$ and $B_2$.
 We aim to show that either (a) $B_1$ and $B_2$ are already identical base or session  types or (b) that 
typing allows us to transform them into identical types. 
We rely  on the definition of $\fuse$ (\defref{d:mymerg}). 
There are two main sub-cases:
\begin{enumerate} 
\item Case $B_1 \neq \mywith{\lb{l}_h: A_h}{h \in H}$: 
Then, since \defref{d:mymerg} decrees $T \fuse T = T$
and the fact that merge-based well-definedness depends on $\fuse$, 
we may infer 
$B_2 = B_1$.
Hence, $\D_1 = \D_2$ and the
desired derivation is obtained as in the base case.


\item Case $B_1 = \mywith{\lb{l}_h: A_h}{h \in H}$: This is the interesting case: 
even if 
merge-based well-formedness of $G$
ensures that both 
$B_1$ and $B_2$
are    selection types, they may not be identical.
If $B_1$ and $B_2$ are identical then we proceed as in previous sub cases.
Otherwise, then 
due to $\fuse$
there are some finite number of labeled alternatives in $B_1$ but not in $B_2$ and/or viceversa.
Also, Def.~\ref{d:mymerg} ensures that common options (if any) are identical in both branches.
We may then use Rule \name{$\lft\with_2$} (cf. Appendix~\ref{app:trules}) to ``complement'' occurrences of types 
$B_1$ and $B_2$ in \eqref{neq:p2} and \eqref{neq:p3}
as appropriate to make them coincide and achieve identical typing. 
Rule \name{$\lft\with_2$} is silent; as labels are finite, this completing task is also finite,  and results into $\D_1 = \D_2$. 
\end{enumerate}
\end{enumerate}

\item
Finally, we have the case 
$G = \gmoves{\pt{p}}{\pt{q}_1, \ldots, \pt{q}_n}{w}{G_1}{G_2}$. 
Without loss of generality, we consider the global type
$G = \gmoves{\pt{p}}{\pt{q}}{w}{G_1}{G_2}$, i.e., the type in which the sub-protocol $G_1$ only involves two participants, namely $\pt{p}$ and  $\pt{q}$.
By the well-formedness assumption, local types $\proj{G}{\pt{p}}$, $\proj{G}{\pt{q}}$, $\proj{G}{\pt{p}_3}, \ldots, \proj{G}{\pt{p}_n}$ are all defined. 
In particular, by \defref{d:proj} we have:
\begin{eqnarray}
\proj{G}{\pt{p}} & = & \here{\beta.(\exists \alpha. @_\alpha\, \proj{G_1}{\pt{p}}) \ltfuse @_\beta\, \proj{G_2}{\pt{p}}} \label{eq:p00000} \\
\proj{G}{\pt{q}} & = & \here{\beta.(\forall \alpha. @_\alpha\, \proj{G_1}{\pt{q}}) \ltfuse @_\beta\, \proj{G_2}{\pt{q}}} \label{eq:p0000}
\end{eqnarray}
We need to show that, for some $\G$, 
\begin{equation}
\Omega; \G; c_\pt{p}{:}\lt{\proj{G}{\pt{p}}}[\omega_\pt{p}],~ 
c_\pt{q}{:}\lt{\proj{G}{\pt{q}}}[\omega_\pt{q}],~ 
\D\subst{\tilde{c}}{\tilde{m}}
\vdash \etherp{G}{\tilde{c}}{\tilde{\omega}}  ::z:\one[\omega_m]
\label{eq:p0}
\end{equation}
with $\D\subst{\tilde{c}}{\tilde{m}} = c_{\pt{p}_3}{:}\lt{\proj{G}{\pt{p}_3}}[\omega_3],~ \ldots, 
~c_{\pt{p}_n}{:}\lt{\proj{G}{\pt{p}_n}}[\omega_n]$ 
is a compositional typing.
By \defref{d:ether}, we have:
\begin{align}
\etherp{G}{\tilde{c}}{\tilde{\omega}} = &
c_\pt{p}(\alpha).c_{\pt{q}}\out{\alpha}.
c_\pt{p}(y_\pt{p}@\alpha).c_{\pt{q}}(y_{\pt{q}}@\alpha). 
\label{eq:medg}
\\
&
\quad \etherp{G_1}{\tilde{y}}{\tilde{\omega}\{\alpha/\omega_{\pt{p}},
   \alpha/\omega_{\pt{q}}\} } \prfuse y_\pt{p}(m_\pt{p}@\omega_\pt{p}). y_{\pt{q}}(m_{\pt{q}}@\omega_{\pt{q}}). \etherp{G_2}{\tilde{m}}{\tilde{\omega}}
   \nonumber
\end{align}
and 
by combining \eqref{eq:p00000} and \eqref{eq:p0000} with \defref{d:loclogt}  we have:
\begin{eqnarray*}
\lt{\proj{G}{\pt{p}}}  &=&  \here{\beta.\exists \alpha. @_\alpha \lt{\proj{G_1}{\pt{p}} \ltfuse @_\beta\, \proj{G_2}{\pt{p}}}}\\
\lt{\proj{G}{\pt{q}}}  &=&  \here{\beta.\forall \alpha. @_\alpha \lt{\proj{G_1}{\pt{q}} \ltfuse @_\beta\, \proj{G_2}{\pt{q}}}}\end{eqnarray*}
Now, 
by assumption $G$ is WF; then, by construction
both $G_1$ and $G_2$ are WF too.
Therefore, by using IH twice we may infer that both
\begin{align}
\Omega; \G; y_\pt{p}{:}\lt{\proj{G_1}{\pt{p}}}[\alpha],\, 
y_\pt{q}{:}\lt{\proj{G_1}{\pt{q}}}[\alpha] 
& \vdash 
\!\etherp{G_1}{\tilde{y}}{\tilde{\omega}\{\alpha/\omega_{\pt{p}},
   \alpha/\omega_{\pt{q}}\} } ::z:\one[\omega_m] \label{eq:p2}
   \\
\Omega; \G; m_\pt{p}{:}\lt{\proj{G_2}{\pt{p}}}[\omega_\pt{p}],\, 
m_\pt{q}{:}\lt{\proj{G_2}{\pt{q}}}[\omega_\pt{q}],\, 
\D 
&\vdash 
\etherp{G_2}{\tilde{m}}{\tilde{\omega}} ::z:\one[\omega_m] \label{eq:p3}
\end{align}
are compositional typings, for any $\Omega$, $\G$, 
with 
$\D$ as above.
Using Rule \name{$\lft @$} twice, assuming 
$\alpha \prec \omega_1$ and $\alpha \prec \omega_2$,
from \eqref{eq:p3} we can derive:
\begin{eqnarray}
\Omega; \G; 
y_\pt{p}{:}@_{\omega_1}\,\lt{\proj{G_2}{\pt{p}}}[\alpha],\, 
y_\pt{q}{:}@_{\omega_2}\,\lt{\proj{G_2}{\pt{q}}}[\alpha],\, 
{\D} \vdash 
y_\pt{p}(m_{\pt{p}}@\omega_{\pt{p}}).y_\pt{q}(m_{\pt{q}}@\omega_{\pt{p}}).\etherp{G_2}{\tilde{m}}{\tilde{\omega}}  \label{eq:p4}
\end{eqnarray}

Using \propref{p:composemed} on \eqref{eq:p2} and \eqref{eq:p4} we obtain:
\begin{eqnarray}
\Omega; \G; 
y_\pt{p}{:}\lt{\proj{G_1}{\pt{p}} \ltfuse @_{\omega_1}\, \proj{G_2}{\pt{p}}}[\alpha],\, 
y_\pt{q}{:}\lt{\proj{G_1}{\pt{q}} \ltfuse @_{\omega_2}\,\proj{G_2}{\pt{q}}}[\alpha],\, 
{\D}  \nonumber \\
\qquad \vdash  \etherp{G_1}{\tilde{y}}{\tilde{\omega}\{\alpha/\omega_{\pt{p}},
   \alpha/\omega_{\pt{q}}\} } \prfuse
y_\pt{p}(m_{\pt{p}}@\omega_{\pt{p}}).y_\pt{q}(m_{\pt{q}}@\omega_{\pt{p}}).\etherp{G_2}{\tilde{m}}{\tilde{\omega}}  \label{eq:p5}
\end{eqnarray}
We then have the following derivation, which completes the proof for this case:
{\small
\[
\infer[\name{$\lft\downarrow$}\times2]
{
\begin{array}{l}
\Omega; \G; 
c_\pt{p}{:}\here{\beta.(\exists \alpha. @_\alpha.\,\lt{\proj{G_1}{\pt{p}} \ltfuse @_{\beta}\, \proj{G_2}{\pt{p}}})}[\omega_1],\, 
c_\pt{q}{:}\here{\beta.(\forall \alpha. @_\alpha.\,\lt{\proj{G_1}{\pt{q}} \ltfuse @_{\beta}\,\proj{G_2}{\pt{q}}})}[\omega_2],\, 
{\D} 
\\
\vdash 
c_\pt{p}(\alpha).c_{\pt{q}}\out{\alpha}.
c_\pt{p}(y_\pt{p}@\alpha).c_{\pt{q}}(y_{\pt{q}}@\alpha). 
\etherp{G_1}{\tilde{y}}{\tilde{\omega}\{\alpha/\omega_{\pt{p}},
   \alpha/\omega_{\pt{q}}\} } \prfuse y_\pt{p}(m_\pt{p}@\omega_\pt{p}). y_{\pt{q}}(m_{\pt{q}}@\omega_{\pt{q}}). \etherp{G_2}{\tilde{m}}{\tilde{\omega}}
\end{array}
}
{
\infer[\name{$\lft \exists$}]
{
\begin{array}{l}
\Omega; \G; 
c_\pt{p}{:}\exists \alpha. @_\alpha.\,\lt{\proj{G_1}{\pt{p}} \ltfuse @_{\omega_1}\, \proj{G_2}{\pt{p}}}[\omega_1],\, 
c_\pt{q}{:}\forall \alpha. @_\alpha.\,\lt{\proj{G_1}{\pt{q}} \ltfuse @_{\omega_2}\,\proj{G_2}{\pt{q}}}[\omega_2],\, 
{\D} 
\\
\vdash 
c_\pt{p}(\alpha).c_{\pt{q}}\out{\alpha}.
c_\pt{p}(y_\pt{p}@\alpha).c_{\pt{q}}(y_{\pt{q}}@\alpha). 
\etherp{G_1}{\tilde{y}}{\tilde{\omega}\{\alpha/\omega_{\pt{p}},
   \alpha/\omega_{\pt{q}}\} } \prfuse y_\pt{p}(m_\pt{p}@\omega_\pt{p}). y_{\pt{q}}(m_{\pt{q}}@\omega_{\pt{q}}). \etherp{G_2}{\tilde{m}}{\tilde{\omega}}
\end{array}
}{
\infer[\name{$\lft \forall$}]{
\begin{array}{l}
\Omega, \omega_1 \prec \alpha; \G; 
c_\pt{p}{:}@_\alpha.\,\lt{\proj{G_1}{\pt{p}} \ltfuse @_{\omega_1}\, \proj{G_2}{\pt{p}}}[\omega_1],\, 
c_\pt{q}{:}\forall \alpha. @_\alpha.\,\lt{\proj{G_1}{\pt{q}} \ltfuse @_{\omega_2}\,\proj{G_2}{\pt{q}}}[\omega_2],\, 
{\D} 
\\
\vdash 
c_{\pt{q}}\out{\alpha}.
c_\pt{p}(y_\pt{p}@\alpha).c_{\pt{q}}(y_{\pt{q}}@\alpha). 
\etherp{G_1}{\tilde{y}}{\tilde{\omega}\{\alpha/\omega_{\pt{p}},
   \alpha/\omega_{\pt{q}}\} } \prfuse y_\pt{p}(m_\pt{p}@\omega_\pt{p}). y_{\pt{q}}(m_{\pt{q}}@\omega_{\pt{q}}). \etherp{G_2}{\tilde{m}}{\tilde{\omega}}
\end{array}
}{
\infer[\name{$\lft @$}]
{
\begin{array}{l}
\Omega, \omega_1 \prec \alpha, \omega_2 \prec \alpha; \G; 
c_\pt{p}{:}@_\alpha.\,\lt{\proj{G_1}{\pt{p}} \ltfuse @_{\omega_1}\, \proj{G_2}{\pt{p}}}[\omega_1],\, 
y_\pt{q}{:}@_\alpha.\,\lt{\proj{G_1}{\pt{q}} \ltfuse @_{\omega_2}\,\proj{G_2}{\pt{q}}}[\omega_2],\, 
{\D} 
\\
\vdash 
c_\pt{p}(y_\pt{p}@\alpha).c_{\pt{q}}(y_{\pt{q}}@\alpha). 
\etherp{G_1}{\tilde{y}}{\tilde{\omega}\{\alpha/\omega_{\pt{p}},
   \alpha/\omega_{\pt{q}}\} } \prfuse y_\pt{p}(m_\pt{p}@\omega_\pt{p}). y_{\pt{q}}(m_{\pt{q}}@\omega_{\pt{q}}). \etherp{G_2}{\tilde{m}}{\tilde{\omega}}
\end{array}
}{
\infer[\name{$\lft @$}]
{
\begin{array}{l}
\Omega, \omega_1 \prec \alpha,  \omega_2 \prec \alpha; \G; 
y_\pt{p}{:}\lt{\proj{G_1}{\pt{p}} \ltfuse @_{\omega_1}\, \proj{G_2}{\pt{p}}}[\alpha],\, 
c_\pt{q}{:}@_\alpha.\,\lt{\proj{G_1}{\pt{q}} \ltfuse @_{\omega_2}\,\proj{G_2}{\pt{q}}}[\omega_2],\, 
{\D} 
\\
\vdash 
c_{\pt{q}}(y_{\pt{q}}@\alpha). 
\etherp{G_1}{\tilde{y}}{\tilde{\omega}\{\alpha/\omega_{\pt{p}},
   \alpha/\omega_{\pt{q}}\} } \prfuse y_\pt{p}(m_\pt{p}@\omega_\pt{p}). y_{\pt{q}}(m_{\pt{q}}@\omega_{\pt{q}}). \etherp{G_2}{\tilde{m}}{\tilde{\omega}}
\end{array}
}{
\eqref{eq:p5}
}
}
}
}
}
\]
}
\end{itemize}
\end{proof}


The proof of the converse of \thmref{l:ltypesmedp} proceeds similarly; we must take into account that, given a global type $G$, the process structure of 
$\etherp{G}{\tilde{c}}{\tilde{\omega}}$
will induce types closely related to $\proj{G}{{\pt{p_1}}}, \ldots, \proj{G}{{\pt{p_1}}}$, up to occurrences of two type operators whose typing rules enforce a silent interpretation of processes, namely \name{$\lft\with_2$} and \name{$\lft\downarrow$}.

\thmmedltypes*
\begin{proof}[Proof (Sketch)]
By induction on the structure of $G$, following the lines of the proof of \thmref{l:ltypesmedp}, exploiting  $\subt$ (cf. \defref{def:pctypes}). There are three cases:
\begin{itemize} 
\item (Case $G = \gend$): Then $\etherp{G}{\tilde{c}}{\tilde{\omega}} = \zero$, $\partp{G} = \emptyset$, and the thesis follows vacuously.
Notice that from the assumption $\Omega; \G; \cdot \vdash \ether{G} $ and Rule \name{$\lft\one$}
 we may derive 
\[
\Omega; \G; c_{\pt{r}_j}{:}\one \vdash \ether{G}::z:\one[\omega_m]
\]
for any $\pt{r}_j$.
In that case, observe that \defref{d:proj} decrees that 
$\proj{\gend}{\mathtt{r}_j} = \lend$, for any $\mathtt{r}_j$.
There could be spurious occurrences of ``here'', as in, e.g,, $\here{\alpha_1. \cdots \here{\alpha_n.\one}}$.
The thesis holds using $\subt$, for $\lt{\lend} = \one$ (cf. \defref{d:loclogt}).

\item (Case $G = \gto{p_1}{p_2}\{\lb{l}_i\langle U_i\rangle.G^i\}_{i \in I}$) This case follows by typing inversion on the structure of $\etherp{G}{\tilde{c}}{\tilde{\omega}}$, using the derivation presented in the second case of the proof of \thmref{l:ltypesmedp}. The main aspect to consider are the possible uses of the silent Rule~\name{$\lft\with_2$}: to prove the correspondence between $\lt{\proj{G}{\pt{p_2}}}$ and the branching type $A_2$, we exploit the first axiom of $\subt$ (cf. \defref{def:pctypes}) to appropriately handle/prune silently added alternatives in the branching. We use $\subt$ to relate $A_2$ and $\lt{\proj{G}{\pt{p_2}}}$ up to occurrences of ``here'' as above.

\item (Case $G = \gmoves{\pt{p}}{\pt{q}_1, \ldots, \pt{q}_n}{w}{G_1}{G_2}$) This case also follows by typing inversion on the structure of $\etherp{G}{\tilde{c}}{\tilde{\omega}}$, using the derivation presented in the third case of the proof of \thmref{l:ltypesmedp}. The main aspect to consider for the intended correspondence is that Rule~\name{$\lft\downarrow$} silently induces type constructs of the form $\here{\alpha.A}$ within the types for $\etherp{G}{\tilde{c}}{\tilde{\omega}}$. To obtain the correspondence, we exploit the second and third axioms of $\subt$, which make explicit required ``here'' operators in the type and remove spurious ``heres'', respectively.
\end{itemize}
\end{proof}

\section{Extended Examples}\label{sec:examples}

\subsection{Negotiation Procedure}

This example is adapted from \cite{DBLP:conf/concur/DemangeonH12}, consisting
of a negotiation procedure {\it Nego} between two participants of a
three-party interaction.
%
The negotiation consists of an agreement
on a contract: one participant specifies a request, while the other
offers a corresponding contract. The first participant may either
accept the contract and end the protocol or make a counter-offer. For
the sake of conciseness we assume that the counter-offer is accepted:

\begin{tabbing}
  $\mathit{Nego}_{\m{p},\m{q}} \triangleq$ \=
  $\gto{p}{q}\{\lb{ask}\langle \mathit{terms}\rangle.$
  $\gto{q}{p}\{
  \lb{proposition}\langle\mathit{contract}_1\rangle.$\\
   \>$\gto{p}{q}\{$\=$\lb{accept}.\gend ,$
   $\lb{counter}\langle\mathit{contract}_1\rangle.$
   $\gto{q}{p}\{\lb{accept}.\gend\}\} \}\}$

\end{tabbing}

The main protocol consists of a client, an agent
and an instrument, each initially in their own domains.
The client first sends a request to the
agent for some instrument they wish to use. The agent connects to
the instrument which acknowledges when available. The agent then
enters the negotiation sub-protocol with the client (via protocol
$\mathit{Nego}$), by having both agent and the client migrate
to domain $d_n$. This movement models the
 trusted setting at which the agent and the
client coexist in order to successfully negotiate the instrument usage.
After the negotiation stage is complete, both the client and the
instrument migrate to a common domain $d_i$ to perform the rest of the
protocol, which for the sake of conciseness we model with the client
either aborting the interaction or sending a command to the instrument
and then receiving back the appropriate result:
\begin{tabbing}
  $\gto{\m{client}}{\m{agent}}
  \{$ \= $\lb{req}\langle\mathit{coord}\rangle.$
  $\gto{\m{agent}}{\m{instr}}\{ \lb{connect}.$
  $\gto{\m{instr}}{\m{agent}}\{\lb{available}.$
  $\gto{\m{agent}}{\m{client}}\{\lb{ack}.$\\
  \>$\gmoves{\m{agent}}{\m{client}}{d_n}{\mathit{Nego}_{\m{agent},\m{client}}}
  {}$\\
  \>$\gmoves{\m{client}}{\m{instr}}{d_i}{\gto{\m{client}}{\m{instr}}\{$ $\lb{abort}.\gend
  ,$\\
  \>$\lb{command}\langle\mathit{code}\rangle .
        \gto{\m{instr}}{\m{client}}\{\lb{result}\langle\mathit{data}\rangle.\gend$
  $\}\}\}\}\}\}}{\gend}$
\end{tabbing}

By leveraging our notion of medium, we can make explicit
the fact that the three participants are distributed agents, each
located at independent domains that can access the medium substrate
(i.e.~the domain of the medium).
Through the medium-orchestrated
interaction, the use of domain migration primitives enables us to
explicitly model the various domain movement steps that
the participants must follow to implement the 
protocol. This is in sharp contrast with more traditional approaches
to multiparty protocols \cite{DBLP:conf/popl/HondaYC08}, where such
domain specific notions are implicit.
The medium for the global type above is (we assume that
the three participants initially reside at worlds 
$w_\pt{client}, w_\pt{agent}, w_\pt{instr}$, respectively):

\vspace{-0.2cm}
{\small\begin{tabbing}
  $\mycasebig{c_\pt{client}}{$\=$\lb{req} :
    c_\pt{client}(u).\mysel{c_\pt{agent}}{\lb{req}};
    \outp{c_\pt{agent}}{v}.(\linkr{u}{v} \mid$\\
    \>$\mycasebig{c_\pt{agent}}{$\=$\lb{connect} :
      \mysel{c_\pt{instr}}{\lb{connect}};
      \mycasebig{c_\pt{instr}}{$\=$\lb{available} :
        \mysel{c_\pt{agent}}{\lb{available}};$\\
        \>$\mycasebig{c_\pt{agent}}{$\=$\lb{ack} :
          \mysel{c_\pt{client}}{\lb{ack}};
          c_\pt{agent}(d_n). c_\pt{client}\langle d_n \rangle.
          c_\pt{agent}(y_\pt{agent}@d_n). c_\pt{client}(y_\pt{client}@d_n).$\\
        \>$\etherp{\mathit{Nego}}{y_\pt{agent},y_\pt{client}}{\tilde{d_n}}
      \prfuse
      (y_\pt{agent}(z_\pt{agent}@w_\pt{agent}). y_\pt{client}(z_\pt{client}@w_\pt{client}).$\\
      \>\qquad$z_\pt{client}(d_i). c_\pt{instr}\langle d_i\rangle.
      z_\pt{client}(y_\pt{client}@d_i).c_\pt{instr}(y_\pt{instr}@d_i).$\\
      \>\qquad$\mycasebig{y_\pt{client}}{$\=$ \lb{abort} : \mysel{y_\pt{instr}}{\lb{abort}};\mathbf{0} , $\\
        \>\>$\lb{command} :
        y_\pt{client}(c).\mysel{y_\pt{instr}}{\lb{command}};
        \outp{y_\pt{instr}}{d}.(\linkr{c}{d} \mid $\\
        \>\>$\mycasebig{y_\pt{instr}}
        {\lb{result} :
          y_\pt{instr}(data). \mysel{y_\pt{client}}{\lb{result}};
          \outp{y_\pt{client}}{r}.(\linkr{data}{r} \mid $\\
          \>\>$y_\pt{client}(z_\pt{client}@w_\pt{client}).y_\pt{instr}(z_\pt{instr}@w_\pt{instr}
          ).\mathbf{0})}{} )
      }{}   )}{}}{}}{})}{}$
\end{tabbing}}
\vspace{-0.1cm}

The first two lines of the medium definition correspond to the
initial exchange between the client, the agent and the instrument.
The actions
after the emission of label $\lb{ack}$ to the client model the
migration protocol: the agent emits the domain identifier to the
medium, which then forwards it to the client and receives from both
participants the session handles $y_\pt{agent}$ and
$y_\pt{client}$, located at $d_n$. After the
migration takes place, the medium orchestrates the negotation
between the agent and the client using the new session
handles.

After the negotiation, the agent and the client migrate
back to their initial domains $w_\pt{agent}$ and $w_\pt{client}$,
respectively, and the interactions
between the client and the instrument take place: the client and the
instrument migrate to $d_i$, sending to the medium the 
session handles $y_\pt{client}$ and $y_\pt{instr}$, followed by the 
client emitting an $\lb{abort}$ or $\lb{command}$ message which is
forwarded to the instrument. In the latter case, the instrument forwards
the result to the client. Finally, both the client and the
instrument migrate back to their initial domains.

\subsection{Domain-aware Middleware}

A common design pattern in distributed computing is the notion of a
middleware agent which answers requests from clients,
sometimes offloading the requests to some server (e.g.~to better
manage local resource availability). The mediation
between the middleware and the server often involves some form of
privilege escalation or specialized authentication, which we can now
model via domain migration.
We first represent a simple offloading protocol between the middleware
$\pt{p}$ and the server $\pt{q}$:
\begin{tabbing}
  $\mathit{Offload}_{\pt{p},\pt{q}} \triangleq
  \gto{p}{q}\{\lb{req}\langle data
  \rangle. \gto{q}{p}\{\lb{reply}\langle ans \rangle.\gend\}  \}$
 \end{tabbing} 
 The global interaction is represented by the following global type:
 \begin{tabbing}
   $\gto{client}{mw}\{\lb{request}\langle req\rangle.$\\
   $\gto{mw}{client}\{$ \=$\lb{reply}\langle ans \rangle.\gto{mw}{server}\{\lb{done}. \gend\},$\\
   \>$
   \lb{wait}. \gto{mw}{server}\{\lb{init}. \gmoves{\pt{mw}}{\pt{server}
   }{w_\pt{priv}}{\mathit{Offload}_{\pt{mw},\pt{server}} }$\\
   \>${\gto{mw}{client}\{\lb{reply}\langle ans \rangle.\zero\}} \}  \}\}$
   \end{tabbing}
   The medium for this protocol is given by:
   \begin{tabbing}
     $\mycasebig{c_\pt{client}}{\lb{request} : c_\pt{client}(r).
     \mysel{c_\pt{mw}}{\lb{request}};
     \outp{c_\pt{mw}}{v}.(\linkr{r}{v} \mid$\\
     $\mycasebig{c_\pt{mw}}{$ \=$\lb{reply} :
       c_\pt{mw}(a).\mysel{c_\pt{client}}{\lb{reply}};
       \outp{c_\pt{client}}{n}.(\linkr{a}{n} \mid
       \mycasebig{c_\pt{mw}}{\lb{done} : \mysel{c_\pt{server}}{\lb{done}};\mathbf{0} }{} ),$\\
       \>$ \lb{wait} : \mycasebig{c_\pt{mw}}{$\=$\lb{init} :
         \mysel{c_\pt{server}}{\lb{init}};
         c_\pt{mw}(w_\pt{priv}).c_\pt{server}\langle
         w_\pt{priv}\rangle.$  \\
         \>\>$c_\pt{mw}(y_\pt{mw}@ w_\pt{priv}).
         c_\pt{server}(y_\pt{server}@ w_\pt{priv}).
         \etherp{\mathit{Offload}}{y_\pt{mw},y_\pt{server}}{\tilde{w_\pt{priv}}}
         \,\,\prfuse$\\
         \>\>$(y_\pt{mw}(z_\pt{mw}@w_\pt{mw}).
         y_\pt{server}(z_\pt{server}@w_\pt{server}).$\\
         \>\>$\mycasebig{z_\pt{mw}}{\lb{reply} : z_\pt{mw}(a).
         \mysel{c_\pt{client}}{\lb{reply}};\outp{c_\pt{client}}{n}.(\linkr{a}{n}
       \mid \mathbf{0})}{} )
       }{}  }{}  ) }{}$
 \end{tabbing}
 Notice how the client's domain remains fixed throughout the entire
 interaction, regardless of whether or not the middleware chooses to
 interact with the server to fulfil the client request.
 
\subsection{A Secure Communication Domain}

Our previous examples explore the use of domains in a general distributed
setting. Another interesting aspect of our domain-aware typing
discipline is that communication can only take place between accessible
domains. In scenarios where participants are each situated in distinct
domains, domain movement also governs the ability of participants to
interact. For instance, consider the following protocol excerpt:
\begin{tabbing}
  $\gto{client}{store}\{$\=$\lb{purchase}:
  \gmoves{\pt{store}}{\pt{bank}}{\pt{sec}}{\mathit{SecurePay}}$\\
  \>${\gto{store}{client}\{\lb{success}\langle receipt\rangle.\gend,
     \lb{fail}.\gend\}}  \}$
\end{tabbing}
The protocol above is part of the interaction between an online store
and its clients, where after some number of exchanges the client
decides to purchase the contents of their shopping cart. Upon
receiving the $\lb{purchase}$ message, the store is meant to enter a secure
domain $\pt{sec}$ so it can communicate with the bank role to exchange
potentially sensitive data. In a setting where the client, store and
bank exist at different domains and where only the store domain can
access the bank domain, our domain-aware typing discipline ensures that
no direct communication between the bank and the client domain is
possible, with the entirety of the data flows between the client and
the bank (via the medium) being captured by the type and medium
specification.


\section{Examples of Domain-aware Binary Sessions}\label{app:examples}
In this section we present two examples that further illustrate the
novel features of our domain-aware framework for session
communications.

\subsection{E-Commerce Example}\label{ss:ecom}

We revisit the web store example of \S~\ref{sec:hill}.
Recall the refined web store session type:
\begin{align*}
\mathsf{WStore}_{\mathtt{sec}} \triangleq  &\, \mathsf{addCart} \lolli \with \{ \mathtt{buy}: @_{\mathtt{sec}}\,\mathsf{Payment}, \mathtt{quit}: \one \}\\
\mathsf{Payment}   \triangleq & \,
\mathsf{CCNumber} \lolli \oplus\{\mathtt{ok}:(@_{\mathtt{bnk}}\mathsf{Receipt}) \tensor
\one , \mathtt{nok}:\one\}\\
\mathsf{Bank}  \triangleq & \, \mathsf{CCNumber} \lolli
\oplus\{\mathtt{ok} :\mathsf{Receipt} \tensor \one , \mathtt{nok}:\one\}
\end{align*}
The process that implements the bank interface is to be accessible from
the domain of the web store, moving to a secure domain
$\mathtt{bnk}$ before receiving and validating the payment
information.
Thus, the bank process typing can be specified with the following
judgment, recalling that the domain of the web store is $\mathtt{ws}$:
\[
\mathtt{ws} \prec \mathtt{bnk} ; \cdot ; \cdot \vdash B :: @_{\mathtt{bnk}}\,\mathsf{Bank}[\mathtt{ws}]
\]
The web store will then use the bank interface to fulfill its interface:
\[
\cdot ; \cdot ;b{:}@_{\mathtt{bnk}}\,\mathsf{Bank}[\mathtt{ws}] \vdash
\mathsf{Store} :: z{:}\mathsf{WStore}_{\mathtt{sec}}[\mathtt{ws}]
\]

The $@_w$ type constructor allows us to express a very precise form of coordinated
domain migration. For instance, typing ensures that in order
for the store to produce an output of the form
$@_{\mathtt{bnk}}\mathsf{Payment}$ it must first have interacted with the
bank domain: in order to produce an output of $@_{\mathtt{bnk}}$, it
must be the case that $\mathtt{ws}\prec \mathtt{bnk}$, which is only
known to the store process \emph{after} interacting with the bank domain.
Alternatively, consider the
$\mathsf{WStore}_{\mathtt{sec}}$ service interacting with a client process along channel $x$, each in their
own (accessible) domains, $\mathtt{c}$ and $\mathtt{ws}$, respectively. 
Our framework ensures that interactions between the client and the web store enjoy session fidelity, progress, and termination guarantees. 
Concerning domain-awareness, 
by assuming the client chooses to
buy his product selection, we reach a state that is typed
as follows:
\[
\mathtt{c}\prec\mathtt{ws} ; \cdot ;
x{:}@_{\mathtt{sec}}\mathsf{Payment}[\mathtt{ws}] \vdash Client :: z{:}@_{\mathtt{sec}}\one[\mathtt{c}]
\]
At this point, it is \emph{impossible} for a (typed) client to interact with the
behavior that is protected by the trusted domain $\mathtt{sec}$, since
it is not the case that $\mathtt{c}\prec^* \mathtt{sec}$.
This ensures,
e.g., that a client cannot exploit the payment platform of the web
store by accessing the trusted domain in unforeseen ways.
 Formally, no typing derivation of
$\mathtt{c}\prec\mathtt{ws} ; \cdot ; \mathsf{Payment}[\mathtt{sec}]
\vdash Client ::z{:}@_{\mathtt{sec}}\one[\mathtt{c}]$ exists (Theorem~\ref{thrm:NI}).
The client can only communicate in the secure domain \emph{after} the web
store service has migrated accordingly:
\[
\begin{array}{l}
\mathtt{c}\prec\mathtt{ws} , \mathtt{ws}\prec \mathtt{sec} ; \cdot ;
x'{:}\mathsf{Payment}[\mathtt{sec}] \vdash Client' :: z'{:}\one[\mathtt{sec}] \\
\mbox{where} \quad Client \triangleq x(x'@\mathtt{sec}). \overline{z}\langle z'@\mathtt{sec}\rangle. Client'
\end{array}
\]

It is inconvenient (and potentially error-prone) for the
payment domain to be hardwired in the type. We can solve this issue via
existential quantification as shown in the introduction.
\[
\mathsf{WStore}_\exists  \triangleq  \mathsf{addCart} \lolli \with \{ \mathtt{buy}: \exists \alpha.\,@_{\alpha}\,\mathsf{Payment},
\mathtt{quit}: \one\}
\]
As long as accessibility is irreflexive and antisymmetric, the
server-provided payment domain $w$ will not be able to interact with
the initial $\mathtt{public}$ domain of the interaction except as
specified in the $\mathsf{Payment}$ type.

Alternatively, the server can let the client choose a payment domain
by using universal quantification.  Compliant server code will only be
able to communicate in the client-provided payment domain since the
process must be parametric in $\alpha$.
\[
\mathsf{WStore}_\forall \triangleq \mathsf{addCart} \lolli \with \{ \mathtt{buy}: \forall \alpha.\,@_{\alpha}\,\mathsf{Payment},
\mathtt{quit}: \one\}
\]

\subsection{Spatial Distribution as in $\lambda$5}\label{sect:lambda5}

Murphy et al.~\cite{DBLP:conf/lics/VIICHP04} have proposed a
Curry-Howard interpretation of the intuitionistic modal logic
S5~\cite{DBLP:phd/ethos/Simpson94a} to model distributed computation with worlds
as explicit loci for computation.  Accessibility between worlds was
assumed to be reflexive, transitive, and symmetric because each host
on a network should be accessible from any other host.
Murphy~\cite{MurphyPhd} later generalized this to hybrid logic, an
idea also present in \cite{DBLP:conf/esop/JiaW04}, so that propositions
can explicitly refer to worlds.  Computation in this model was
decidedly \emph{sequential}, and a concurrent extension was proposed
as future work.  Moreover, the system presented some difficulties in
the presence of disjunction, requiring a so-called \emph{action at a
  distance} without an explicit communication visible in the
elimination rule for disjunction.

The present system not only generalizes $\lambda 5$ to permit
concurrency through session-typed linearity, but also solves the
problem of action at a distance because all communication is explicit
in the processes.  Due to this issue in the original formulation of
$\lambda 5$, we will not attempt here to give a full, computationally
adequate interpretation of $\lambda 5$ in our system (which would
generalize~\cite{DBLP:conf/fossacs/ToninhoCP12}), but instead explain the spatially
distributed computational interpretation of our type system directly.
\begin{enumerate}[$\bullet$]
\item $c : \Box A$.  Channel $c$ offering $A$ can be used in any
  domain.
\item $c : \Dia A$.  Channel $c$ is offering $A$ in some (hidden)
  domain.
\end{enumerate}
These are mapped into our hybridized linear logic (choosing a fresh $\alpha$ each time) with
\[
\begin{array}{lcl}
  \Box A & = & \forall \alpha. @_\alpha A \\
  \Dia A & = & \exists \alpha. @_\alpha A
\end{array}
\]
A process $P :: c {:} \Box A[w_1]$ will therefore receive, along $c$, a
world $w_2$ accessible from $w_1$ and then move to $w_2$, offering $A$
in domain $w_2$.  Conversely, a process $P :: c {:} \Dia A[w_1]$ will
send a world $w_2$ along $c$ and then move to $w_2$, offering $A$ in
domain $w_2$.  Processes using such channels will behave dually.

We can now understand the computational interpretation of some of the 
distinctive axioms of S5, keeping in mind that  
accessibility should be reflexive, transitive, and symmetric (we make no
distinction between direct accessibility or accessibility requiring
multiple hops).
\begin{enumerate}[$\bullet$]
\item $\mathsf{K}_\Dia :: z{:}\Box (A \lolli B) \lolli \Dia A \lolli \Dia B[w_0]$.  

Here, $A$
is offered along some $c_1$ at some unknown domain $\alpha$ accessible from $w_0$.
Move the offer of $\Box (A \lolli B)[w_0]$ to $\alpha$ as $c_2$ and send
it $c_1$ to obtain $c_2 : B[\alpha]$.  We abstract this as $\Dia B[w_0]$,
which is possible since $w_0 \prec \alpha$. Intuitively, this axiom
captures the fact that a session transformer $A\lolli B$ that can be
used in \emph{any} domain may be combined with a session $A$ offered in
\emph{some} domain to produce a session behavior $B$, itself
in some hidden domain.
\[
\begin{array}{lcl}
\mathsf{K}_\Diamond & \triangleq & z(x).z(y).y(\alpha).
y(c_1@\alpha).x\langle
\alpha \rangle.x(c_2 @\alpha). \\
&& \overline{c_2}\langle v\rangle.( [c_1 \leftrightarrow v] \mid 
z\langle \alpha \rangle.\overline{z}\langle c_3 @ \alpha\rangle.[c_2
\leftrightarrow c_3])
\end{array}
\]

\item $\mathsf{T} :: z{:}\Box A \lolli  A[w_0]$.  

Given a session $x$ that offers $\Box A$ at $w_0$,
we offer $A$ at $w_0$ by appealing to reflexivity, and thus $w_0 \prec
w_0$. This means that we can then receive from $x$ a fresh channel $c_1 @
w_0$, obtaining an ambient session of type $A$ at domain $w_0$ which
we then forward along $z$. The $\mathsf{T}$ axiom captures the fact
that a mobile session can in fact be accessed ``anywhere''.
\[
\begin{array}{lcl}
\mathsf{T} & \triangleq & z(x).x\langle w_0 \rangle.x( c_1 @
w_0).[c_1 \leftrightarrow z]
\end{array}
\]

\item $\mathsf{5} :: z{:}\Diamond A \lolli \Box \Diamond A[w_0]$

Given a session $x$ offering $\Diamond A$ at $w_0$, this means that
$x$ is offering the behavior $A$ at some accessible but unknown
domain $\beta$. We can use this session to provide somewhat of a ``link''
session, that allows any other domain $\alpha$ to also access the behavior $A$ at this
accessible, unknown, domain. We do this by first receiving $\alpha$
along $z$, identifying the domain accessible from $w_0$ which we wish
to link to $\beta$. We then send along $z$ the fresh session $c_1$
located at $\alpha$, along which we shall provide the connection.
We may then receive from $x$ the identity of $\beta$ and a session
$c_2$ located in this domain, send the identity along $c_1$ and a
fresh session $c_3$, located at $\beta$ (only possible due to
the accessibility relation being an equivalence) which we then forward
from $c_2$ as needed.
\[
\begin{array}{lcl}
\mathsf{5} & \triangleq & z(x).z(\alpha).z\langle c_1 @ \alpha
\rangle.x(\beta).x(c_2 @\beta).
c_1\langle \beta \rangle.c_1\langle
c_3 @ \beta \rangle.[c_2 \leftrightarrow c_3]
\end{array}
\]

\end{enumerate}
Other axioms can be given similarly straightforward interpretations
and process realizations.

\end{document}